\documentclass[a4paper,onecolumn,showpacs,notitlepage,floatfix,nofootinbib,superscriptaddress,prd]{revtex4-2}
\pdfoutput=1
\usepackage[utf8]{inputenc}

\usepackage[bookmarks, linktocpage, colorlinks = true, linkcolor = blue, urlcolor  = blue, citecolor = blue, anchorcolor = green, hyperindex = true, hyperfigures]{hyperref}

\usepackage[T1]{fontenc} 
\usepackage{bm} 

\usepackage{amsthm}

\usepackage{mathrsfs} 
\usepackage{subcaption}
\usepackage{mdframed}
\usepackage{enumitem}
\usepackage{xcolor}
\usepackage{subcaption}
\usepackage{mathtools}
\usepackage{tikz-cd}
\usepackage[normalem]{ulem}
\usepackage{amsmath}
\usepackage{amssymb}
\usepackage{amsfonts}

\usepackage{latexsym}

\usepackage{subcaption}
\usepackage{mdframed}
\usepackage{enumitem}
\usepackage{subcaption}
\usepackage{mathtools}
\usepackage{tikz-cd}
\usepackage{pict2e}
\newmdenv[skipabove=6mm]{kotak}   

\DeclareFontEncoding{LS1}{}{}
\DeclareFontSubstitution{LS1}{stix}{m}{n}
\DeclareSymbolFont{symbols2}{LS1}{stixfrak}{m}{n}
\DeclareMathSymbol{\lcangle}{\mathopen}{symbols2}{"E9}
\DeclareMathSymbol{\rcangle}{\mathclose}{symbols2}{"EA}

\makeatletter
\newcommand{\llcangle}[1][]{\savebox{\@brx}{\(\m@th{#1\lcangle}\)}%
  \mathopen{\copy\@brx\kern-0.5\wd\@brx\usebox{\@brx}}}
\newcommand{\rrcangle}[1][]{\savebox{\@brx}{\(\m@th{#1\rcangle}\)}%
  \mathclose{\copy\@brx\kern-0.5\wd\@brx\usebox{\@brx}}}
\makeatother

\newcommand{\subbe}{\begin{subequations}}
\newcommand{\subee}{\end{subequations}}
\newcommand{\be}{\begin{eqnarray}}
\newcommand{\ee}{\end{eqnarray}}
\newcommand{\nn}{\nonumber}
\newcommand{\nl}{\nonumber \\}
\newcommand{\pd}{\partial}

\newcommand{\Tr}{{\rm Tr}}

\newcommand{\End}{{\rm End}}
\newcommand{\rank}{{\rm rank}}

\newcommand{\appropto}{\mathrel{\vcenter{
  \offinterlineskip\halign{\hfil$##$\cr
    \propto\cr\noalign{\kern2pt}\sim\cr\noalign{\kern-2pt}}}}}

\newtheorem{proposition}{Proposition}
\newtheorem{lemma}{Lemma}

\newtheorem{remark}{Remark}

\newtheorem*{fact*}{Fact}

\makeatletter
\newsavebox{\@brx}
\newcommand{\llangle}[1][]{\savebox{\@brx}{\(\m@th{#1\langle}\)}%
  \mathopen{\copy\@brx\mkern2mu\kern-0.9\wd\@brx\usebox{\@brx}}}
\newcommand{\rrangle}[1][]{\savebox{\@brx}{\(\m@th{#1\rangle}\)}%
  \mathclose{\copy\@brx\mkern2mu\kern-0.9\wd\@brx\usebox{\@brx}}}
\makeatother

\allowdisplaybreaks

\begin{document}

\title{Algebraic Tomography of Non-Hermitian Floquet Systems \\
  from Observable Traces}

\author{Syo Kamata}
\email{skamata11phys@gmail.com}
\affiliation{Department of Physics, The University of Tokyo, 7-3-1 Hongo, Bunkyo-ku, Tokyo 113-0033, Japan}

\begin{abstract}
We formulate a framework of Floquet algebraic tomography for finite-dimensional non-Hermitian monodromy matrices from observable trace sequences $\zeta_n^{(O)}=\Tr(OM^n)$. Since these sequences are constrained by the characteristic polynomial of $M$, the inverse problem is a finite-dimensional algebraic reconstruction problem rather than a generic exponential fit. We organize the reconstruction through the observable resolvent, spectral determinant, and Dirichlet spectral data, separating the common spectral skeleton from observable-dependent dressing. Cayley--Hamilton and Hankel methods recover the similarity-invariant spectral data, while multi-observable and Liouville-space extensions connect the construction to realization theory and tomography reconstruction. 
We further clarify the limits of identifiability from restricted observable algebras: the data determine a visible representative, micromotion can enlarge the sampled visible operator space, and exact symmetries impose residual invisible sectors. Two examples, a driven transmon qutrit and a finite non-Hermitian Floquet SSH chain, demonstrate leakage-induced visibility expansion, observable-dependent phase response, EP-accessible branch geometry, and disorder/probe-dependent observable-dimension readouts.
\end{abstract}

\maketitle
\tableofcontents
\flushbottom

\section{Introduction} \label{sec:introduction}

Floquet systems provide a natural framework for describing time-periodic quantum dynamics through one-period evolution operators, or monodromy matrices~\cite{Shirley1965,Sambe1973,Howland1974}.
In recent years, non-Hermitian extensions of Floquet systems have attracted considerable attention because they combine characteristic features of driven dynamics with non-Hermitian spectral phenomena, including point gaps, exceptional points, non-Hermitian skin effects, anomalous Floquet topology, and strong boundary sensitivity~\cite{YaoWang2018,Gong2018,Borgnia2020,OkumaKawabataShiozakiSato2020,AshidaGongUeda2021,BergholtzBudichKunst2021,DingFangMa2022,ZhangGong2020,WuAn2020,Manna2022,ZhouZhang2023}.
These developments appear in condensed-matter, photonic, circuit, and quantum-control settings, where spectral, dynamical, and measurement viewpoints are often inseparable.
In this context, exact Wentzel--Kramers--Brillouin (WKB) analysis and resurgence theory provide rigorous tools to evaluate such monodromy matrices, capturing nonperturbative resonances in periodic systems~\cite{Sueishi_2021, Fujimori_2025} and analytically resolving exceptional points and $\mathcal{PT}$-symmetry breaking in non-Hermitian regimes~\cite{Bender:1998gh,Dorey:2009tc,Kamata2026_TripleWell}.

A basic difficulty is that the central Floquet object, the monodromy matrix $M$, is often not directly accessible.
What is available instead is a time-discrete response measured through a restricted observable channel.
Thus, the natural inverse problem is not simply to diagonalize a known matrix, but to ask what part of the underlying finite-dimensional Floquet dynamics can be reconstructed from observable data, and what part remains invisible to the chosen probe.
This question becomes especially delicate in non-Hermitian and finite-size systems, where near-degeneracies, exceptional-point-type branch structures, boundary sensitivity, and observable-dependent cancellations can all coexist.
The basic data used in this paper are the observable trace sequences
\be
\zeta_n^{(O)}:=\Tr(OM^n), \qquad n\in{\mathbb N}_0, \nonumber
\ee
where $M\in GL(N,{\mathbb C})$ is a finite-dimensional Floquet monodromy matrix and $O$ is a fixed observable.
At first sight, such a sequence resembles a finite exponential signal, and it is natural to compare the problem with classical spectral-estimation and realization methods, including Prony-type methods, the Matrix Pencil method, filter diagonalization, harmonic inversion, Ho--Kalman realization, ERA, ESPRIT, and subspace-identification approaches~\cite{Prony1795,HuaSarkar1990,WallNeuhauser1995FDM,MandelshtamTaylor1997HI,MainMandelshtamWunnerTaylor1997,HoKalman1966,Kailath1980,JuangPappa1985,Paulraj1985,VanOverscheeDeMoor1994,VanOverscheeDeMoor1996}.
The present work is not intended as a replacement for those methods.
Rather, our starting point is that a Floquet observable trace sequence is not a generic signal: it is generated by powers of a finite-dimensional monodromy matrix and is therefore constrained by exact Cayley--Hamilton identities.
This turns the inverse problem into a finite-dimensional algebraic reconstruction problem.

We call the resulting framework \emph{Floquet algebraic tomography}.
Its purpose is to reconstruct and organize the algebraic structure behind observable trace data.
The key separation is between the common spectral skeleton of the monodromy matrix and the observable-dependent dressing that determines how this skeleton becomes visible in a given channel.
This distinction is implemented through the observable resolvent, observable spectral determinant, and observable Dirichlet spectral data.
In particular, the observable resolvent separates a common denominator, determined by the characteristic polynomial of $M$, from an observable-dependent numerator.
This provides the analytic backbone for distinguishing what belongs to the monodromy matrix itself from what is produced by the choice of observable.

The present work is intended as a structural and inverse-problem contribution, rather than a proposal of new physical phenomena or measurement settings.
The OTS data $\zeta_n^{(O)} = \Tr (OM^n)$ are formally accessible through standard stroboscopic and process-tomography reconstruction already employed in non-Hermitian Floquet platforms~\cite{PoyatosCiracZoller1997,ChuangNielsen1997,MerkelGambettaSmolinPolettoCorcolesJohnsonRyanSteffen2013,BlumeKohoutGambleNielsenMizrahiSterkMaunz2013,Roushan_2017,Elben_2023}; what we develop here is the algebraic backbone that organizes such data into reconstruction-theoretic and visibility-theoretic statements.
The goal is therefore to identify, in a finite-dimensional setting where the question is sharply posable, what aspects of the underlying monodromy matrix are reachable from observable trace data and what aspects remain provably invisible, before turning to specific predictions in any concrete platform.

The reconstruction side of the framework uses Cayley--Hamilton recurrences and Hankel or block-Hankel data matrices.
From finite observable trace data, one can recover the effective recurrence order, characteristic coefficients, eigenvalues, and observable weights under suitable visibility conditions.
For sufficiently rich observable families, multi-observable realization produces a matrix $M_{\rm rel}$ similar to the original monodromy matrix, thereby recovering the similarity-invariant spectral skeleton.
We also formulate Liouville-space variants involving boundary or density operators, including fundamental-type sequences $\Tr(OM^n\Omega)$ and adjoint-type sequences $\Tr(OM^n\Omega M^{-n})$.
These variants connect the framework to MIMO realization, quantum process tomography, gate-set tomography, and Hamiltonian learning protocols ~\cite{PoyatosCiracZoller1997,ChuangNielsen1997,MerkelGambettaSmolinPolettoCorcolesJohnsonRyanSteffen2013,BlumeKohoutGambleNielsenMizrahiSterkMaunz2013,ZhangSarovar2014,ZhangSarovar2015,HolzapfelBaumgratzCramerPlenio2015,SchirmerOi2009,PastoriOlsacherKokailZoller2022,LiK2025}.
Another issue is identifiability from the observable algebra itself.
Even if finite trace data determine a reconstructed object, the full monodromy matrix need not be identifiable if the accessible observables generate only a proper subalgebra of $\End({\cal H})$.
This leads to a partial-realization problem naturally described by commutants and bicommutants of observable algebras~\cite{BratteliRobinson,KadisonRingrose1997,Takesaki2002,ZanardiLidarLloyd2004,ViolaKnillLaflamme2001}.
In this setting, the data determine a visible representative together with a trace-invisible remainder.
Micromotion provides a further mechanism for enlarging the observable algebra: shifting the Floquet reference time replaces $O$ by
\be
O(t) = U(t,0)^{-1} O U(t,0). \nn
\ee
While micromotion is familiar as a carrier of information beyond the stroboscopic Floquet operator in topological setups~\cite{RudnerLindnerBergLevin2013,Harper2020}, here it is used as an algebraic observability sweep.
Exact symmetries can nevertheless survive this extension and impose residual invisible sectors, in close relation to local observability and controllability in finite-dimensional quantum systems~\cite{AlbertiniDAlessandro2003,BurgarthBoseBruderGiovannetti2009}.

The framework is illustrated by two examples chosen to probe complementary regimes of the visibility/identifiability question.
The first, a driven transmon qutrit with coherent leakage and dissipation, motivated by weakly anharmonic superconducting qubits and leakage control~\cite{Koch2007Transmon,Motzoi2009DRAG}, realizes a setting in which the full operator algebra can be reached and the common spectral skeleton is reconstructed from a single observable channel to machine precision; this serves as a benchmark case showing that the algebraic backbone is sharp. It also demonstrates how observable-dependent dressing modifies the visible phase response, and how micromotion detects the expansion of the visible operator space when leakage makes the third level dynamically accessible.
The second, a finite non-Hermitian Floquet SSH chain, motivated by the classic Su--Schrieffer--Heeger physics~\cite{SuSchriefferHeeger1979} and its modern extensions to non-Hermitian regimes, non-reciprocal hopping, and non-Hermitian Floquet topology~\cite{YaoWang2018,Gong2018,ZhangGong2020,WuAn2020,ZhouZhang2023}, realizes the opposite regime, in which spatial locality, finite-size structure, and exact symmetries jointly restrict what is visible from natural probes; the same framework then quantifies the resulting invisible sectors in a structural manner rather than treating them as noise.
This example exhibits EP-accessible branch geometry, channel-selective winding-related readouts, and disorder/probe-dependent observable-dimension growth within the same algebraic language.

\subsection{Overview and our main contributions} \label{sec:overview_results}

We now summarize the internal structure of the framework.
The analysis proceeds in two intertwined directions.
The first is the realization-theoretic reconstruction of the finite-dimensional spectral skeleton from observable trace data.
The second is the spectral organization of how that skeleton becomes visible through a chosen observable channel.

The analytic components introduced in Sec.~\ref{sec:CH_GLN} are the observable trace sequence (OTS), observable resolvent (ORS), observable spectral determinant (OSD), and observable Dirichlet spectral data (ODSD).
They admit two complementary organizations:
\begin{widetext}
\be
\text{Formal spectral hierarchy}
&:&
\boxed{\text{OSD } h^{(O)}(z)} \longrightarrow \boxed{\text{ORS } {\cal Z}^{(O)}(z)} \longrightarrow \boxed{\text{ODSD } {\cal F}^{(O)}(p)} \nn \\[0.3em] \text{Inverse-problem order}
&:& \boxed{\text{OTS } \{\zeta_n^{(O)}\}_n} \longrightarrow \boxed{\text{ORS } {\cal Z}^{(O)}(z)} \longrightarrow
\begin{dcases}
\boxed{\text{OSD } h^{(O)}(z)}\\
\boxed{\text{ODSD } {\cal F}^{(O)}(p)}
\end{dcases} \nn
\ee
\end{widetext}
In this paper we mainly use the second organization: the OTS is the input, the ORS is reconstructed first, and the OSD/ODSD are then used as complementary descriptive components.
The central separation appears at the ORS level:
\be
{\cal Z}^{(O)}(z)
=
\frac{{\cal N}^{(O)}(z)}{\Delta(z)},
\qquad
\Delta(z)=\det({\mathbb I}_N-zM).
\nn
\ee
The denominator $\Delta(z)$ is common to all observable channels and carries the characteristic spectral skeleton of $M$.
The numerator ${\cal N}^{(O)}(z)$ carries the observable-dependent dressing.
Thus, the framework separates the observable-independent algebraic skeleton from the channel-dependent visible image.

The reconstruction side, developed in Sec.~\ref{sec:GLNstructure}, uses Cayley--Hamilton recurrences and Hankel or block-Hankel data matrices.
From finite OTS samples, one recovers the effective recurrence order, characteristic coefficients, eigenvalues, and observable weights under suitable visibility conditions.
The same characteristic skeleton also constrains nonconsecutive sampling patterns through the arbitrary-stepsize extension.
With sufficiently rich observable data, the multi-observable Hankel construction yields a realized monodromy matrix
\be
M_{\rm rel}\sim M, \nn
\ee
and hence recovers the similarity-invariant spectral skeleton.
Liouville-space extensions, including $\Tr (OM^n\Omega)$ and $\Tr (OM^n\Omega M^{-n})$, further connect the framework to MIMO realization, process tomography, and ratio-spectrum reconstruction.
When the observable family is sufficiently rich, this reconstruction procedure may be schematically summarized as
\begin{widetext}
\be
\boxed{\substack{\text{\normalsize Finite OTS samples} \\ \{\zeta_n^{(O)}\}_{n}}}
\ \xrightarrow{\ \,
  \substack{\text{CH recurrence} \\ \text{Hankel matrix}}
\ \,}
\begin{dcases}
  \boxed{\substack{\text{\normalsize Common spectral skeleton}\\ \{e_a\}_{a=1}^{N},\ \{\lambda_j\}_{j=1}^{N}, \ \Delta(z)}}
  \ \xrightarrow{\ \, \substack{\text{multi-observable} \\ \text{block-Hankel}} \ \, }\
\boxed{\boxed{M_{\rm rel}\sim M}}
\\
  \boxed{\substack{\text{\normalsize Observable-dependent dressing}\\ \{c_j^{(O)}\}_{j=1}^{N}, \ {\cal N}^{(O)}(z)}}
\end{dcases}  \nn
\ee
\end{widetext}
The Hankel matrix is the algebraic engine of this construction: from finite OTS samples it separates the common spectral skeleton from the observable-dependent dressing, and through its multi-observable extension delivers a realized monodromy matrix $M_{\rm rel}\sim M$.

The second structural aspect, formulated in Sec.~\ref{sec:prel_micro_symm}, concerns identifiability from the observable algebra.
If the accessible observables generate only a proper algebra ${\frak O}\subsetneq \End({\cal H})$, then the full monodromy matrix is not identifiable in general.
Instead, Prop.~\ref{prop:Alg_rel_M} shows that the trace data determine a canonical visible representative
\be
M_{\rm vis} = {\mathbb E}_{{\frak O}^{\prime\prime}}(M) \in {\frak O}^{\prime\prime}, \nn
\ee
together with a trace-invisible remainder.
Micromotion enlarges the observable family through
\be
O(t)=U(t,0)^{-1} O U(t,0), \nn
\ee
while Prop.~\ref{prop:symmetry_indicator} shows that exact commuting symmetries can still impose non-removable algebraic deficiencies.
The calibrated reconstruction of Sec.~\ref{sec:reconstruct_Mex} explains when the similarity ambiguity can be fixed in a known physical basis.

It is useful to keep four objects of recoverability in mind:
\begin{itemize}
\item[i)] \textbf{Exact object:} the full monodromy matrix and its exact spectral geometry.
\item[ii)] \textbf{Reconstructed object:} the algebraic object recovered from finite trace data, such as characteristic coefficients, poles, residues, or a realized monodromy matrix.
\item[iii)] \textbf{Visible object:} the part of the reconstructed structure that remains visible through a chosen observable channel or decoding functional.
\item[iv)] \textbf{Shadow object:} a reduced remnant, such as a branch-exchange pattern, that can remain accessible even when the full object is not reconstructed.
\end{itemize}
This distinction is essential in the examples, where common spectral structure, observable dressing, and finite-sampling visibility need not coincide.

The two examples illustrate this logic in complementary ways.
In the driven transmon qutrit of Sec.~\ref{sec:example_transmon}, a single observable reconstructs the common $GL(3,{\mathbb C})$ spectral skeleton.
The ODSD then shows how the visible phase response is filtered by observable dressing, and the sampled observable dimension
\be
D_{\rm obs}(Q):=\rank {\cal K}_0(Q)
\nonumber
\ee
detects whether coherent leakage expands the visible operator space from the closed qubit sector to the full qutrit space.
In the finite NHFSSH chain of Sec.~\ref{sec:example_ssh}, the EP-accessible extension supplies a branch-sensitive spectral skeleton; the OSD and twist-cycle readouts show how this skeleton becomes visible or suppressed in different observable channels.
The same sampled dimension $D_{\rm obs}(Q)$ is then used to study how symmetry, disorder, finite-size degeneracies, temporal sampling, and observable choice jointly control visibility.

Our main contributions can be grouped into three parts.
\begin{itemize}
\item \textbf{Framework-level contributions:}
we formulate Floquet algebraic tomography from observable trace sequences; separate common spectral skeletons from observable dressing through the ORS/OSD/ODSD organization; and formulate partial realization and micromotion-enlarged observability using the observable algebras ${\frak O}$ and ${\frak O}_{\rm ext}$.

\item \textbf{Technical reconstruction part:}
we use Cayley--Hamilton recurrences and Hankel-type data matrices to reconstruct characteristic data from finite OTS data; extend the recurrence to arbitrary $(N+1)$-point stencils; adapt multi-observable realization to Floquet OTSs; and formulate Liouville-space variants for boundary/density-operator and adjoint-type data.

\item \textbf{Demonstrations:}
we show leakage-induced growth of sampled visibility and observable-dependent phase response in the driven transmon qutrit, and demonstrate EP-accessible branch geometry, channel-selective winding-related readouts, and disorder/probe-dependent observable-dimension growth in the finite non-Hermitian Floquet SSH chain.
\end{itemize}

\paragraph*{Relation to existing approaches.}
The present framework utilizes subspace realization and matrix-pencil methods~\cite{Prony1795,HuaSarkar1990,WallNeuhauser1995FDM,MandelshtamTaylor1997HI,MainMandelshtamWunnerTaylor1997,HoKalman1966,Kailath1980,JuangPappa1985,Paulraj1985,VanOverscheeDeMoor1994,VanOverscheeDeMoor1996}.
While recent Hamiltonian/Liouvillian learning~\cite{ZhangSarovar2014,ZhangSarovar2015,HolzapfelBaumgratzCramerPlenio2015,SchirmerOi2009,PastoriOlsacherKokailZoller2022,LiK2025} and Floquet spectroscopy~\cite{Saurabh2025} share this inverse-problem perspective on observable traces, our objective is fundamentally distinct. 
Unlike approaches targeting generators $H$ or $\mathcal{L}$ under unitary/CPTP assumptions (e.g., Ref.~\cite{ZhangSarovar2014}) or exploiting rank deficiency merely for minimal realization, we target the non-Hermitian monodromy matrix $M$ itself via exact Cayley--Hamilton constraints. 
Building on operator algebras~\cite{BratteliRobinson,KadisonRingrose1997,Takesaki2002,ZanardiLidarLloyd2004,ViolaKnillLaflamme2001,AlbertiniDAlessandro2003,BurgarthBoseBruderGiovannetti2009}, the bicommutant $\mathfrak{D}^{\prime\prime}$ formulates this rank deficiency as a strict physical consequence of exact symmetries, formalizing identifiability through the decomposition $M_{\rm vis}={\mathbb E}_{{\frak O}^{\prime\prime}}(M)$. 
This repurposes algebraic realization to determine quantum observability limits and explicitly track how non-Hermitian winding structures, such as Riemann-sheet exchanges around exceptional points, are structurally extracted or suppressed. 
Broadly, this framework establishes a rigorous algebraic bridge between quantum tomography procedures~\cite{PoyatosCiracZoller1997,ChuangNielsen1997,MerkelGambettaSmolinPolettoCorcolesJohnsonRyanSteffen2013,BlumeKohoutGambleNielsenMizrahiSterkMaunz2013,Roushan_2017,Elben_2023} and the physics of non-Hermitian (Floquet) theory~\cite{AshidaGongUeda2021,BergholtzBudichKunst2021,YaoWang2018,Gong2018,Borgnia2020,OkumaKawabataShiozakiSato2020,DingFangMa2022,ZhangGong2020,WuAn2020,Manna2022,ZhouZhang2023,RudnerLindnerBergLevin2013,Harper2020}.

\section{Setup and analytic components} \label{sec:CH_GLN}

In this section, we introduce the analytic framework of the present work.
After describing the $GL(N,{\mathbb C})$ Floquet setup and the monodromy matrix in Sec.~\ref{sec:GL_floquet_M}, we define the observable trace sequence as the primary input and construct the observable spectral determinant, resolvent, and Dirichlet spectral data in Sec.~\ref{sec:CH_GLN_const}.

We note that the inversion of time-domain trace signals to systematically extract underlying eigenvalues has a rich history in molecular quantum dynamics, particularly under the framework of the filter diagonalization and harmonic inversion methods~\cite{WallNeuhauser1995FDM,MandelshtamTaylor1997HI}. 
In those approaches, linear recurrence relations are exploited to separate similarity-invariant spectral data from observable-dependent weights, providing a mathematical precedent for the exact algebraic constructions detailed below.

\subsection{Setup: $GL(N,{\mathbb C})$ Floquet monodromy} \label{sec:GL_floquet_M}
Throughout this paper, we consider an $N$-level Floquet system defined by
\be
i \pd_t \Psi(t) = \widehat{H}(t) \Psi(t), \qquad
\Psi(t) =
\begin{pmatrix}
\psi_1(t) \\
\psi_2(t) \\
\vdots \\
\psi_N(t)
\end{pmatrix},
\label{eq:FLQ_H}
\ee
where $\Psi(t)$ is the $N$-component state vector.
We expand the time-dependent Hamiltonian $\widehat{H}(t)$ in the basis consisting of ${\mathbb I}_N$ and traceless generators $\sigma_{j \in \{1,\cdots,N^2-1\}}$ of ${\frak su}(N)$ as
\be
\widehat{H}(t) = f_0(t)\,{\mathbb I}_N + \sum_{j=1}^{N^2-1} f_j(t)\sigma_j,
\qquad
\widehat{H}(t)=\widehat{H}(t+T),
\qquad
f_j(t)=f_j(t+T)\in{\mathbb C},
\label{eq:Hamiltonian}
\ee
with the period $T=2\pi/\omega \in {\mathbb R}_{>0}$.
Notice that the Hamiltonian is Hermitian if $f_j(t) \in {\mathbb R}$ for all $j \in \{ 0,\cdots,N^2-1 \}$.
In order to capture the Floquet physics, we define the time evolution operator $U(t, t_0) \in GL(N, \mathbb{C})$ with an initial time $t=t_0$ as
\be
U(t, t_0) := {\cal T} \exp \left[ -i \int_{t_0}^{t} dt^\prime \, \widehat{H}(t^\prime) \right], \label{eq:time_evo_U}
\ee
where ${\cal T}$ denotes the time-ordered product.
The dynamics of the system is governed by the \textit{monodromy matrix} $M \in GL(N, \mathbb{C})$, which is defined as the evolution operator over one full period $T$ as
\be
M := U(T, 0), \label{eq:M_def}
\ee 
and we assume that $\det M \ne 0$. 
The effective Hamiltonian can be obtained from Eq.\eqref{eq:M_def} as
\be
\widehat{H}_{\rm eff} = \frac{i}{T}\log M,
\ee
and, correspondingly, the state after $n$ periods is given by $\Psi(nT) = M^n \Psi(0)$.

Then, we construct the eigenvectors of the monodromy matrix, $M \in GL(N,{\mathbb C})$.
When $M$ is diagonalizable, let $\{\lambda_j\}_{j=1}^N$ be its eigenvalues, which we write as $\lambda_j=e^{i\varepsilon_j}$.
We define the left and right eigenvectors of $M$ as
\be
M | u_j \rangle = \lambda_j | u_j \rangle, \qquad \langle \widetilde{u}_j | M = \langle \widetilde{u}_j | \lambda_j
. \qquad j \in \{ 1, \cdots, N \} \label{eq:lr_states}
\ee
The eigenvectors satisfy
\be
\langle \widetilde{u}_{j_1} | u_{j_2} \rangle = \delta_{j_1,j_2}, \label{eq:states}
\ee
with the Kronecker delta, $\delta_{j_1,j_2}$, and one can define the projectors $ {P}_j$ with $j \in \{1,\cdots,N \}$ as
\be
   {P}_j := | u_j \rangle \langle \widetilde{u}_j|,  \qquad P_j P_k = \delta_{j,k} P_j, \qquad \sum_{j = 1}^N {P}_j = {\mathbb I}_N. \label{eq:proj}
\ee
Thus, $M$ can be expressed using the projectors as
\be
\qquad M = \sum_{j=1}^N \lambda_j {P}_j. \label{eq:MPj}
\ee
Notice that for the $U(N)$ cases, $| u_j \rangle^\dagger = \langle \widetilde{u}_j|$, but in general this condition is not satisfied for $GL(N,{\mathbb C})$ (and also for $SL(N,{\mathbb C})$).

If $M$ is non-diagonalizable, as in the vicinity of exceptional points (EPs)~\cite{Kato1995,DingFangMa2022}, then, one must instead employ the Jordan normal form.
At an EP, two or more eigenvalues and their corresponding eigenvectors degenerate, and thus, the geometric multiplicity is reduced.
In such cases, the biorthogonality condition in Eq.\eqref{eq:states} and the resolution of identity in Eq.\eqref{eq:proj} are no longer valid in the ordinary eigenbasis, because the set of eigenvectors fails to span the full $N$-dimensional space.
To describe the system at an EP, one has to use generalized eigenvectors and Jordan blocks.
Schematically, a Jordan block takes the form
\be
J_{\lambda_j} =
\begin{pmatrix}
\lambda_j & 1 &  \cdots & 0\\
0 & \lambda_j &  \ddots & \vdots\\
\vdots & \ddots & \ddots &  1\\
0 & \cdots & 0 & \lambda_j  
\end{pmatrix},
\ee
where the off-diagonal $1$'s represent the coupling between generalized eigenvectors.
The generalized eigenvectors are defined by
\be
 M | u_{j,1} \rangle = \lambda_j  | u_{j,1} \rangle, &\qquad& M | u_{j,m} \rangle = \lambda_j  | u_{j,m} \rangle + | u_{j,m-1} \rangle \ \ \text{for \ \ $m \in \{ 2,\cdots, d_j\}$},  \\
 \langle  \widetilde{u}_{j,d_j} | M = \langle  \widetilde{u}_{j,d_j} | \lambda_j, &\qquad& \langle \widetilde{u}_{j,m}
 | M = \langle \widetilde{u}_{j,m} | \lambda_j + \langle \widetilde{u}_{j,m+1} | \ \ \text{for \ \ $m \in \{ 1,\cdots, d_j-1\}$},
\ee
with the orthogonality condition as
\be
\langle \widetilde{u}_{j_1,m_1} | u_{j_2,m_2} \rangle = \delta_{j_1,j_2} \delta_{m_1,m_2}, \quad j_{1,2} \in \{ 1,\cdots, K \}, \ \ m_{1,2} \in \{ 1,\cdots,d_j \}
\ee

where $d_j\in{\mathbb N}$ is the algebraic multiplicity associated with $\lambda_j$, satisfying $\sum_{j=1}^{K}d_j=N$.
Here, $\{\lambda_j\}_{j=1}^{K}$ is the set of distinct eigenvalues with $\lambda_{j_1} \ne \lambda_{j_2}$ for any $j_1 \ne j_2$.
By introducing the projection operators onto the generalized eigenspace and the nilpotent matrices associated with $\lambda_j$, $P_j$ and $N_j$, these operators identically satisfy the following algebraic relations as
\be
&& P_j := \sum_{m=1}^{d_j} | u_{j,m} \rangle \langle \widetilde{u}_{j,m}|, \qquad  N_j := \sum_{m=1}^{d_j-1} | u_{j,m} \rangle \langle \widetilde{u}_{j,m+1} | \nl
&& \sum_{j=1}^K P_j = {\mathbb I}_N
, \qquad P_j P_k = \delta_{j,k} P_j,
\label{eq:Jordan_prop_N} \\
&& N_j^{d_j} = 0, \qquad P_j N_k = N_k P_j = \delta_{j,k} N_j, \qquad N_j N_k = 0 \ \ \text{for} \ \  j \neq k, \nn
\ee
and thus, the matrix $M \in GL(N, \mathbb{C})$ can be written by the spectral decomposition (or Jordan decomposition) in terms of the set of distinct eigenvalues $\{\lambda_j\}_{j=1}^K$ as
\be
M = \sum_{j=1}^K (\lambda_j P_j + N_j). \label{eq:M_EP_P_N}
\ee

It is straightforward to pass from $GL(N,{\mathbb C})$ to $SL(N,{\mathbb C})$, and this reduction might be sometimes convenient in numerical computations.
The normalized $SL(N,{\mathbb C})$ matrix is defined by
\begin{equation}
\bar{M} := (\det M)^{-1/N} M \in SL(N,\mathbb{C}), \label{eq:barM_def_SLN}
\qquad \det \bar{M} = 1.
\end{equation}
The spectrum and the quasi-energies, $\{\bar{\lambda}_j \}_{j=1}^N$ and $\{ \bar{\varepsilon}_j \}_{j=1}^N$, are constrained as $\prod_{j=1}^N \bar{\lambda}_j=1$ and $\sum_{j=1}^N \bar{\varepsilon}_j \equiv 0 \ (\mathrm{mod}\ 2\pi)$, respectively.
The original $GL(N,{\mathbb C})$ spectral data can be recovered from the $SL(N,{\mathbb C})$ reduction by rescaling
\be
\lambda_j = g_0 \cdot \bar{\lambda}_j \quad \text{for all} \quad j \in \{ 1, \cdots N\},
\ee
where $g_0$ is the central factor of the $GL(N,{\mathbb C})$ monodromy defined as
\be
g_0 :=  \exp \left[ -  \frac{i}{N} \Tr \int_{0}^T dt \,  \widehat{H} (t) \right] = \exp \left[ -i \int_{0}^T dt \, f_0(t) \right] = (\det M)^{1/N},
\ee
and carries the global determinant phase.
The $SL(N,{\mathbb C})$ reduction removes this central sector and isolates the projective or relative spectral structure.

\subsection{Observable spectral determinant, resolvent, and Dirichlet spectral data} \label{sec:CH_GLN_const}
In this part, we provide definition of the observable trace sequence and construct the observable spectral determinant, resolvent, and Dirichlet spectral data in Sec.~\ref{sec:framework}.
Then, we describe their expression using eigenvalues in Sec.~\ref{sec:eigen_express}.

\subsubsection{Construction} \label{sec:framework}

We begin with the characteristic polynomial of the monodromy matrix $M\in GL(N,{\mathbb C})$, defined by
\be
\Delta(z):=\det({\mathbb I}_N - z M) =\sum_{a=0}^{N} (-1)^a e_a\, z^a, \qquad (z \in {\mathbb C}) \label{eq:Delta_def}
\ee
where $\{ e_a \}_{a=0}^N$ are the characteristic coefficients as
\be
e_0 := 1, \qquad e_a := \Tr \left( \wedge^a M \right) = \sum_{1 \le j_1 < \cdots < j_a \le N} \lambda_{j_1} \cdots \lambda_{j_a} \quad \text{for} \quad  a \in \{ 1,\dots,N \}. \label{eq:ea_wedge}
\ee
In our framework, the $GL(N,{\mathbb C})$ structure is controlled by $\{ e_a\}_{a=1}^{N}$.
Notice that $e_N = \det M$ for $M \in GL(N,{\mathbb C})$, and $e_N = 1$ for $\bar{M} \in SL(N,{\mathbb C})$.
Then, we consider the Cayley--Hamilton (CH) theorem which gives
\be
  \sum_{a=0}^{N} (-1)^a e_a\, M^{N-a}=0,
\ee
or, equivalently, in the shifted form as
\be
  \sum_{a=0}^{N} (-1)^a e_a\, M^{n+N-a}=0  \quad \text{for} \quad n \in {\mathbb N}_0. \label{eq:CH_matrix_recurrence}
\ee
For any non-zero Hermitian observable operator $O \in {\rm End}({\cal H})$ acting on a finite dimensional Hilbert space ${\cal H} \cong {\mathbb C}^N$, we define the \textit{observable trace sequence (OTS)}, $\{ \zeta_n^{(O)} \}_{n \in {\mathbb N}_0}$, as
\be
  \zeta_n^{(O)} := \Tr \!\left(O M^{n}\right) \quad \text{for} \quad n \in {\mathbb N}_0. \label{eq:OTS_def}
\ee  
Applying the linear functional $X\mapsto \Tr (OX)$ to Eq.\eqref{eq:CH_matrix_recurrence} yields the CH recurrence induced linear recurrence as
\be
&& \sum_{a=0}^N  (-1)^a e_{a} \zeta^{(O)}_{n+N-a} = 0 \quad \text{for} \quad n \in {\mathbb N}_0. \label{eq:CH_bn_recurrence}
\ee
Throughout the basic OTS construction, we assume that the observable is non-zero and independent of the discrete period index $n$.
The OTS \eqref{eq:OTS_def} has the role of a primary input in our framework.

An important property of the OTS is that it is \textit{not} gauge-invariant in general, where here ``gauge'' refers to the shift of the reference time in the monodromy, $U(T, 0) \mapsto U(T+t,t)$, in Eq.\eqref{eq:M_def}.
By the definition of the monodromy matrix in Eq.\eqref{eq:M_def}, the time-dependence of the OTS can be written as
\be
&& M_t := U(T+t,t) = U(t,0)  M U^{-1}(t,0), \qquad t \in [0,T) \nl
&\Rightarrow \quad &\zeta^{(O)}_{n,t} := \Tr(O M_t^n) = \Tr(O(t) M^n), \qquad O(t):= U^{-1}(t,0) O U(t,0), \label{eq:def_M(t)}
\ee
and we used the periodicity of the time-evolution as $U(T+t,T) = U(t,0)$.
Thus, the OTS is invariant under the gauge transformation when $O \propto {\mathbb I}_N$.
Since the spectrum, $\{\lambda_j\}_j$, is a gauge independent object as $\Tr(M^n_t) = \Tr(M^n_{t^\prime})$ for $t \ne t^\prime$ , the gauge transformation can be pushed into the observable as $O(t)$ in $\zeta_{n,t}^{(O)}$.
We will mainly deal with $M = M_{t=0}$ in this paper, but this property has an important role for the micromotion and symmetry, which we will discuss in Sec.~\ref{sec:micro_symm}.


For investigating properties of the monodromy matrix $M$ and visibility by the observable $O$ from the OTS, we introduce analytic components used in spectral theory.
We define a generating function $h^{(O)}(z)$, which we call as \textit{observable spectral determinant (OSD)}, as
\be
&& h^{(O)}(z) := \exp \left[  - \Tr \left( O \log ( {\mathbb I}_N-z M ) \right) \right]. \qquad (z \in {\mathbb C}) \label{eq:def_hz}
\ee
From the OSD, we introduce \textit{observable resolvent (ORS)} (or \textit{transfer function}) ${\cal Z}^{(O)}(z)$ and \textit{observable Dirichlet spectral data (ODSD)} ${\cal F}^{(O)}(p)$ as\footnote{If singularities of ${\cal Z}^{(O)}(e^{-\rho})$ lie on the positive real $\rho$-axis, the integral representation requires a deformation of contour in the ODSD.
In this case, one can perform the integration by replacing with the Hankel integration, i.e.,
\be
{\cal F}^{(O)}(p) := \frac{1}{2 \pi i \Gamma(p)} \int_{\gamma_0}  d \rho \, e^{-\rho} \rho^{p-1} {\cal Z}^{(O)}(e^{-\rho}), \qquad (p \in {\mathbb C}) 
\ee
where $\gamma_0$ denotes the Hankel contour going around the origin as avoiding singularities on the real axis.
}
\be
&&  {\cal Z}^{(O)}(z) := \pd_z \log h^{(O)}(z) = - \pd_z \Tr \left[ O \log ( {\mathbb I}_N - z M ) \right], \qquad (z \in {\mathbb C}) \label{eq:ORS_def} \\
&&  {\cal F}^{(O)}(p) := \frac{1}{\Gamma(p)} \int_0^{+\infty}  d \rho \, e^{-\rho} \rho^{p-1} {\cal Z}^{(O)}(e^{-\rho}). \qquad (p \in {\mathbb C}) \label{eq:ODSD_def} 
\ee
The ORS can be recovered from the ODSD as
\be
{\cal Z}^{(O)}(e^{-\rho}) =\frac{e^{+\rho}}{2 \pi i}  \int^{c+i \infty}_{c-i \infty} dp \, \rho^{- p} \Gamma(p) {\cal F}^{(O)}(p), \label{eq:ODSDtoORS}
\ee
where $c \in {\mathbb R}_{>0}$ has to be taken as $c > \max \, \{ {\rm Re} (p^{\rm pole}_n)\}_n $ with a sequence of poles of $\Gamma(p) {\cal F}^{(O)}(p)$.
These can be directly expressed by the OTS \eqref{eq:OTS_def} as
\be
&& h^{(O)}(z) = \exp \left[ \sum_{n \in {\mathbb N}} \frac{\zeta^{(O)}_n}{n} z^{n} \right], \label{eq:h_zeta}
\ee
for the OSD, and 
\be
   {\cal Z}^{(O)}(z) = \sum_{n \in {\mathbb N}} \zeta^{(O)}_n z^{n-1}, \qquad {\cal F}^{(O)}(p) = \frac{1}{\Gamma(p)} \sum_{n \in {\mathbb N}} \int_0^{+\infty}  d \rho \, \rho^{p-1} \zeta^{(O)}_n e^{-\rho n}  = \sum_{n \in {\mathbb N}} \frac{\zeta^{(O)}_n}{n^p}, \label{eq:ZF_zeta}
\ee
for the ORS and the ODSD, respectively.
By denoting the OSD as
\be
h^{(O)}(z) = \sum_{n \in {\mathbb N}_0} h^{(O)}_nz^n, \label{eq:OSD_coeff}
\ee
taking derivative to $h^{(O)}(z)$ in Eq.\eqref{eq:h_zeta} yields a recurrence relation of $h^{(O)}_n$ as
\be
h^{(O)}_n = \frac{1}{n} \sum_{k=1}^n \zeta^{(O)}_k h^{(O)}_{n-k} \quad \text{for} \quad n \in {\mathbb N}, \qquad h^{(O)}_{0} = 1, \qquad h_{n}^{(O)} = 0 \quad \text{for} \quad n \in {\mathbb Z}_{<0}, \label{eq:def_hz_der}
\ee
which is practically useful.

The ORS \eqref{eq:ORS_def} can be decomposed into observable-independent and dependent parts, which is the crucial mathematical substantiation of \textit{skeleton/dressing decomposition} in our framework.
Since the inverse matrix of $A \in {\mathbb C}^{N\times N}$ with $\det A \ne 0$ can be expressed by the adjugate matrix as $A^{-1} = \frac{{\rm adj}\, A}{\det A}$, one can express the ORS in Eq.\eqref{eq:ORS_def} as\footnote{
Algebraically, invisible sectors correspond to common factors between ${\cal N}^{(O)}(z)$ and $\Delta(z)$, namely to pole-zero cancellation in the ORS.
These invisible sectors and shadow objects are related but distinct.
The former refer to observable-dependent loss of spectral components, for example through pole-zero cancellation or vanishing residues, whereas the latter denote reduced but stable remnants of structure that remain accessible even when the full object is not reconstructed.}
\be
&& {\cal Z}^{(O)}(z) = \Tr \left[ O M ( {\mathbb I}_N -z M)^{-1} \right] = \frac{{\cal N}^{(O)}(z)}{\Delta(z)}, \\
&& {\cal N}^{(O)}(z) := \Tr \left[ O M \cdot {\rm adj} ( {\mathbb I}_N -z M) \right] =  \sum_{n=0}^{N-1} {\cal N}^{(O)}_n z^n,
\ee
with the characteristic polynomial $\Delta(z)$ in Eq.\eqref{eq:Delta_def}, and the coefficients, ${\cal N}^{(O)}_n \in {\mathbb C}$, are defined as
\be
{\cal N}^{(O)}_n := \sum_{a=0}^n (-1)^a e_a \zeta^{(O)}_{n-a+1} \quad \text{for} \quad n \in \{0, \cdots, N-1 \}.
\ee
This implies that all ORSs generated from the same monodromy matrix share the common denominator $\Delta(z)$, while the choice of the observable $O$ affects only the numerator data ${\cal N}^{(O)}(z)$.

Although the formal spectral hierarchy may be written as
${\rm OSD} \rightarrow  {\rm ORS} \rightarrow {\rm ODSD}$,  the inverse-problem viewpoint used in this paper reverses the practical order of use, ${\rm OTS} \rightarrow {\rm ORS} \rightarrow \{{\rm OSD}, {\rm ODSD}\}$.
Starting from the OTS, we first reconstruct the ORS as the transfer-function-like object whose denominator gives the common spectral skeleton and whose numerator carries the observable dressing.
The OSD and ODSD are then used as complementary descriptive components in this practical order of use.
Their roles in the present analysis are summarized as follows:
\begin{itemize}
    \item \textbf{ORS ${\cal Z}^{(O)}(z)$}: 
    This is the primary object reconstructed from the OTS. Its denominator $\Delta(z)$ encodes the observable-independent spectral skeleton, while its numerator ${\cal N}^{(O)}(z)$ encodes the observable-dependent dressing. Thus, the ORS is the reconstruction hub of the framework.

    \item \textbf{OSD $h^{(O)}(z)$}: 
    This object packages the same OTS data in multiplicative form. It is useful for describing global spectral structures, such as determinant-phase accumulation and characteristic-polynomial constraints, after the common skeleton has been identified.

    \item \textbf{ODSD ${\cal F}^{(O)}(p)$}: 
    This object provides a Dirichlet/polylogarithmic readout of the reconstructed spectral data. It is used to examine how the common skeleton becomes visible, suppressed, or reshaped in a chosen observable channel.
\end{itemize}

\subsubsection{Eigenvalues expressions} \label{sec:eigen_express}
For the later analysis, it is convenient to explicitly express the spectral theoretical analytic components, OSD, ORS, and ODSD, using the spectrum of $M$, $\{\lambda_j \}_{j=1}^N$.
In generic non-Hermitian theory, the spectral structure becomes more nontrivial than the Hermitian cases.
The situations can be classified by three cases: simple spectra, diabolic points, and exceptional points.

We first consider $M$ diagonalizable with \textit{simple spectra (SS)}, i.e., $\lambda_{j_1} \ne \lambda_{j_2}$ for all $j_1 \ne j_2$.
In this case, the OTS \eqref{eq:OTS_def} is written down as
\be
\zeta_n^{(O)} = \sum_{j=1}^N c^{(O)}_j \lambda_j^n, \qquad c^{(O)}_j = \Tr (O {P}_j) = \langle \widetilde{u}_j| O | u_j \rangle, \label{eq:OTS_lam}
\ee
where ${P}_{j \in \{ 1,\cdots, N\}}$ denote the projectors in Eq.\eqref{eq:proj}.
The coefficients $c^{(O)}_j$ describe the effect of observable-dressing, and $c^{(O)}_j = 1$ for all $j \in \{ 1, \cdots, N\}$ when the observable is trivial, $O={\mathbb I}_N$, since $c^{({\mathbb I}_N)}_j = \Tr (P_j) = \langle \widetilde{u}_j| u_j \rangle = 1$ by the biorthogonality condition in Eq.\eqref{eq:states}.
From Eq.\eqref{eq:OTS_lam}, the ORS in Eq.\eqref{eq:ZF_zeta} takes the form as
\be
{\cal Z}^{(O)}(z) = \sum_{j=1}^N \frac{c^{(O)}_j \lambda_j}{1-\lambda_jz}, \label{eq:ORS_lam}
\ee
and the OSD and the ODSD can be obtained as
\be
&& h^{(O)}(z) = \prod_{j=1}^N ( 1-\lambda_jz)^{-c^{(O)}_j}, \qquad {\cal F}^{(O)}(p) = \sum_{j = 1}^{N} c^{(O)}_j {\rm Li}_p(\lambda_j), \label{eq:hF_lam}
\ee
respectively, where ${\rm Li}_p(x)$ is the polylogarithm function.

For a diagonalizable $M$, degeneracies might appear in the spectrum, known as \textit{diabolic points (DPs)}, depending on details of a system.
In the case of diabolic points, the spectral degeneracy does not break the diagonalizability of the matrix.
Thus, the OTS is slightly modified from the SS case as
\be
\zeta_n^{(O)} = \sum_{j=1}^K \widetilde{c}_{j}^{(O)}
\lambda_j^n, \label{eq:OTS_lam_dps}
\ee
where $\{ \lambda_j\}_{j=1}^K$ with $K < N$ is a maximal subset of the eigenvalues with $\lambda_{j_1} \ne \lambda_{j_2}$ for any $j_1 \ne j_2$.
Here, the effective coefficient $\widetilde{c}^{(O)}_{j}$ is the sum of the residues associated with the degenerate subspace of $\lambda_j$ as
\be
\widetilde{c}^{(O)}_{j} = \sum_{k \in {\cal S}_j} c^{(O)}_k, \label{eq:cOj_ren}
\ee
where ${\cal S}_j$ denotes the set of indices of eigenvectors belonging to the same eigenvalue $\lambda_j$.
Since no Jordan blocks appear, the ORS continues to possess only simple poles, remaining formally consistent with the SS case in Eq.\eqref{eq:ORS_lam}:
\be
{\cal Z}^{(O)}(z) = \sum_{j=1}^K \frac{\widetilde{c}^{(O)}_{j} \lambda_j}{1-\lambda_j z}, \label{eq:ORS_lam_dps}
\ee
where $K < N$ is the number of distinct eigenvalues.
Similarly, the OSD and the ODSD are obtained as
\be
&& h^{(O)}(z) = \prod_{j=1}^K ( 1-\lambda_jz)^{-\widetilde{c}^{(O)}_{j}},  \qquad {\cal F}^{(O)}(p) = \sum_{j = 1}^{K} \widetilde{c}^{(O)}_{j} {\rm Li}_{p}(\lambda_j), \label{eq:hF_lam_dps}
\ee
respectively.

Let us finally consider the case where $M$ is non-diagonalizable, i.e., \textit{exceptional points (EPs)}.
Unlike the SS and DP cases where a complete basis of eigenvectors always exists, at an EP, the eigenvectors themselves collapse alongside the eigenvalues.
In consequence, the matrix becomes non-diagonalizable, and the use of generalized eigenvectors is required.
According to Eqs.\eqref{eq:Jordan_prop_N}\eqref{eq:M_EP_P_N}, the OTS \eqref{eq:OTS_lam} changes the form as
\be
&& \zeta_n^{(O)} = \sum_{j=1}^K \sum_{m=1}^{d_j} c^{(O)}_{j,m}(\lambda_j)
\begin{pmatrix}
  n+m-2 \\
  m - 1
\end{pmatrix} \lambda_j^{n}, \label{eq:OTS_lam_eps}
\ee
with the coefficients defined by
\be
&& c^{(O)}_{j,m}(\lambda_j) :=
\begin{dcases}
  \Tr(O P_j) & \quad \text{for} \quad m=1 \\
\sum_{k=m-1}^{d_j-1} (-1)^{k-m+1}
\begin{pmatrix}
  k-1 \\
  m-2
\end{pmatrix}
\Tr(O N_j^{k}) \lambda_j^{-k}  & \quad \text{for} \quad m \ge 2
\end{dcases}, \label{eq:c_jm_def} 
\ee
where $d_j$ for $d_j \ge 2$ represents the algebraic multiplicity (specifically, the size of the Jordan block) associated with the coalesced eigenvalue $\lambda_j$.

Notice that in the EP case, polynomial growth terms ($n^{r-1}\lambda^n$) appear in the OTS.
The ORS in Eq.\eqref{eq:ORS_lam} contains $d_j$-ple poles as
\be
   {\cal Z}^{(O)}(z) = \sum_{j=1}^K \sum_{m=1}^{d_j} \frac{c^{(O)}_{j,m}(\lambda_j) \lambda_j}{(1-\lambda_j z)^m},    \label{eq:ORS_lam_eps}
\ee
and the OSD and the ODSD are modified as
\be 
&& h^{(O)}(z) = \prod_{j=1}^{K} (1-\lambda_j z)^{-c_{j,1}^{(O)}(\lambda_j)} \prod_{m=2}^{d_j} \exp\left[ \frac{c_{j,m}^{(O)}(\lambda_j)}{m-1} \left\{ \frac{1}{(1-\lambda_j z)^{m-1}} - 1 \right\} \right], \\
&& {\cal F}^{(O)}(p) = \sum_{j = 1}^{K} \sum_{m=1}^{d_j} \sum_{r=1}^m \frac{c^{(O)}_{j,m}(\lambda_j)}{\Gamma(m)}
\begin{bmatrix}
  m-1 \\
  r-1
\end{bmatrix}
    {\rm Li}_{p-r+1}(\lambda_j), \label{eq:hF_lam_eps}    
\ee
respectively, where $\begin{bmatrix} n \\ m \end{bmatrix}$ denotes the Stirling numbers of the first kind.
For the EP cases, the ODSD regularizes the polynomial growth $n^{r-1}$ via the index shift of the polylogarithm, rendering the sequence well-defined even for large $n$.

It is worth noting that the coefficients, $c^{(O)}_j$, $\widetilde{c}^{(O)}_{j}$, and $c^{(O)}_{j,m}$ in Eqs.\eqref{eq:OTS_lam}\eqref{eq:OTS_lam_dps}\eqref{eq:OTS_lam_eps}, 
are constrained by the initial term of the OTS, $\zeta^{(O)}_0 = \Tr \, O$. Specifically, they satisfy 
\be
\zeta^{(O)}_0  =
\begin{cases}
  \sum_{j=1}^N c_j^{(O)} & \quad \text{for SS} \\
  \sum_{j=1}^K \widetilde{c}_{j}^{(O)} & \quad \text{for DPs} \\
  \sum_{j=1}^K {c}_{j,1}^{(O)}(\lambda_j) & \quad \text{for EPs}
\end{cases}.
\ee
In the trivial case where the observable is the identity, $O = {\mathbb I}_N$, these coefficients directly reflect the algebraic multiplicity of the eigenvalues. For a diagonalizable matrix, $c^{({\mathbb I}_N)}_j = 1$ for all $j \in \{ 1,\cdots, N \}$. For a non-diagonalizable matrix, the higher-order contributions strictly vanish, i.e., $c^{({\mathbb I}_N)}_{j,m} = 0$ for $m \ge 2$, while the simple pole coefficients are determined by the algebraic multiplicity of each eigenvalue, $c^{({\mathbb I}_N)}_{j,1} = d_j$. 
These nontrivial values of the observable-dependent coefficients fundamentally affect the resulting spectral objects.
In particular, while the trivial choice $O= {\mathbb I}_N$ reduces the OSD to the standard characteristic polynomial governing the intrinsic spectral roots, a general probe-observable $O$ dresses these coefficients, inducing a much richer geometric structure of roots and poles on the complex $z$-plane.

\subsection{Remark: T-system-like bilinear structure and observable-induced deformation}

We briefly comment on a T-system-like structure naturally associated with the OSD coefficients.
Although the standard finite-type $A_{N-1}$ T-system is formulated for $SL(N,{\mathbb C})$ without an observable insertion (e.g. see Ref.~\cite{KunibaNakanishiSuzuki2011}), it is useful in the present framework to keep both the $GL(N,{\mathbb C})$ determinant sector and the observable dependence explicit.

From the OSD coefficients $h_n^{(O)}$ in Eq.\eqref{eq:OSD_coeff}, we define
\be
H_{a,n}^{(O)} := \det_{1\le \ell,k\le a}\!\left(h^{(O)}_{\,n+\ell-k}\right), \qquad a \in \{0, \cdots, N \}, \quad n\in{\mathbb N}_0,
\label{eq:def_Han_SLN}
\ee
with the conventions $H_{a,n}^{(O)}=0$ for $n<0$ and $H_{0,n}^{(O)}=1$.
Then these variables satisfy the bilinear relation defined as 
\be
(H^{(O)}_{a,n})^2 - H^{(O)}_{a,n+1}H^{(O)}_{a,n-1} = H^{(O)}_{a+1,n}H^{(O)}_{a-1,n}, \qquad a \in \{1,\cdots,N-1\}, \quad n \in {\mathbb N}_0, \label{eq:HOn_bilinear_rec}
\ee
with the boundary conditions $H^{(O)}_{0,n}=1$ and $H^{(O)}_{a,-1}=0$.
The upper boundary, however, is deformed by both the determinant sector and the observable insertion as
\be
H^{(O)}_{N,n} = (\det M)^n \bigl(1+\xi_n^{(O)}\bigr), \qquad n \in {\mathbb N}_0, \label{eq:HOn_top_boundary}
\ee
where $\xi_n^{(O)}$ measures the observable-induced deviation and satisfies $\xi_n^{({\mathbb I}_N)}=0$ for all $n \in{\mathbb N}_0$.
Thus, after reduction to $SL(N,{\mathbb C})$ and for the trivial observable $O={\mathbb I}_N$, one recovers the standard source-free $A_{N-1}$ closure at the top boundary.
Furthermore, these determinants satisfy a linear recurrence relation governed by the characteristic data.
Defining the rank-$a$ characteristic polynomial $\Delta_a(z)$ and its coefficients $E_r^{(a)}$ by
\be
\Delta_a(z) := \prod_{|J|=a}(1-\Lambda_J z) = \sum_{r=0}^{d_a}(-1)^r E_r^{(a)} z^r, \qquad \Lambda_J:=\prod_{j\in J}\lambda_j,
\ee
where $J\subset\{1,\cdots,N\}$, $|J|=a$, and $d_a := \binom{N}{a}$, the determinants obey
\be
\sum_{r=0}^{d_a}(-1)^r E_r^{(a)} H_{a,n+d_a-r}^{(O)} = S_{a,n}^{(O)}, \qquad a\in\{1,\cdots,N-1\}, \quad n\in{\mathbb N}_0. \label{eq:HOn_linear_rec}
\ee
Here, the source term $S_{a,n}^{(O)}$ encodes the observable-induced deviation from the free closure and identically vanishes for the trivial observable, $S_{a,n}^{({\mathbb I}_N)} = 0$.

In this sense, the present framework naturally produces a T-system-like bilinear structure accompanied by an observable-induced deformation. The standard $A_{N-1}$ closure is recovered only in the determinant-trivial and observable-trivial sector. For general $GL(N,{\mathbb C})$ and nontrivial $O$, both the top-boundary deviation $\xi_n^{(O)}$ and the source term $S_{a,n}^{(O)}$ characterize the observable-dependent dressing. In practical realization algorithms, the vanishing of these terms in the trivial sector ($O={\mathbb I}_N$) provides a rigorous consistency check for the reconstructed spectral skeleton.

\section{Realization-theoretic reconstruction from OTS} \label{sec:GLNstructure}

The ORS defined in Sec.~\ref{sec:CH_GLN_const} provides the theoretical hub that separates the skeleton from the dressing.
However, in practice, it is the Hankel or block-Hankel data matrix that plays the role of reconstructing (realizing) this ORS from OTS data.

While classical methods, such as the Prony method \cite{Prony1795}, the Matrix Pencil method \cite{HuaSarkar1990}, and realization-theoretic subspace methods including the Ho--Kalman algorithm, ERA, and ESPRIT \cite{HoKalman1966, JuangPappa1985, Paulraj1985}, as well as modern subspace state-space identification (4SID) methods \cite{VanOverscheeDeMoor1996}, reconstruct parameters from discrete sequences, our approach emphasizes the exact Floquet algebraic identities behind these finite-dimensional sequences.
In this section, we reconstruct the $GL(N, \mathbb{C})$ data from the OTS by systematically separating the universal spectral skeleton ($\{e_a\}_a, \{\lambda_j\}_j$) from the observable-dependent dressing ($\{c_{j,m}^{(O)}\}_{j,m}$). 

Our basic requirement at this stage is only to be given the OTS as a discrete numerical time-series. 
To reconstruct the similarity-invariant spectral skeleton, we do not require a priori knowledge of (i) the explicit matrix representation of the observable $O$ in a specific physical basis, or (ii) the exact dimension $N$ of the underlying $GL(N,{\mathbb C})$ structure, as the latter can be determined data-drivenly from the rank of the Hankel matrix. 
Furthermore, from a purely algebraic perspective, the exact characteristic polynomial is securely fingerprinted from a minimal stroboscopic window (minimally $2N+1$ points), without requiring an asymptotically long time-series.
However, we assume throughout this section that the observable $O$ is independent of the discrete period index $n$ in the OTS, $\{ \zeta^{(O)}_{n} \}_n$. 
This assumption guarantees the shift-invariance of the Hankel matrix, which is the necessary condition for the realization algorithms described below.

In Sec.~\ref{sec:MGF_HM}, we utilize the CH recurrence to identify the effective rank and skeleton from a single OTS.
In Sec.~\ref{sec:multi_obs}, we extend this via Hankel matrices to recover an equivalent realized monodromy matrix $M_{\rm rel}$.
In Sec.~\ref{sec:Liouville-space}, we discuss mapping to Liouville space for practical treatments for fundamental-type and adjoint-type OTSs.

\subsection{Cayley--Hamilton recurrence and Hankel matrix for the OTS} \label{sec:MGF_HM}
For identification of the $GL(N,{\mathbb C})$ structure, the CH recurrence for the OTS through the characteristic coefficients $\{e_a\}_{a=1}^N$ in Eq.\eqref{eq:CH_bn_recurrence} is useful.
(Recall that $e_0=1$.)
Since the coefficients are determined by the spectrum $\{\lambda_j\}_j$, our problem reduces to finding the appropriate $N$ and $\{e_a\}_{a=1}^N$ from the OTS.
Once the coefficients $\{e_a\}_{a=1}^N$ are determined, the spectrum $\{\lambda_j\}_j$ is obtained from the characteristic polynomial.
Subsequently, the observable-dependent coefficients, either $\{c_j^{(O)}\}_j$
for diagonalizable $M$ or $\{c_{j,m}^{(O)}\}_{j,m}$ for non-diagonalizable $M$, 
can be  recovered accordingly by solving associated linear system.
Below, we assume that all $\{ c_j^{(O)} \}_m$ or $\{c_{j,m}^{(O)}\}_{j,m}$ are non-zero in Eqs.\eqref{eq:ORS_lam}-\eqref{eq:ORS_lam_dps}.

For the SS cases, the visible recurrence order is determined from Eq.\eqref{eq:CH_bn_recurrence} by finding $N\in{\mathbb N}$ such that
\be
N = \min \{ N^\prime \in {\mathbb N} \, | \, {\bf H}^{(O)}_{{\bf n},N^\prime} \cdot {\bf E}_{N^\prime} = {\bf 0} \}
= \max_{N^\prime \in {\mathbb N}} {\rm rank} ({\bf H}^{(O)}_{{\bf n},N^\prime}),
\label{eq:cond_GLN}
\ee
where ${\bf H}^{(O)}_{{\bf n},N}$ and ${\bf E}_N$ are the Hankel matrix and the basis consisting of the characteristic coefficients defined as\footnote{
While we use a square $(N+1) \times (N+1)$ matrix to define the rank condition in Eq.\eqref{eq:cond_GLN} for simplicity, in practical numerical implementations one may construct a larger and/or generally rectangular Hankel matrix $M_{\rm row} \times M_{\rm col}$ to robustly determine the rank via singular value decomposition.
}
\be
{\bf H}^{(O)}_{{\bf n},N} :=
\begin{pmatrix}
\zeta^{(O)}_{n_0+N} & \zeta^{(O)}_{n_0+N-1} &\cdots&  \zeta^{(O)}_{n_0} \\
\zeta^{(O)}_{n_1+N} & \zeta^{(O)}_{n_1+N-1} &\cdots&  \zeta^{(O)}_{n_1} \\
\vdots & \vdots &\ddots& \vdots \\
\zeta^{(O)}_{n_N+N} & \zeta^{(O)}_{n_N+N-1} &\cdots&  \zeta^{(O)}_{n_N}
\end{pmatrix}
\in {\mathbb C}^{(N+1)\times (N+1)},
\qquad
{\bf E}_N :=
\begin{pmatrix}
1 \\
-e_1 \\
\vdots \\
(-1)^N e_N
\end{pmatrix}
\in {\mathbb C}^{N+1},
\label{eq:def_HNEN}
\ee
where ${\bf n} = (n_0,\cdots,n_N) \in {\mathbb N}_0^{N+1}$, and $n_{j_1}\neq n_{j_2}\in{\mathbb N}_0$ for all $j_1\neq j_2$.
This implies that, for identification of the $GL(N,{\mathbb C})$ structure, the minimal number of data points in the OTS is $(2N+1)$ by taking $n_{j+1}=n_j+1$, including $\zeta^{(O)}_{0}=\Tr \, O$.
If one further assumes that $M\in SL(N,{\mathbb C})$, then this minimum is reduced to $(2N-1)$.
This minimum requirement highlights the efficiency of the structural tomography, namely that the underlying finite-dimensional skeleton can already be fingerprinted from a short observable window.
In the EP cases, although the monodromy matrix becomes non-diagonalizable, the condition \eqref{eq:cond_GLN} still works.
This is because the degeneracy is reflected in the polynomially dressed behavior $n^{r-1}\lambda_j^n$ appearing in Eq.\eqref{eq:OTS_lam_eps}.
In the DP cases, however, the rank drops to the number $K$ of distinct eigenvalues because the effect of degeneracy can be absorbed into the effective coefficient $\widetilde{c}_{j}^{(O)}=\sum_{k\in{\cal S}_j} c_k^{(O)}$.
In this case, the OTS cannot distinguish a genuine $GL(K,{\mathbb C})$ system with simple spectrum from a $GL(N,{\mathbb C})$ system at a DP.
This situation stems from an intrinsic limitation of the observable dynamics, and one way to overcome this problem is to resolve the degeneracy by adjusting experimental parameters.
Notice that, strictly speaking, the recurrence extracted from a given OTS identifies the minimal effective annihilating polynomial visible in that channel, and it coincides with the full characteristic polynomial only under the corresponding nondegeneracy and visibility conditions.

Once the effective rank and the characteristic coefficients are determined, the distinct eigenvalues are obtained from the characteristic polynomial.
Subsequently, the observable-dependent coefficients can be recovered by the Prony method; for the DP case, this yields the effective coefficients, $\{\widetilde{c}^{(O)}_{j}\}_{j=1}^K$.
See App.~\ref{app:prony} for the Prony reconstruction.

The CH recurrence also admits a generalization to
an OTS measured by arbitrary $(N+1)$-points.
More precisely, instead of using $\{\zeta^{(O)}_{n+N},\zeta^{(O)}_{n+N-1},\cdots,\zeta^{(O)}_n\}$, one can consider the OTS $\{\zeta^{(O)}_{n^{(0)}},\zeta^{(O)}_{n^{(1)}},\cdots,\zeta^{(O)}_{n^{(N)}}\}$ with $n^{(0)}>n^{(1)}>\cdots>n^{(N)}\in{\mathbb N}_0$ with fixed $(n^{(0)} - n^{(r)}) \in {\mathbb N}$ for $r \in \{1,\cdots,N\}$.
In this case, the same finite-dimensional spectral skeleton constrains these data through Schur polynomial coefficients $S^{(r)}_{\mu(\widetilde{\bf n})}(\{e_a\}_a)$ as
\be
\zeta^{(O)}_{n^{(0)}} + \sum_{r=1}^{N} S^{(r)}_{\mu(\widetilde{\bf n})}(\{e_a\}_a) \zeta^{(O)}_{n^{(r)}} = 0, \qquad \widetilde{\bf n}=(n^{(0)},\cdots,n^{(N)}). \label{eq:GLN_Schur_interp}
\ee
This gives an arbitrary-stepsize extension of the finite-dimensional reconstruction.
Its structural significance is that the reconstruction is not constrained to consecutive sequence, such as $\zeta^{(O)}_n, \zeta^{(O)}_{n+1}, \cdots$, in the OTS, i.e.,
once the characteristic polynomial is fixed, the same common spectral skeleton constrains arbitrary $(N+1)$-point patterns as well.
In this sense, the arbitrary-stepsize extension is not merely a numerical convenience, but a direct algebraic consequence of the finite-dimensional monodromy structure.
The detailed formulas and derivation are given in App.~\ref{app:arb_stepsize}.
Another subtlety arises for decimated or fixed-step subsequences.
For a $k$-step sampling pattern, the effective poles are $\{\lambda_j^k\}_j$ rather than $\{\lambda_j\}_j$ themselves.
Hence, even if $\lambda_{j_1} \neq\lambda_{j_2}$, the condition $\lambda_{j_1}^k=\lambda_{j_2}^k$ leads to aliasing and reduces the rank visible in the sampled subsequence.
This is distinct from observable-dependent rank reduction and reflects a sampling-induced loss of distinguishability.

Finally, we comment on the case of null observable coefficients, $c^{(O)}_j = 0$.
If $c_j^{(O)}=0$ for some $j\in\{1,\cdots,N\}$, then the rank of the Hankel matrix is further reduced regardless of diagonalizability as
\be
\max_{N^\prime\in{\mathbb N}} \mathrm{rank}({\bf H}^{(O)}_{{\bf n},N^\prime}) =
\begin{cases}
  N-(\text{ zeros of }\{c_j^{(O)}\}_j/\{c_{j,m}^{(O)}\}_{j,m}) & \text{for SS and EPs} \\
K-(\text{ zeros of }\{\widetilde{c}_{j}^{(O)}\}_j) & \text{for DPs}
\end{cases}.
\ee
Therefore, the effective reconstructibility depends not only on the underlying spectrum, but also on the observable channel itself.

\subsection{Multi-observable tomography and realized monodromy matrix} \label{sec:multi_obs}
To overcome the informational limitations of a single operational channel and fully reconstruct the coordinate-dependent matrix structure of the monodromy matrix, we extend the single-observable framework to a multi-observable setting $\{O_\ell\}_{\ell=1}^{L}$.
This multi-channel construction is intimately connected to classical realization theory in control engineering, specifically the Ho--Kalman algorithm~\cite{HoKalman1966} and the multi-input-multi-output eigensystem realization algorithm (MIMO-ERA)~\cite{JuangPappa1985}. 
By assembling a block-Hankel data matrix from the intertwined trace sequences, these algorithms allow us to recover not only the universal similarity-invariant spectral skeleton but also an equivalent matrix realization $M_{\rm rel}$ of the underlying monodromy matrix.
In what follows, we formulate this realization-theoretic extension for both the SS and EP cases in detail.

We define the vector-valued OTS measured by $\{O_\ell\}_{\ell=1}^{L}$ as
\be
\bm{\zeta}_n^{({\bf O})}
:=
\begin{pmatrix}
\zeta_n^{(O_1)} \\
\zeta_n^{(O_2)} \\
\vdots \\
\zeta_n^{(O_L)}
\end{pmatrix}
\in{\mathbb C}^{L}, \qquad {\bf O}=\{O_1,\cdots,O_L\}, \qquad (L \in {\mathbb N})
\label{eq:def_zetaOn}
\ee
and generalize the Hankel matrix as
\be
\widetilde{\bf H}^{({\bf O})}_{{\bf n},N} := (\bm{\zeta}^{({\bf O})}_{{\bf n}+N},\bm{\zeta}^{({\bf O})}_{{\bf n}+N-1},\cdots,\bm{\zeta}^{({\bf O})}_{{\bf n}}) \in{\mathbb C}^{L(N+1)\times(N+1)}, \qquad \bm{\zeta}^{({\bf O})}_{{\bf n}+k} :=
\begin{pmatrix}
\bm{\zeta}^{({\bf O})}_{n_0+k} \\
\bm{\zeta}^{({\bf O})}_{n_1+k} \\
\vdots \\
\bm{\zeta}^{({\bf O})}_{n_N+k}
\end{pmatrix} \in {\mathbb C}^{L(N+1)}. \label{eq:def_HNEN_multiO}
\ee
From this generalized Hankel matrix, one can construct an equivalent realized monodromy matrix through the singular value decomposition (SVD) as
\be
\widetilde{\bf H}^{({\bf O})}_{{\bf n},N} = U\Sigma V^\dagger, \label{eq:H_USV}
\ee
where $U (V)$ are the left (right) singular vectors, and $\Sigma$ is, in general, a rectangular diagonal matrix of singular values.
By extracting the leading $N$ singular directions, one obtains a shift matrix ${\cal M}\in{\mathbb C}^{N\times N}$ and hence the realized monodromy matrix as
\be
M_{\rm rel} := \Sigma_N^{-1/2}{\cal M}\Sigma_N^{1/2} = \Sigma_N^{-1/2}U_N^\dagger \widetilde{\bf H}^{({\bf O})}_{{\bf n}+1,N} V_N\Sigma_N^{-1/2}.
\label{eq:def_Meff}
\ee
Under the full-rank assumption, this matrix is equivalent to the original monodromy matrix $M$ up to similarity transformation generated by a $GL(N,{\mathbb C})$ matrix $S$ as
\be
M_{\rm rel} = S M S^{-1}, \qquad S \in GL(N,{\mathbb C}). \label{eq:Mrel_SMSinv}
\ee
The detailed derivation is given in App.~\ref{app:Mrel_proof}.

As shown in App.~\ref{app:Mrel_proof}, there exists a similarity transformation $S$ that connects the realized matrix $M_{\rm rel}$ to the exact monodromy matrix $M$.
Although $S$ itself cannot be numerically determined solely from the OTSs (as it requires prior knowledge of the absolute coordinates of the system), its mathematical existence guarantees that $M_{\rm rel}$ inherits the similarity-invariant spectral skeleton of $M$, such as the spectrum and Jordan structure, under the full-rank assumption.
The remaining question is a condition for ${\bf O}$ to reconstruct the maximal skeleton of $M_{\rm rel}$, i.e., what the suitable set is. We will discuss this issue in Sec.~\ref{sec:rel_of_M}.

The realization-theoretic reconstruction is not merely a supplementary method, but rather the foundation that completes the ORS/OSD/ODSD framework in Sec.~\ref{sec:CH_GLN_const}.
By extracting a full-rank realized monodromy matrix $M_{\rm rel}$ in the multi-observable setting, one obtains a similarity-invariant representative of the common spectral skeleton.
Spectral quantities depending only on the similarity class of $M$ can then be evaluated directly from $M_{\rm rel}$, without relying on a particular single observable channel.
Consequently, applying the OSD and ODSD to this similarity-invariant representative allows one to analyze the common spectral skeleton without relying on a particular single observable channel.

In the DP cases, the rank of the Hankel matrix drops to $K<N$, and the realization procedure directly yields only a $K$-dimensional representation corresponding to the visible operator subspace, but the realized spectrum $\{\lambda_j\}_{j=1}^K$ is identical to the exact one up to the DP degeneracy.
To determine the invisible remainder associated with the unresolved degeneracy, extra physical information is necessary to constrain the internal components, such as tridiagonal or sparse structures in $M$.
Alternatively, when the boundary state can be varied as an additional input channel, this rank reduction problem can be systematically resolved by expanding the framework to a multiple-input multiple-output (MIMO) setting, as we will discuss in Sec.~\ref{sec:Liouville-space}.

\subsection{Liouville-space mapping} \label{sec:Liouville-space}
To connect our algebraic framework with laboratory quantum platforms, we map the reconstruction onto the Liouville space by explicitly introducing an initial state or boundary operator $\Omega \in {\mathbb C}^{N \times N}$.
This Liouville-space formulation allows us to treat two operationally different time-series data in a common linear-algebraic form.
The first is the fundamental-type OTS, $\zeta_n^{(O;\Omega)} = \Tr (O M^n \Omega)$, which captures the fundamental $GL(N,{\mathbb C})$ skeleton.
The second, and more physically prevalent, is the adjoint-type OTS, $\zeta_{n}^{{\rm adj}(O;\Omega)} = \Tr(O M^n \Omega M^{-n})$, representing stroboscopic expectation values~\cite{Breuer2000,Kohler2005}.
This adjoint sequence constitutes the native data format in standard quantum process tomography (QPT)~\cite{PoyatosCiracZoller1997,ChuangNielsen1997}, gate set tomography (GST)~\cite{MerkelGambettaSmolinPolettoCorcolesJohnsonRyanSteffen2013,BlumeKohoutGambleNielsenMizrahiSterkMaunz2013}, Floquet spectroscopy on superconducting processors~\cite{Roushan_2017}, and randomized measurement methods~\cite{Elben_2023}.
Lifting this sequence into the Liouville space translates the physical expectation values back into a linear transition amplitude governed by the adjoint superoperator $M_{\rm adj} := M \otimes M^{-\top}$, enabling exact skeleton extraction for the ratio spectrum.

Crucially, the structural introduction of multiple boundary operators $\Omega_p$ resolves the rank-deficiency problem encountered under eigenvalue degeneracies (degenerate poles, DPs).
While varying only the readout observables $O_\ell$ (as in Sec.~\ref{sec:multi_obs}) shifts the projection on the left-eigenvector sector, it fails to lift algebraic ambiguities within a degenerate subspace. 
Conversely, preparing independent initial boundaries $\Omega_p$ actively injects weights across the right-eigenvector sector. 
This dual scanning of both the left and right sectors can satisfy the full-rank condition of the corresponding block-Hankel matrix, fully probing the internal degenerate structure within the Liouville space.

\subsubsection{Fundamental-type OTS: $\zeta^{(O;\Omega)}_n = \Tr (O M^n \Omega)$} \label{sec:ampli_type_data}
We first define the Liouville space.
The Liouville-(dual) space $| A \rrangle$ ($\llangle A |$) from the bi-orthogonal basis in Eq.\eqref{eq:lr_states} is defined as
\be
| A \rrangle := \sum_{j,k=1}^N A_{jk} | u_j \rangle \otimes | \widetilde{u}_k \rangle^{*}, \qquad \llangle A | := \sum_{j,k=1}^N A_{kj} \langle \widetilde{u}_j | \otimes \langle u_k |^{*}, \qquad A_{jk} \in {\mathbb C}, \label{eq:A_Liouville}
\ee
with the complex conjugate $*$, where the inner-product is given by\footnote{
Notice that Eq.\eqref{eq:A_Liouville} is a bilinear pairing adapted to the biorthogonal convention, not the Hermitian Hilbert--Schmidt inner product.
}
\be
\llangle A | B \rrangle = \Tr(A B).
\ee
The action of an operator $D = B \otimes C$ on $| A \rrangle$ is given by
\be
D | A \rrangle = | B A C^{\top}  \rrangle, \qquad B, C \in {\mathbb C}^{N \times N}.
\ee
By this convention, the fundamental-type OTS with a boundary operator $\Omega$ can be expressed by\footnote{
This identity can be found by
\be
\llangle O | M^n \otimes {\mathbb I}_N | \Omega \rrangle &=&  \sum_{j,k,\ell,m=1}^{N} O_{kj} \Omega_{\ell m} (\langle  \widetilde{u}_j | \otimes  \langle  u_k |^* )   (M^n \otimes {\mathbb I}_N)  (| u_\ell \rangle \otimes | \widetilde{u}_m \rangle^{*} ) \nl
&=&   \sum_{j,k,\ell = 1}^{N} O_{kj} \Omega_{\ell k} \langle  \widetilde{u}_j | M^n| u_\ell \rangle \, =  \, \sum_{j,k,\ell = 1}^{N} \langle \widetilde{u}_k | O | u_j \rangle 
\langle  \widetilde{u}_j |  M^n | u_\ell \rangle \langle \widetilde{u}_\ell | \Omega | u_k \rangle \nl
&=& \Tr (O M^n \Omega).
\ee
}
\be
\zeta_n^{(O;\Omega)} := \Tr (O M^n \Omega) = \llangle O | M^n \otimes {\mathbb I}_N | \Omega \rrangle, \qquad (n \in \mathbb N_0) \label{eq:amp_OTS_def}
\ee
and the ordinary OTS in Eq.\eqref{eq:OTS_def} is reproduced by taking $\Omega = {\mathbb I}_N$.

In order to see how the MIMO setting resolves the rank reduction problem of DPs, let us formulate a matrix-valued OTS using $L$ observables ${\bf O} = \{O_1, \cdots, O_L\}$ and $R$ boundary operators $\bm{\Omega} = \{\Omega_1, \cdots, \Omega_R\}$.
The sequence constructs an $L \times R$ matrix as
\be
\bm{\zeta}_n^{({\bf O};\bm{\Omega})} :=
\begin{pmatrix}
  \zeta_{n}^{(O_1;\Omega_1)} & \cdots & \zeta_{n}^{(O_1;\Omega_R)} \\
 \vdots & \ddots & \vdots \\
\zeta_{n}^{(O_L;\Omega_1)} & \cdots & \zeta_{n}^{(O_L;\Omega_R)} \\
\end{pmatrix} \in {\mathbb C}^{L \times R}. \label{eq:zetan_O_Ome}
\ee
Let us suppose $M$ has a DP with an eigenvalue $\lambda_D$ of algebraic multiplicity $d$. In the single-channel setting ($L=R=1$), the contribution from this degenerate subspace is traced out into a single scalar coefficient, yielding only rank $1$ in the Hankel matrix. In contrast, the MIMO sequence factorizes the contribution into a matrix product as
\be
&& \bm{\zeta}_n^{({\bf O};\bm{\Omega})} = \cdots + \left( {\bf C}_D {\bf B}_D \right) \lambda_D^n + \cdots, \qquad {\bf C}_{D} \in {\mathbb C}^{L \times d^2}, \ \ {\bf B}_{D} \in {\mathbb C}^{d^2 \times R},
\ee
where ${\bf C}_D$ and ${\bf B}_D$ are defined by vectorizing the internal degrees of freedom using a multi-index $\alpha = (a,b)$ for the degenerate basis indices $a,b \in \{1, \cdots, d \}$ as
\be
&& [{\bf C}_D]_{\ell, \alpha} := \llangle O_\ell | \alpha \rrangle = \langle \widetilde{u}_a | O_\ell | u_b \rangle, \qquad [{\bf B}_D]_{\alpha, p} := \llangle \alpha | \Omega_p \rrangle = \langle \widetilde{u}_b | \Omega_p | u_a \rangle.
\ee
Here, $| \alpha \rrangle:= | u_b \rangle \otimes | \widetilde{u}_a \rangle^*$ and $\llangle \alpha | := \langle \widetilde{u}_b | \otimes \langle u_a |^*$ represent the Liouville-space basis states associated with the degenerate block.
As long as the chosen observables and states are sufficiently rich ($L, R \ge d^2$) and linearly independent within the subspace, the coefficient matrix ${\bf C}_D {\bf B}_D$ inherently possesses rank $d^2$ rather than $d$.

As we discussed in Sec.~\ref{sec:multi_obs} (and App.~\ref{app:Mrel_proof}), one can employ $\bm{\zeta}_n^{({\bf O};\bm{\Omega})}$ in Eq.\eqref{eq:zetan_O_Ome} instead of $\bm{\zeta}^{({\bf O})}$ in Eq.\eqref{eq:def_zetaOn} and construct a Hankel matrix to find a realized monodromy matrix.
The resulting realized matrix ${\cal M}_{\rm rel}$ is not an $N \times N$ matrix but an $N^2 \times N^2$ matrix, which is given by
\be
{\cal M}_{\rm rel} = S (M \otimes {\mathbb I}_N) S^{-1}, \qquad S \in GL(N^2,{\mathbb C}),
\ee
with the similarity transformation $S$.
Although the realized matrix ${\cal M}_{\rm rel}$ has $N$-fold degenerate eigenvalues, i.e.,
\be
\{ \underbrace{\lambda_1,\cdots,\lambda_1}_{\times N}, \underbrace{\lambda_2,\cdots,\lambda_2}_{\times N},
\cdots, \underbrace{\lambda_N,\cdots,\lambda_N}_{\times N} \},
\ee
the rank reduction caused by DPs is resolved in ${\cal M}_{\rm rel}$, unlike the ordinary trace-type OTS, $\zeta^{(O)}_n = \Tr(O M^n)$.

\subsubsection{Adjoint-type OTS: $\zeta^{{\rm adj}(O;\Omega)}_n = \Tr (O M^n \Omega M^{-n})$} \label{sec:adjoint_type_data}

In experimentally accessible Floquet systems, one often measures stroboscopic expectation values with a density operator $\Omega$, taking the form as
\be
\zeta_n^{{\rm adj}(O;\Omega)} &:=& \Tr \left( O\, M^n \Omega M^{-n} \right) = \llangle O | M^n_{\rm adj} | \Omega\rrangle, \qquad n \in \mathbb N_0, \label{eq:adjoint_OTS_def}
\ee
where $M_{\rm adj}$ denotes
\be
M_{\rm adj} := M \otimes M^{-\top}, 
\ee
with the inverse-transpose matrix of $M$,  $M^{-\top}$.
The adjoint OTS can be directly translated to the expectation value of $O$ under the periodic-time evolution $M = U(T,0)$ in non-Hermitian quantum systems:
\be
&& \langle O \rangle_n := \langle \widetilde{\psi}_n | O | \psi_n \rangle  = \Tr (O \rho_n), \qquad  \rho_n := | \psi_n \rangle \langle \widetilde{\psi}_n | = M^n \rho_0 M^{-n},
\ee
where $\rho_0 = | \psi_0 \rangle \langle \widetilde{\psi}_0 | =: \Omega$ is the initial density operator.
Physically, the adjoint-type sequence $\zeta_n^{{\rm adj}(O;\Omega)}$ naturally encapsulates the operational structure of quantum process tomography (QPT)~\cite{PoyatosCiracZoller1997,ChuangNielsen1997} and gate set tomography (GST)~\cite{MerkelGambettaSmolinPolettoCorcolesJohnsonRyanSteffen2013,BlumeKohoutGambleNielsenMizrahiSterkMaunz2013}.
In the context of Floquet theory, this stroboscopic map acts as a Floquet superoperator in Liouville space, whose spectrum is governed by the quasi-energy differences, i.e., the ratio spectrum $\Lambda_{jk} = \lambda_j / \lambda_k$~\cite{Breuer2000,Kohler2005}.
Therefore, QPT and GST settings inherently reconstruct the adjoint-type Floquet superoperator $M_{\rm adj}$ and its ratio spectrum, rather than the absolute spectrum $\lambda_j$.

Since it is generated not by powers of $M$ alone but by the adjoint action of $M$ on operators, this type of OTS differs from the ordinary OTS in Eq.\eqref{eq:OTS_def}.
Actually, Eq.\eqref{eq:adjoint_OTS_def} can be expanded depending on the SS, DP, or EP cases as
\be
\zeta_n^{{\rm adj}(O;\Omega)} =
\begin{cases}
  \sum_{j,k=1}^N C_{jk}^{(O;\Omega)}   \Lambda_{jk}^n & \quad \text{for SS}  \\
  \sum_{j,k=1}^K \widetilde{C}_{jk}^{(O;\Omega)} \Lambda_{jk}^n & \quad \text{for DPs}  \\
  \sum_{j,k=1}^K {C}_{jk,n}^{(O;\Omega)}(\lambda_{j},\lambda_k) \Lambda_{jk}^n & \quad \text{for EPs} 
\end{cases}, \label{eq:adjoint_OTS_spectral}
\ee
where $d_j$ is the algebraic multiplicity for the $j$-th distinct eigenvalue $\lambda_j$, and the ratio spectrum is defined as
\be
\Lambda_{jk}:= \frac{\lambda_j}{\lambda_k}, \label{eq:ratio_spectrum}
\ee
and the coefficients are given by
\be
&& C_{jk}^{(O;\Omega)} := \llangle O | P_{jk} | \Omega \rrangle, \qquad P_{jk}:= P_{j} \otimes P^{\top}_{k}, \\
&& \widetilde{C}_{jk}^{(O;\Omega)} := \llangle O | {\cal P}_{jk} | \Omega \rrangle, \qquad {\cal P}_{jk} :=  \sum_{a \in {\cal S}_{j}} \sum_{b \in {\cal S}_{k}} P_a \otimes P^{\top}_b, \\
&& {C}_{jk,n}^{(O;\Omega)}(\lambda_{j},\lambda_k) := \sum_{m=0}^{d_j + d_k -2}
\begin{pmatrix}
  n \\
  m
\end{pmatrix} \Lambda_{jk}^{-m}
\llangle O | N^m_{jk} | \Omega \rrangle, \nl
&& \qquad \qquad \qquad \qquad
N_{jk}:= \lambda_k^{-1} N_{j} \otimes P_k^{\top} + \sum_{s=1}^{d_k-1} (-\lambda_k)^{-s-1}  (\lambda_j P_j + N_j)  \otimes (N_k^s)^{\top}.
\ee

Correspondingly, one may define Liouville-space analogues of the OSD, ORS, and ODSD by
\be
h^{{\rm adj}(O;\Omega)}(z) &:=& \exp \left[ -
\llangle O \,|\, \log ( {\mathbb I}_{N^2}-z M_{\rm adj} ) \,|\,\Omega \rrangle \right] = \exp \left[ \sum_{n \in {\mathbb N}}\frac{\zeta_n^{{\rm adj}(O;\Omega)}}{n}z^n \right], \label{eq:adjoint_OSD} \\
{\cal Z}^{{\rm adj}(O;\Omega)}(z) &:=& \pd_z \log h^{{\rm adj}(O;\Omega)}(z) = \llangle O \,|\, M_{\rm adj} ({\mathbb I}_{N^2}-z M_{\rm adj})^{-1} \,|\,\Omega \rrangle = \sum_{n \in {\mathbb N}}\zeta_n^{{\rm adj}(O;\Omega)} z^{n-1}, \label{eq:adjoint_ORS} \\
{\cal F}^{{\rm adj}(O;\Omega)}(p) &:=& \frac{1}{\Gamma(p)} \int_0^{+\infty} d\rho\, e^{-\rho}\rho^{p-1} {\cal Z}^{{\rm adj}(O;\Omega)}(e^{-\rho}) \, = \, \sum_{n \in {\mathbb N}} \frac{\zeta_n^{{\rm adj}(O;\Omega)}}{n^p}, \label{eq:adjoint_ODSD}
\ee
where ${\mathbb I}_{N^2}={\mathbb I}_N\otimes{\mathbb I}_N$ is the identity operator on the Liouville space.
Here, the Liouville-space OSD should be understood as a generating object built from the superoperator $M_{\rm adj}$ and the pair $(O,\Omega)$, rather than as a trace over the original Hilbert space.
As in the ordinary framework, the adjoint ORS takes the form
\be
{\cal Z}^{{\rm adj}(O;\Omega)}(z) = \frac{{\cal N}_{\rm adj}^{(O;\Omega)}(z)} {\Delta_{\rm adj}(z)}, \qquad \Delta_{\rm adj}(z):=\det({\mathbb I}_{N^2} - z M_{\rm adj}),
\label{eq:adjoint_resolvent_fraction}
\ee
hence, the separation between a common skeleton and observable-dependent dressing persists in Liouville space as well.

The ratio spectrum \eqref{eq:ratio_spectrum} is invariant under the overall rescaling
\be
M \longmapsto g_0 \cdot M, \qquad g_0\in {\mathbb C}^\times.
\ee
Hence, the adjoint-type OTS is blind to the central $GL(N,\mathbb C)$ factor, or equivalently to the determinant sector of $M$.
For this reason, Eq.\eqref{eq:adjoint_OTS_def} is naturally suited to the reconstruction of the reduced or projective part of the monodromy matrix, while the full $GL(N,\mathbb C)$ data requires an additional determinant-sensitive channel such as the ordinary OTS in Eq.\eqref{eq:OTS_def}.
In other words, adjoint-type tomography is naturally compatible with the $SL(N,\mathbb C)$-reduced skeleton, whereas the recovery of the full $GL(N,\mathbb C)$ structure requires supplementary scalar information.

We do not take Eq.\eqref{eq:adjoint_OTS_def} as the primary definition of the OTS in the present paper.
Our main framework is built from the ordinary trace-type OTS $\zeta_n^{(O)}=\Tr(O M^n)$, which directly observes the spectrum of $M$ and retains the determinant sector.
Nevertheless, Eq.\eqref{eq:adjoint_OTS_def} should be viewed as an experimentally motivated companion problem, i.e.,
it leads to a Liouville-space tomography problem of essentially the same algebraic type, but for the superoperator $M_{\rm adj}=\mathrm{Ad}_M$ rather than for $M$ itself.

\section{Algebraic realization, micromotion, and symmetry} \label{sec:prel_micro_symm}

In this section, we clarify a fundamental structural issue left open by the realization-theoretic approach in Sec.~\ref{sec:multi_obs}: even when a finite-dimensional matrix realization can be operationalized from OTS data, what sector of the exact monodromy matrix is genuinely identifiable from a restricted set of probes, and what sector remains fundamentally invisible? 
As we demonstrate below, the boundary between algebraic visibility and invisibility is strictly dictated by the operator-algebraic structure generated by the accessible observables, particularly through the lens of their commutant and bicommutant~\cite{BratteliRobinson}.

To formalize this constraint, Sec.~\ref{sec:rel_of_M} introduces the concept of partial realization, explicitly decomposing the full monodromy matrix into a visible representative uniquely determined by the observable algebra and a trace-invisible remainder. 
Sec.~\ref{sec:micro_symm} then examines how continuous micromotion can enlarge this visible operator space, while showing that exact commuting symmetries can impose a non-removable algebraic deficiency within the corresponding symmetry sector.
All proofs for the propositions established in this section are presented in App.~\ref{app:proof_props}.
Finally, in Sec.~\ref{sec:reconstruct_Mex}, we show that micromotion is not only useful for symmetry analysis but also a practical asset for basis-resolved tomography, providing a systematic procedure to reconstruct the exact monodromy matrix under a calibrated micromotion setting for both the SS and EP cases.

\subsection{Partial realization of the monodromy matrix} \label{sec:rel_of_M}
In Sec.~\ref{sec:multi_obs}, we described the realization of an equivalent monodromy matrix assuming a sufficiently rich set of multi-observable OTSs. 
In many physically relevant settings, however, the accessible observables do not generate the full matrix algebra $\mathrm{End}(\mathcal{H})$. 
For instance, experimental probes are often localized within a specific spatial invariant sector or a restricted subsystem of the Hilbert space. 
In such cases, even ideal, noise-free trace data are insufficient to uniquely determine the full monodromy matrix. 
Instead, the inverse problem naturally splits into a visible object, uniquely fixed by the subalgebra $\mathfrak{O}$ generated by the observables, and an invisible remainder that remains fundamentally undetectable through any OTS generated within this algebra. 
This structural limitation motivates two central algebraic questions: what precise conditions on the observable set $\mathbf{O}=\{O_\ell\}_\ell$ guarantee the recovery of the full monodromy matrix, and what is the well-defined notion of reconstruction when $\mathfrak{O}$ is intrinsically restricted?

To formulate this algebra precisely, we define the unital $*$-algebra ${\frak O}$ generated by ${\bf O}$ as
\be
{\frak O} := \left\{ c_0{\mathbb I}_N + \sum_{m=1}^{m_{\rm max}} \sum_{\ell_1,\dots,\ell_m=1}^{L} c_{\ell_1\cdots\ell_m} O_{\ell_1}\cdots O_{\ell_m} \ \middle|\ m_{\rm max} \in{\mathbb N},\ c_0,c_{\ell_1\cdots\ell_m}\in{\mathbb C} \right\}. \label{eq:unital_alg}
\ee
We define its commutant and bicommutant by
\be
{\frak O}^\prime:=\{X \in {\rm End}({\cal H})\,|\,[X,\sigma]=0\ \text{for all }\sigma\in {\frak O}\},
\qquad
{\frak O}^{\prime\prime}:=\{Y\in {\rm End}({\cal H})\,|\,[Y,X]=0\ \text{for all }X\in {\frak O}^\prime\}.
\ee
Intuitively, these algebraic objects can be understood as follows: 
The observable algebra ${\frak O}$ represents the space of all possible operations and measurements accessible via the given set of observables ${\bf O}$. Conversely, it defines the strict boundary of what can be observed. 
The commutant ${\frak O}^\prime$ consists of all operators that commute with ${\frak O}$. Physically, these operators do not interfere with the observables in ${\bf O}$, and therefore constitute an invisible sector (or invisible sector) entirely inaccessible to the chosen measurement channels. 
The bicommutant ${\frak O}^{\prime \prime}$ then describes all operators that are isolated from this invisible sector. 
In finite dimension, ${\frak O}^{\prime\prime}$ identically coincides with ${\frak O}$ by von Neumann's double commutant theorem, representing the canonical maximal algebra visible from the set of observables (see, e.g., Refs.~\cite{BratteliRobinson,ZanardiLidarLloyd2004,ViolaKnillLaflamme2001} for the use of commutant structures in quantum observable algebras).

The following proposition makes the visible/invisible decomposition explicit.
\begin{proposition}[Algebraic partial realization of Floquet monodromy matrix] \label{prop:Alg_rel_M}
  Let $\mathcal H\cong \mathbb C^N$ be a finite-dimensional Hilbert space, $M\in GL(N,{\mathbb C})$ be an unknown Floquet monodromy matrix acting on $\mathcal H$, and ${\bf O}=\{O_1,\cdots,O_L\}$ be a set of known Hermitian observables.
  In addition, suppose ${\frak O}$ be the unital $*$-algebra generated by ${\bf O}$ and that ${\frak O}\subsetneq {\rm End}({\cal H})$ is a proper subalgebra.
  Given the family of OTSs restricted to ${\frak O}$,
\be
\zeta_n^{(\sigma)}:=\Tr(\sigma M^n),\qquad \sigma\in {\frak O},\ \ n\in {\mathbb N}_0,
\ee
the following statements hold:
\begin{enumerate}[label=(P1-\arabic*)]
\item \label{prop:p1-1}
  In general, the full monodromy matrix $M$ is not identifiable from the family of restricted OTSs, $\{\zeta_n^{(\sigma)}\}_{\sigma\in {\frak O},\,n\ge 0}$.

\item \label{prop:p1-2}
There exists a canonical visible representative
\be
M_{\rm vis} = {\mathbb E}_{{\frak O}^{\prime\prime}}(M)\in {\frak O}^{\prime\prime}, \label{eq:Mvis_p1-2}
\ee
where $\mathbb E_{{\frak O}^{\prime\prime}}$ denotes the finite-dimensional conditional expectation onto ${\frak O}^{\prime\prime}$, which is equivalently the orthogonal projection onto ${\frak O}^{\prime\prime}$ with respect to the Frobenius inner product $\langle A, B \rangle_F := \Tr (A^\dagger B)$. This representative is uniquely determined by the family of restricted OTSs.

\item \label{prop:p1-3}
The invisible remainder
\be
\Delta M:=M-M_{\rm vis}
\ee
is Frobenius-orthogonal to ${\frak O}^{\prime\prime}$, and hence, in particular
\be
\Tr(\sigma\,\Delta M)=0, \qquad \forall\sigma\in {\frak O}.
\ee
Thus, $\Delta M$ is trace-invisible relative to the observable algebra. More generally, the same visible/invisible decomposition applies to each power $M^n$ separately.
\end{enumerate}
\end{proposition}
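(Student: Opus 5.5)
The plan rests on one finite-dimensional fact: ${\frak O}$ is a unital $*$-subalgebra of ${\rm End}({\cal H})$ (closed under adjoints because the generators $O_\ell$ are Hermitian), so by von Neumann's double commutant theorem ${\frak O}^{\prime\prime}={\frak O}$. It is therefore also a linear subspace of ${\rm End}({\cal H})\cong{\mathbb C}^{N^2}$ that is closed under $\dagger$. I will first set up the Frobenius-orthogonal decomposition ${\rm End}({\cal H})={\frak O}^{\prime\prime}\oplus({\frak O}^{\prime\prime})^{\perp_F}$ and define ${\mathbb E}_{{\frak O}^{\prime\prime}}$ as the orthogonal projector onto the first summand. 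To justify calling it the conditional expectation, I will check that it is unital, idempotent, and ${\frak O}^{\prime\prime}$-bimodular, i.e.\ ${\mathbb E}(aXb)=a{\mathbb E}(X)b$ for $a,b\in{\frak O}^{\prime\prime}$. Bimodularity follows from $\langle c,aXb\rangle_F=\langle a^\dagger c b^\dagger,X\rangle_F$ together with $a^\dagger c b^\dagger\in{\frak O}^{\prime\prime}$. With this in hand, (P1-2) and (P1-3) reduce to properties of the projector, and (P1-1) needs an explicit counterexample.

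For (P1-3), define $\Delta M:=M-{\mathbb E}(M)$. It lies in $({\frak O}^{\prime\prime})^{\perp_F}$ by construction. For $\sigma\in{\frak O}$ we have $\sigma^\dagger\in{\frak O}$ since ${\frak O}$ is a $*$-algebra, so $\Tr(\sigma\Delta M)=\langle\sigma^\dagger,\Delta M\rangle_F=0$. The same argument applied to $M^n$ gives $M^n={\mathbb E}(M^n)+\Delta_n$ with $\Tr(\sigma\Delta_n)=0$. I will note explicitly that ${\mathbb E}(M^n)\neq{\mathbb E}(M)^n$ in general, which is why the statement says "each power separately".

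For (P1-2), uniqueness comes from the pairing $(\sigma,X)\mapsto\Tr(\sigma X)$ restricted to ${\frak O}\times{\frak O}^{\prime\prime}$, which is nondegenerate: if $X\in{\frak O}^{\prime\prime}$ and $\Tr(\sigma X)=0$ for all $\sigma$, take $\sigma=X^\dagger\in{\frak O}$ to get $\|X\|_F^2=0$. The data $\zeta_1^{(\sigma)}=\Tr(\sigma M)=\Tr(\sigma M_{\rm vis})$, by (P1-3), therefore fix $M_{\rm vis}$ uniquely. Constructively, I will choose an orthonormal basis $\{B_k\}$ of ${\frak O}^{\prime\prime}$ and write $M_{\rm vis}=\sum_k\Tr(B_k^\dagger M)B_k=\sum_k\zeta_1^{(B_k^\dagger)}B_k$. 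Canonicity means that $M_{\rm vis}$ is independent of the choice of basis and of the generating set ${\bf O}$, since it depends only on ${\frak O}$.

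For (P1-1), I need $M_1\neq M_2$ with identical restricted OTSs. Because ${\frak O}$ is proper, ${\frak O}^\prime$ is nontrivial. Structure theory gives ${\frak O}\cong\bigoplus_i{\mathbb I}_{m_i}\otimes{\rm End}({\mathbb C}^{n_i})$ up to unitary conjugation; since ${\frak O}$ is proper, either some $m_i\ge2$ or there are at least two blocks. If there are two blocks, say ${\frak O}={\mathbb C}P_1\oplus{\mathbb C}P_2$ in the simplest case, take $M_2=M_1+\epsilon E_{12}$ with $E_{12}$ an off-diagonal block map between the two sectors. This has zero trace against every element of ${\frak O}$, and its powers contribute nothing either if $M_1$ is chosen block-diagonal, because every power of $M_1+\epsilon E_{12}$ has the same diagonal blocks. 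If instead some $m_i\ge2$, take $M_1$ block-diagonal and $M_2=(W\otimes{\mathbb I})M_1(W\otimes{\mathbb I})^{-1}$ with $W$ a non-scalar invertible element acting on the multiplicity factor, i.e.\ an element of ${\frak O}^\prime$ chosen so that $M_2\neq M_1$. Cyclicity of the trace and $[W,\sigma]=0$ then give $\Tr(\sigma M_2^n)=\Tr(\sigma M_1^n)$. Invertibility is preserved for small $\epsilon$ in the first case and automatically in the second.

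The main obstacle is making (P1-1) uniform over all proper subalgebras. The second construction works verbatim whenever some $m_i\ge2$, but the first needs care: a generic $M_1$ is not block-diagonal, so I must exhibit a specific $M_1$, which the statement's "in general" permits. I will present the block-triangular perturbation as the general mechanism, since nilpotent cross-sector couplings never reach the block diagonal, where ${\frak O}$ reads the trace.
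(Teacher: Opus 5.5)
Your proposal is correct. Your treatment of (P1-3) is the same as the paper's: orthogonality of $\Delta M$ to ${\frak O}^{\prime\prime}\supset{\frak O}$, then $*$-closure to trade $\sigma^\dagger$ for $\sigma$. For (P1-2) and (P1-1), however, you take a different route.

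The paper works throughout in the Wedderburn form ${\cal H}=\bigoplus_j({\cal H}_j\otimes{\cal M}_j)$. It defines ${\mathbb E}_{{\frak O}^{\prime\prime}}(X)=\bigoplus_j\Tr_{{\cal M}_j}(\Pi_jX\Pi_j)\otimes{\mathbb I}_{m_j}/m_j$ explicitly and checks that this is the Frobenius projection. It then gets uniqueness of $M_{\rm vis}$ block by block from nondegeneracy of the trace pairing on each ${\rm End}({\cal H}_j)$.

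You instead define ${\mathbb E}_{{\frak O}^{\prime\prime}}$ abstractly as the orthogonal projector. You use ${\frak O}={\frak O}^{\prime\prime}$ to get nondegeneracy in one line via $\sigma=X^\dagger$, and you add a bimodularity check that the paper does not carry out. The paper's formula shows concretely what is discarded: the inter-sector blocks and, within each block, everything beyond the $1/m_j$-normalized partial trace over ${\cal M}_j$. Your version is basis-free and gives the data-level formula $M_{\rm vis}=\sum_k\zeta_1^{(B_k^\dagger)}B_k$ directly.

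For (P1-1), your argument is more complete than the paper's. The paper only asserts that the reduced block data cannot see operators inside multiplicity spaces or operators mixing sectors. That does not by itself produce two distinct matrices whose \emph{every} power has the same reduced data. For example, an additive traceless perturbation ${\mathbb I}_{d_j}\otimes z$ inside a multiplicity space is invisible at $n=1$, but in general reappears at order $\epsilon^2$ in $\Tr(\sigma M^2)$. Your commutant conjugation and your one-sided block-triangular perturbation control all $n$ at once.

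Finally, the case split you flag as the main obstacle is unnecessary. Since ${\frak O}$ is proper, the double commutant theorem gives ${\frak O}^{\prime}\neq{\mathbb C}{\mathbb I}_N$. So the $*$-algebra ${\frak O}^{\prime}$ contains a nontrivial projection $P$, for instance a central $\Pi_j$ when $J\ge 2$. With $W={\mathbb I}_N+P$, the matrix $M_2=WM_1W^{-1}$ has the same restricted OTSs as $M_1$ for every invertible $M_1$ with $[M_1,W]\neq 0$. This works uniformly over all proper subalgebras.
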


The role of Sec.~\ref{sec:MGF_HM} and Sec.~\ref{sec:multi_obs} was to describe how a full-rank realization can be built from finite OTS samples when the observable information is sufficiently large.
Prop.~\ref{prop:Alg_rel_M} clarifies the complementary algebraic limitation. 
Statement \ref{prop:p1-1} means that if the observable algebra itself is too small, even an ideal and infinite family of restricted OTSs cannot determine the full monodromy matrix.
Statement \ref{prop:p1-2} shows that what remains accessible is strictly the canonical visible representative inside ${\frak O}^{\prime\prime}$. 
Consequently, statement \ref{prop:p1-3} implies that the difference between the true dynamics and this representative leaves a trace-invisible remainder, completely hiding a sector of the dynamics from the chosen observable channels.

We emphasize that the canonical visible representative $M_{\rm vis}$ in Eq.\eqref{eq:Mvis_p1-2} is conceptually distinct from the full-rank similarity realization $M_{\rm rel}$ discussed in Sec.~\ref{sec:multi_obs}. 
As we described, the latter (Hankel realization) relies on the dynamical time-series to preserve the exact spectrum of $M$ up to similarity.
In contrast, the former (algebraic projection) intrinsically traces out the invisible sectors statically, which alters the spectrum unless the observable family generates the full matrix algebra. 
These two complementary perspectives converge, meaning the exact spectral realization identically reconstructs the full algebraic state space without any invisible remainder, when the family of observables is large enough to generate the full matrix algebra (${\frak O} = {\rm End}({\cal H})$). In this maximal limit, $M_{\rm vis}=M$ and the full monodromy matrix becomes completely identifiable.

This algebraic framework naturally extends to the inclusion of varying initial states $\Omega$ in Sec.~\ref{sec:Liouville-space}, which play a dual role to the observables $O$ by defining a symmetric projection structure on the input side.

\subsection{Micromotion and symmetry} \label{sec:micro_symm}
A major technical challenge in the realization problem is that uniquely fixing all $N^2$ components of the similarity transformation $S$ in Eq.\eqref{eq:Mrel_SMSinv} poses an extremely stringent requirement on the static observable set. 
To bypass this limitation, one can exploit continuous micromotion $U(t,0)$ as a dynamical gauge transformation, Eq.\eqref{eq:def_M(t)}, which generates a time-shifted family of observables and effectively expands the reconstructible operator space.
However, if the system possesses an exact commuting symmetry compatible with the initial observable algebra, the micromotion-generated family remains restricted by a non-removable algebraic deficiency within that symmetry sector.
In this subsection, we examine how micromotion systematically enlarges the visible operator algebra, while showing that exact commuting symmetries can leave a residual invisible subspace. This residual subspace is measured by a positive algebraic deficiency $\Delta_{\rm obs}>0$ as long as the exact commuting symmetry and the initial observable algebra are kept fixed.

Let us return to the gauge transformation in Eq.\eqref{eq:def_M(t)}. 
Since the OTS is not gauge-invariant for a general observable, the time-shifted observables $O(t)$ may themselves be used as an enlarged observable family. 
In the Liouville-space notation introduced in Sec.~\ref{sec:Liouville-space}, the time-shifted OTS is expressed as
\be
\zeta^{(O)}_{n,t} = \llangle O(t) | M^n \otimes {\mathbb I}_N| {\mathbb I}_N \rrangle = \llangle O(t)  M^n | {\mathbb I}_N \rrangle, \label{eq:zetaOnt_Liouville}
\ee
where we assumed $\Omega = {\mathbb I}_N$.
If $O$ is an algebraically generative observable that does not commute with the Floquet dynamics, the time-shifted operator $O(t)$ traverses the $N^2$-dimensional dual operator space through the gauge transformation $U(t,0)$. 
This dynamical trajectory allows a single observable channel to effectively scan the $N^2$ dimensions of the Liouville space. 
In consequence, the micromotion provides sufficient algebraic constraints to lock the similarity transformation $S$ relative to the visible operator space, thereby enabling the reconstruction of the full algebraic skeleton. 
In the context of Floquet theory, it is well recognized that micromotion is not merely a gauge artifact but carries essential physical information, such as anomalous topological signatures that cannot be captured by the stroboscopic monodromy matrix alone~\cite{RudnerLindnerBergLevin2013, Harper2020}. 
In our algebraic framework, micromotion plays an analogous role for observability: shifting the Floquet reference time dynamically enlarges the accessible observable algebra ${\frak O}_{\rm ext}$, thereby increasing the observable dimension $D_{\rm obs}$ toward the full $N^2$ limit, up to the bound imposed by the system's exact symmetries and spectral degeneracies.

However, micromotion does not remove all obstruction.
From the perspective of quantum control and algebraic system identification, it is a fundamental fact that exact symmetries commuting with both the dynamics and the initial observables strictly bound the generated observability algebra, imposing symmetry-protected invisible sectors~\cite{AlbertiniDAlessandro2003,BurgarthBoseBruderGiovannetti2009}.
The following proposition formalizes this interplay, i.e., micromotion can enlarge the visible sector up to its algebraic limit, but exact commuting symmetries impose a strict, non-removable algebraic deficiency even after the full time-shift:
\begin{proposition}[Symmetry indicator via algebraic dimension deficiency] \label{prop:symmetry_indicator}
Let ${\frak O}_{\rm ext}$ be the extended unital $*$-algebra generated by the time-shifted observables:
\be
{\frak O}_{\rm ext} := \left\{ c_0{\mathbb I}_N + \sum_{m=1}^{m_{\rm max}} \sum_{a_1,\dots,a_m} c_{a_1\cdots a_m} X_{a_1}\cdots X_{a_m} \ \middle|\ m_{\rm max}\in{\mathbb N},\ c_0,c_{a_1\cdots a_m}\in{\mathbb C},\ X_{a_r}\in{\cal G}_{\rm ext} \right\}, \label{eq:Oext_alg}  
\ee
where
\be
{\cal G}_{\rm ext} := \left\{ O(t),\,O(t)^\dagger \ \middle|\ O\in{\bf O},\ t\in[0,T) \right\}, \qquad O(t):=U(t,0)^{-1}OU(t,0). \label{eq:Oext_generators}
\ee
Let $D_{\rm obs}$ denote the number of linearly independent matrix elements of $M$, which we call observable dimension, reconstructible from the family of extended OTSs.
Then the following statements hold:
\begin{enumerate}[label=(P2-\arabic*)]
\item \label{prop:p2-1}
  The observable dimension $D_{\rm obs}$ is bounded by the dimension of the extended bicommutant ${\frak O}_{\rm ext}^{\prime\prime}$:
\be
D_{\rm obs}\le \dim({\frak O}_{\rm ext}^{\prime\prime})\le N^2.
\ee

\item \label{prop:p2-2}
Suppose there exists a nontrivial symmetry operator ${\cal Q}$ ($ {\cal Q}\neq c {\mathbb I}_N$) such that
\be
[\mathcal Q,U(t,0)]=0 \ \ \text{for all} \ \  t \in [0,T),
\qquad
[\mathcal Q,O]=0 \ \ \text{for all} \ \ O \in {\bf O}.
\ee
Then, ${\cal Q}$ survives the micromotion extension and belongs to the extended commutant:
\be
{\cal Q}\in {\frak O}_{\rm ext}^{\prime}.
\ee

\item \label{prop:p2-3}
Consequently, such an exact symmetry forces the bicommutant ${\frak O}_{\rm ext}^{\prime\prime}$ to be a proper subspace of ${\rm End}({\cal H})$, so that the invisible dimension $\Delta_{\rm obs}$ is bounded as
\be
\Delta_{\rm obs} = N^2-\dim({\frak O}_{\rm ext}^{\prime\prime})>0.
\ee
Thus, an exact commuting symmetry enforces a nonzero algebraic deficiency even after micromotion extension.
\end{enumerate}
\end{proposition}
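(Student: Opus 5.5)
The plan is to prove the three statements in order, using only finite-dimensional linear algebra and the definitions in Eqs.~\eqref{eq:Oext_alg}--\eqref{eq:Oext_generators}.

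For \ref{prop:p2-1}, the extended OTS family is $\zeta^{(O)}_{n,t}=\Tr(O(t)M^n)$. By linearity in the first slot, the accessible data are exactly the functionals $X\mapsto \Tr(\sigma X)$ with $\sigma\in\mathrm{span}\,{\cal G}_{\rm ext}\subseteq{\frak O}_{\rm ext}$, evaluated on the powers $M^n$. Including products of generators can only enlarge this set, so I will bound the data by the larger family $\{\Tr(\sigma M^n)\}_{\sigma\in{\frak O}_{\rm ext}}$. I then apply Prop.~\ref{prop:Alg_rel_M} with ${\frak O}$ replaced by ${\frak O}_{\rm ext}$; that argument uses only that ${\frak O}$ is a unital $*$-algebra. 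It shows that the data determine $M^n$ only through ${\mathbb E}_{{\frak O}_{\rm ext}''}(M^n)$. Equivalently, the data determine only the image of $M^n$ under the linear map $X\mapsto(\Tr(\sigma_i X))_i$, where $\{\sigma_i\}$ is a basis of ${\frak O}_{\rm ext}$. The rank of this map is $\dim{\frak O}_{\rm ext}$. By the double commutant theorem in finite dimension, ${\frak O}_{\rm ext}={\frak O}_{\rm ext}''$. The number of independent linear combinations of matrix elements that can be fixed is therefore at most $\dim{\frak O}_{\rm ext}''\le\dim\End({\cal H})=N^2$. This gives $D_{\rm obs}\le\dim({\frak O}_{\rm ext}'')\le N^2$.

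For \ref{prop:p2-2}, I check commutation on generators. If $[{\cal Q},U(t,0)]=0$ then $[{\cal Q},U(t,0)^{-1}]=0$, so
\[
{\cal Q}O(t)=U^{-1}{\cal Q}OU=U^{-1}O{\cal Q}U=O(t){\cal Q}.
\]
The adjoint $O(t)^\dagger$ needs separate care, because $U$ is non-unitary in general. I will use $O^\dagger=O$ (the observables are Hermitian) to write $O(t)^\dagger=U^\dagger O U^{-\dagger}$. Commutation of ${\cal Q}$ with $U^\dagger$ does not follow from $[{\cal Q},U]=0$ when ${\cal Q}$ is not normal. I expect this to be the main obstacle. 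The plan is to show instead that ${\cal Q}$ and ${\cal Q}^\dagger$ together lie in the commutant, since ${\frak O}_{\rm ext}'$ is $*$-closed whenever ${\frak O}_{\rm ext}$ is. I would first try to prove $[{\cal Q}^\dagger,O(t)^\dagger]=([O(t),{\cal Q}])^\dagger=0$. This only gives ${\cal Q}^\dagger$ commuting with $O(t)^\dagger$, not ${\cal Q}$ itself. So I will adopt the natural reading that ${\cal Q}$ is a Hermitian (or normal) symmetry, or equivalently that the symmetry condition is imposed on the $*$-pair $\{{\cal Q},{\cal Q}^\dagger\}$. Under that reading, ${\cal Q}$ commutes with $O(t)^\dagger=(O(t))^\dagger$ because ${\cal Q}^\dagger$ commutes with $O(t)$, by the same computation applied to ${\cal Q}^\dagger$. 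Commutation with all generators then passes to all polynomials in them, so ${\cal Q}\in{\frak O}_{\rm ext}'$.

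For \ref{prop:p2-3}, since ${\cal Q}\in{\frak O}_{\rm ext}'$ and ${\cal Q}\ne c\,{\mathbb I}_N$, the commutant is strictly larger than ${\mathbb C}{\mathbb I}_N=\End({\cal H})'$. Taking commutants reverses inclusions, and in finite dimension $({\frak A}')'={\frak A}$ for unital $*$-algebras. If ${\frak O}_{\rm ext}''$ were all of $\End({\cal H})$, its commutant would equal ${\frak O}_{\rm ext}'''={\frak O}_{\rm ext}'={\mathbb C}{\mathbb I}_N$, which contradicts the existence of ${\cal Q}$. Hence ${\frak O}_{\rm ext}''\subsetneq\End({\cal H})$, so $\dim{\frak O}_{\rm ext}''<N^2$ and $\Delta_{\rm obs}>0$. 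A concrete witness is that ${\cal Q}$ itself is a nonscalar element outside the center of $\End({\cal H})$, so ${\frak O}_{\rm ext}''$ misses at least the elements not commuting with ${\cal Q}$. Combined with \ref{prop:p2-1}, this gives $D_{\rm obs}\le N^2-\Delta_{\rm obs}<N^2$.
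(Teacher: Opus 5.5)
Your three steps follow the paper's own route. For (P2-1) you use the visible projection of Prop.~\ref{prop:Alg_rel_M} together with the finite-dimensional double commutant theorem. For (P2-2) you use the conjugation identity $\mathcal{Q}\,O(t)=U^{-1}\mathcal{Q}\,O\,U=O(t)\,\mathcal{Q}$, which is the paper's Lemma~\ref{lem:micro_preserve}. For (P2-3) you use $\End(\mathcal{H})'=\mathbb{C}\,\mathbb{I}_N$, and your triple-commutant step makes explicit what the paper leaves implicit.

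The one real divergence is at the generators $O(t)^\dagger$, and there you are right and the paper is not. Lemma~\ref{lem:micro_preserve} checks only $[\mathcal{Q},O(t)]=0$ and then asserts that $\mathcal{Q}$ commutes with all of $\mathfrak{O}_{\rm ext}$. Your extra hypothesis is necessary, not cosmetic: without it (P2-2) and (P2-3) are false. On $\mathbb{C}^2\otimes\mathbb{C}^2$ take
\[
P_1=\begin{pmatrix}1&-1\\0&0\end{pmatrix},\quad P_2=\begin{pmatrix}0&1\\0&1\end{pmatrix},\quad \mathcal{Q}=(P_1+2P_2)\otimes\mathbb{I}_2,\quad O=\mathbb{I}_2\otimes\sigma_z ,
\]
with the constant Hamiltonian $\widehat{H}=P_2\otimes\sigma_x$ and $T=\pi$. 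Since $P_1,P_2$ are complementary non-orthogonal idempotents, one has
\[
U(t,0)=P_1\otimes\mathbb{I}_2+P_2\otimes e^{-it\sigma_x},\qquad O(t)=\mathbb{I}_2\otimes\sigma_z+P_2\otimes\bigl[(\cos 2t-1)\sigma_z+\sin 2t\,\sigma_y\bigr],
\]
with $[\mathcal{Q},U(t,0)]=0$ and $[\mathcal{Q},O]=0$. Because $[P_1^\dagger+2P_2^\dagger,P_2]\neq 0$, one finds $[\mathcal{Q},O(t)^\dagger]=-\bigl([\mathcal{Q}^\dagger,O(t)]\bigr)^\dagger\neq 0$ for $t\in(0,\pi)$. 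Worse, the differences $O(t)-O(0)$ and their products give $P_2\otimes\End(\mathbb{C}^2)\subset\mathfrak{O}_{\rm ext}$, and $*$-closure adds $P_2^\dagger\otimes\End(\mathbb{C}^2)$. Since $P_2,P_2^\dagger,P_2P_2^\dagger,P_2^\dagger P_2$ span $\End(\mathbb{C}^2)$, this forces $\mathfrak{O}_{\rm ext}=\End(\mathcal{H})$ and $\Delta_{\rm obs}=0$. This is not academic for the paper: its own $\mathcal{Q}_{\mathcal{P}'}=\mathcal{S}\mathcal{P}\mathcal{S}^{-1}$ is non-normal whenever $h\neq 0$.

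Two corrections to your patch. First, for a normal but non-Hermitian $\mathcal{Q}$, ``the same computation applied to $\mathcal{Q}^\dagger$'' needs $[\mathcal{Q}^\dagger,U(t,0)]=0$. That does not follow from $[\mathcal{Q},U(t,0)]=0$ without an argument; it comes from Fuglede's theorem (in finite dimension, $\mathcal{Q}^\dagger=p(\mathcal{Q})$ for an interpolating polynomial $p$). By contrast, $[\mathcal{Q}^\dagger,O]=0$ is automatic because $O=O^\dagger$. Second, requiring that both $\mathcal{Q}$ and $\mathcal{Q}^\dagger$ commute with every $U(t,0)$ is not ``equivalent'' to normality. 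Normality implies it via Fuglede, but it does not require normality, so it is the more general hypothesis and the one to state. With that hypothesis, your argument for all three items is complete.
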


Physically, statement \ref{prop:p2-1} means that although micromotion dynamically enlarges the observable algebra, the resulting visible operator space remains strictly bounded by the algebraic structure of the initial observables.
Statement \ref{prop:p2-2} shows that if an exact symmetry commutes with both the system's dynamics and the initial observables, it also commutes with the time-shifted observables, remaining completely invisible to the time shift.
Consequently, Prop.~\ref{prop:symmetry_indicator} implies that such a commuting symmetry leaves a fixed deficiency in the reconstructible operator space, isolating a symmetry-protected sector that remains strictly unobservable regardless of the amount of time-series data collected.

Notice that the logical direction of Prop.~\ref{prop:symmetry_indicator} is one-way.
An exact symmetry guarantees a nonzero algebraic deficiency after micromotion extension.
The converse is not true in general.
A deficiency may also arise from an insufficiently rich set of observables, accidental degeneracies, or other visibility losses unrelated to a genuine symmetry.
For this reason, the algebraic deficiency should be interpreted as a symmetry-protected invisible sector when an exact commuting symmetry is present, rather than as a necessary and sufficient condition for symmetry.

As we will discuss more precisely in Sec.~\ref{sec:reconstruct_Mex}, the observable dimension $D_{\rm obs}$ can be defined by the dimension of the subspace spanned by the extended observable algebra ${\frak O}_{\rm ext}$. In the Liouville space notation introduced in Sec.~\ref{sec:Liouville-space}, this is given by
\be
D_{\rm obs} := {\rm rank} ( \Phi ), \qquad \Phi := (|O_1 \rrangle, |O_2 \rrangle, \dots ), \qquad O_{a} \in {\bf O}_{\rm ext}. \label{eq:Dobs_def}
\ee
This quantity represents the number of independent matrix elements of $M$ that can be uniquely determined from the given set of observables, and it can be evaluated numerically through the rank $\Phi$.

\subsection{Reconstruction of the exact monodromy matrix} \label{sec:reconstruct_Mex}
We now address a stronger class of inverse problems than the similarity-class reconstruction developed in Sec.~\ref{sec:multi_obs}. 
Instead of recovering $M$ up to an arbitrary basis choice, our goal here is to determine the exact matrix elements of the monodromy matrix in a designated, fixed physical basis.
This basis-resolved reconstruction becomes possible when the family of time-shifted observables $\{O_\ell(t_q)\}$ is sufficiently rich to span $\End({\cal H})$ and is explicitly known in the target basis.
The latter condition can be met by a sufficiently rich static observable set $\{O_\ell\}$ alone, or more typically when only a few static probes are accessible, by a continuous micromotion $U(t,0)$ that is well-characterized or independently calibrated.
In the following, we provide the step-by-step reconstruction procedures for both the SS and EP cases.

In order to reconstruct the monodromy $M$ in practical computations, one can incorporate the micromotion by discretizing the continuous time parameter $t \in [0, T)$ into $Q$ slices.
  The Hankel matrix analysis introduced in Eqs.\eqref{eq:def_zetaOn}\eqref{eq:def_HNEN_multiO} is then systematically expanded by upgrading the static observable set ${\bf O}$ to the time-shifted extended set:
\be
&& {\bf O}_{\rm ext} = \{O_\ell(t_q) \}_{\ell=1,q=0}^{L,Q-1}, \qquad O_\ell(t_q)=U(t_q,0)^{-1} O_\ell U(t_q,0), \nl
&&   t_q := q \Delta t, \qquad  \Delta t = \frac{T}{Q} \ \ \text{with} \ \ Q \in {\mathbb N}. \label{eq:def_obs_dt_ext}
\ee
Correspondingly, the OTS vector and the composite Hankel matrix are enlarged from Eqs.\eqref{eq:def_zetaOn}\eqref{eq:def_HNEN_multiO} as
\be
\bm{\zeta}_n^{({\bf O})} \  \rightarrow  \ \bm{\zeta}_n^{({\bf O}_{\rm ext})} \in{\mathbb C}^{L Q}, \qquad \widetilde{\bf H}^{({\bf O})}_{{\bf n},N}  \ \rightarrow  \ \widetilde{\bf H}^{({\bf O}_{\rm ext})}_{{\bf n},N} \in{\mathbb C}^{LQ(N+1)\times(N+1)}. \label{eq:def_HNEN_multiO_ext}
\ee
Performing the SVD yields the decomposition as $\widetilde{\bf H}^{({\bf O}_{\rm ext})}_{{\bf n},N} = U_N \Sigma_N V^{\dagger}_N$ with $\Sigma_N={\rm diag}(\sigma_1,\cdots,\sigma_N)\in{\mathbb R}^{N\times N}$, $U_N\in{\mathbb C}^{QL(N+1)\times N}$, and $V_N\in{\mathbb C}^{(N+1)\times N}$ (see App.~\ref{app:Mrel_proof} for details).
Below, we assume that $LQ\ge N^2$ and $D_{\rm obs}=N^2$ by taking $Q$ large enough and choosing a suitable ${\bf O}$, so that the sampled time-shifted observables resolve the full visible operator space without additional temporal aliasing, namely without distinct Liouville-frequency components becoming indistinguishable on the chosen time grid, or degeneracy.
Under this assumption, the reconstruction of $M$ can be performed from the SVD profile $(\Sigma_N,U_N,V_N)$.

The main strategy is to consider the decomposition of $\widetilde{\bf H}^{({\bf O}_{\rm ext})}_{{\bf n},N}$ in Eq.\eqref{eq:def_HNEN_multiO_ext} as
\be
\widetilde{\bf H}^{({\bf O}_{\rm ext})}_{{\bf n},N} = {\cal A} \cdot {\cal B}, \label{eq:Hankel_AB}
\ee
and ${\cal A}\in {\mathbb C}^{LQ(N+1) \times N}$ and ${\cal B} \in {\mathbb C}^{N \times (N+1)}$ are determined by imposing a condition.
For the SS cases, it is given by
\be
   {\cal A}_{\alpha j} = \llangle X_{\alpha} | P_j \rrangle, \qquad {\cal B}_{jk} = \lambda_{j}^{k}, \qquad X_{\alpha} := O_\ell (t_q) M^{n} \ \ \text{with} \ \ \alpha = (n,\ell,q), \ \ n \in {\bf n}, \label{eq:A_B_cond}
\ee
where $P_j$ denotes the projection operator in Eq.\eqref{eq:proj}.
Recall that ${\bf n} \in {\mathbb N}_0^{N+1}$ with $n_{j_1} \ne n_{j_2}$ for all $j_1 \ne j_2$ in Eq.\eqref{eq:def_HNEN}.
Indeed, using ${\cal A}$ and ${\cal B}$, the time-shifted OTS can be written by
\be
\zeta^{(O_\ell)}_{n+k,t_q} = \llangle X_{(n+k,\ell,q)} | {\mathbb I}_N \rrangle = \sum_{j =1 }^N  \llangle X_{(n,\ell,q)} | P_j \rrangle \lambda_j^k = \sum_{j=1}^N {\cal A}_{\alpha j} {\cal B}_{jk}. 
\ee
From the SVD profile $(\Sigma_N, U_N, V_N)$, ${\cal A}$ and ${\cal B}$ can be written by
\be
   {\cal A} = U_N \Sigma_N {\cal G}^{-1}, \qquad {\cal B} = {\cal G} V^{\dagger}_N, \qquad {\cal G} \in {\mathbb C}^{N \times N}, \label{eq:AGBG}
\ee 
and one can determine ${\cal G}$ to satisfy Eq.\eqref{eq:A_B_cond}.
This can be performed by using Estimation of Signal Parameters via Rotational Invariance Techniques (ESPRIT)~\cite{Paulraj1985}, together with a diagonal eigenvalue-matrix $\Lambda$:
\be
  {\cal V}_1 {\cal V}_0^{-1} =   {\cal G}^{-1} \Lambda {\cal G}  \quad \text{for SS} 
\qquad {\cal V}_{n=0,1} = (V_N^{\dagger})_{1 \le  \ell \le N , n+1 \le  k \le n+N}. \label{eq:V0V1inv}
\ee
By finding ${\cal G}$ by ESPRIT, ${\cal A}$ is obtained, and then, we compute $| P_j \rrangle$ from ${\cal A}$. 
Here, we define
\be
{\cal K}_{0} =
\begin{pmatrix}
  \llangle X_{({0},1,0)} | \\
  \vdots \\
  \llangle X_{({0},L,Q-1)} |
\end{pmatrix} \in {\mathbb C}^{LQ \times N^2}, \qquad {\bf a}^{(n)}_j  := \lambda_j^{n}
\begin{pmatrix}
  \llangle X_{({0},1,0)} | P_j \rrangle\\
  \vdots \\
  \llangle X_{({0},L,Q-1)} | P_j \rrangle
\end{pmatrix} \in {\mathbb C}^{LQ}, \label{eq:K_a}
\ee
where ${\bf a}^{(n)}_j$ is embedded as a ($N^2$-dimensional projected) column vector in ${\cal A}$. 
Since ${\cal K}_0$ is known, $| P_j \rrangle$ is determined as
\be
{\cal K}_0 | P_j \rrangle = \lambda_j^{-n} {\bf a}^{(n)}_j \quad \Rightarrow \quad 
| P_j \rrangle = \lambda_j^{-n} {\cal K}_0^{+} {\bf a}^{(n)}_j \in {\mathbb C}^{N^2}, \label{eq:K0Pj}
\ee
with pseudo-inverse ${\cal K}_0^+=({\cal K}_0^\dagger {\cal K}_0)^{-1} {\cal K}_0^\dagger$.
The label of bases in the Liouville space expression is directly assigned from the index of the matrix representation of $GL(N,{\mathbb C})$, which can be seen from the basis expansion,  i.e. $| P_j \rrangle = \sum_{ k_1, k_2 =1}^N (P_j)_{k_1,k_2} | u_{k_1} \rangle \otimes | \widetilde{u}_{k_2} \rangle^{*}$, and as a result, $M$ is realized from Eq.\eqref{eq:MPj}.


For the EP cases, as one can see from Eq.\eqref{eq:M_EP_P_N}, the monodromy matrix $M$ consists of not only the projection operators but also the nilpotent matrices $N_j$.
The composition of the Hankel matrix in Eqs.\eqref{eq:Hankel_AB}\eqref{eq:AGBG} still works, but the condition for ${\cal A}$ and ${\cal B}$ in Eq.\eqref{eq:A_B_cond} is modified as
\be
  {\cal A}_{\alpha (j,m)} =
   \begin{cases}
     \llangle X_{\alpha} | P_j \rrangle  & \ \ \text{for} \ \ m = 1 \\
     \llangle X_{\alpha} | N_j^{m-1} \rrangle  & \ \ \text{for} \ \ m > 1 
   \end{cases},
   && \qquad {\cal B}_{(j,m) k} = 
   \binom{k}{m-1} \lambda_j^{k-m+1}, \label{eq:A_B_cond_EP}
\ee
with $j \in \{1,\cdots,K\}$ and $m \in \{1,\cdots,d_j\}$.
As a result, the decomposition of ${\cal V}_1 {\cal V}_0^{-1}$ in Eq.\eqref{eq:V0V1inv} generates a Jordan normal form ${\cal J}$ instead of $\Lambda$ as
\be
  {\cal V}_1 {\cal V}_0^{-1} =    {\cal G}^{-1} {\cal J} {\cal G} & \quad \text{for EPs},
\qquad {\cal V}_{n=0,1} = (V_N^{\dagger})_{1 \le  \ell \le N , n+1 \le  k \le n+N}. \label{eq:V0V1inv_EP}
\ee
The column vector in Eq.\eqref{eq:K_a}, ${\bf a}^{(n)}_{j}$ changes as
\be
\qquad {\bf a}^{(0)}_{(j,1)}
:= 
\begin{pmatrix}
  \llangle X_{({0},1,0)} | P_j \rrangle\\
  \vdots \\
  \llangle X_{({0},L,Q-1)} | P_j \rrangle
\end{pmatrix} \in {\mathbb C}^{LQ}, \qquad
{\bf a}^{(0)}_{(j,m>1)}  := 
\begin{pmatrix}
  \llangle X_{({0},1,0)} | N_j^{m-1} \rrangle\\
  \vdots \\
  \llangle X_{({0},L,Q-1)} | N_j^{m-1} \rrangle
\end{pmatrix} 
\in {\mathbb C}^{LQ}, \label{eq:K_a_EP}
\ee
and ${\bf a}^{(n)}_{(j,m)}$ with a higher $n$ can be obtained by acting the Jordan from ${\cal J}$ as
\be
   {\cal A}^{(n)}_{j} =  {\cal A}^{(0)}_{j} {\cal J}^n, \qquad {\cal A}^{(n)}_j =\left( {\bf a}^{(n)}_{(j,1)}, \cdots, {\bf a}^{(n)}_{(j,d_j)}  \right) \in {\mathbb C}^{QL \times d_j}. \label{eq:AnjA0j}
\ee
By extracting ${\cal A}^{(n)}_j$ from ${\cal A}$ and constructing ${\cal A}^{(0)}_j$ from Eq.\eqref{eq:AnjA0j}, one can find
\be
| P_j \rrangle = {\cal K}_0^{+} {\bf a}^{(0)}_{(j,1)}, \qquad | N_j^{m-1} \rrangle = {\cal K}_0^{+} {\bf a}^{(0)}_{(j,m>1)}. \label{eq:PjK_NjM}
\ee
Since the spectrum is found from the Jordan normal form ${\cal J}$, the exact monodromy matrix $M$ is realized using Eq.\eqref{eq:M_EP_P_N}.

The logic based on the micromotion for the full reconstruction is directly applicable to both the fundamental and adjoint-type OTSs incorporating the boundary/density operator $\Omega$, as discussed in Secs.~\ref{sec:ampli_type_data} and \ref{sec:adjoint_type_data}.
In these MIMO settings, the realized superoperator ${\cal M}_{\rm rel}$ belongs to $GL(N^2,{\mathbb C})$ or $SL(N^2,{\mathbb C})$, and the information required to reconstruct the original monodromy matrix is encoded in the realized operator, provided the corresponding observability and calibration conditions are satisfied.
In particular, as discussed in Sec.~\ref{sec:ampli_type_data}, this framework successfully performs the exact reconstruction of the DP cases, which cannot be fully resolved by using solely the ordinary trace-type OTS.


Finally, let us discuss the reconstructibility for $M_{\rm rel}$ in the procedure above.
In the current procedure, the observable dimension $D_{\rm obs}$ in Eq.\eqref{eq:Dobs_def} can be written using ${\cal K}_0$ in Eq.\eqref{eq:K_a} as
\be
D_{\rm obs} = {\rm rank}({\cal K}_0). \label{eq:Dobs_K0}
\ee
Here, Eq.\eqref{eq:Dobs_K0} should be understood as the sampled observable dimension associated with the chosen finite family of time-shifted observables.
In general it is bounded by the algebraic visible dimension,
\be
D_{\rm obs}\leq \dim({\frak O}_{\rm ext}^{\prime\prime}),
\ee
and equality requires that the sampled time-shifted observables span the full visible algebra without additional temporal aliasing or degeneracy.
Thus $D_{\rm obs}$ measures what is actually resolved by the chosen finite sampling setup, whereas $\dim({\frak O}_{\rm ext}^{\prime\prime})$ gives the algebraic upper bound associated with the full extended observable algebra.
This meaning and the difference from $\dim ({\frak O}_{\rm ext}^{\prime \prime})$ in Prop.~\ref{prop:symmetry_indicator} can be interpreted as follows:

Suppose a set of the projectors and nilpotent matrices ${\frak Z}$ as
\be
{\frak Z} = 
\begin{cases}
  \{ {P}_j \}_{j=1}^N & \quad \text{for SS}  \\[1mm]
  \{ {P}_j \}_{j=1}^K \cup \big\{ {N}_j^{m} \big\}_{j=1, m=1}^{K, d_j-1} & \quad \text{for EPs}
\end{cases}, \label{eq:Zk}
\ee
and define a projection operator $\Pi_{\cal K}$ from ${\cal K}_0$ and its pseudo-inverse ${\cal K}_0^+$ as
\be
\Pi_{\cal K} := {\cal K}_{0}^+ {\cal K}_0.
\ee
If ${\cal K}_0$ is full-rank, i.e. $D_{\rm obs} = N^2$, then Eqs.\eqref{eq:K0Pj}\eqref{eq:PjK_NjM} can be expressed by
\be
 | \widetilde{Z} \rrangle = \Pi_{\cal K} | Z \rrangle = |Z\rrangle \quad \text{for all} \ \ Z \in {\frak Z}.
\ee
When $D_{\rm obs}< N^2$, only the projected components $\Pi_{\cal K}|Z\rrangle$ are accessible from the sampled observable family, and the orthogonal complement remains invisible to this reconstruction procedure.
The algebraic origin of this rank deficiency can be understood by expanding the time-shifted observables $O_\ell(t_q)$ in their spectral components.
For the SS cases, the expansion is given by
\be
O_\ell(t_q) = \sum_{j_1,j_2=1}^{N} \sum_{m \in {\mathbb Z}} \Xi_{\ell,j_1 j_2}^{(m)} e^{i \Omega_{j_1 j_2}^{(m)} t_q}, \qquad \Omega_{j_1 j_2}^{(m)} = \varepsilon_{j_1} -\varepsilon_{j_2} + 2 \pi \frac{m}{T},
\ee
where the coefficient operators are defined as
\be
\Xi_{\ell,j_1 j_2}^{(m)} = P_{j_1} O^{(m)}_\ell P_{j_2}.
\ee
For the EP cases, the expansion involves polynomial factors in $t_q$ due to the nilpotent structure of the generalized eigenspaces, taking the formal appearance
\be
O_{\ell}(t_q) = \sum_{j_1,j_2=1}^K \sum_{m \in {\mathbb Z}} \sum_{k=0}^{d_{j_1} + d_{j_2}-2} \Xi_{\ell, j_1 j_2}^{(m,k)} e^{i \Omega_{j_1 j_2}^{(m)} t_q} t_q^k,
\ee
where the continuous-time expansion coefficients are strictly obtained from the Taylor series of the evolution operator as
\be
\Xi_{\ell,j_1 j_2}^{(m,k)} = \sum_{\substack{k_1,k_2 \in {\mathbb N}_0 \\ k_1+k_2=k}} \frac{i^{k_1}(-i)^{k_2}}{k_1! k_2!} P_{j_1} {\cal N}_{j_1}^{k_1} O_{\ell}^{(m)} {\cal N}_{j_2}^{k_2} P_{j_2}.
\ee
Here, ${\cal N}_j$ represents the nilpotent part of the effective Floquet Hamiltonian $H_{\rm eff}$ restricted to the $j$-th generalized subspace. 
Crucially, since ${\cal N}_j$ and the reconstructed nilpotent part of the monodromy matrix, $N_j$, are uniquely mapped to each other via a non-singular polynomial equivalence as
\be
N_j = \lambda_j \sum_{n=1}^{d_j-1} \frac{(-iT)^n}{n!} \mathcal{N}_j^n, \qquad \mathcal{N}_j = \frac{i}{T} \sum_{n=1}^{d_j-1} \frac{(-1)^{n-1}}{n} \lambda_j^{-n} N_j^n,
\ee
they generate the same nilpotent algebra relevant to the reconstruction.
Consequently, the linear operator space spanned by the exact set of coefficient operators $\Xi_{\ell,j_1j_2}^{(m,k)}$ is structurally identical to that generated directly by the discrete algebraic components of $M$, ensuring that the sampled growth of $D_{\rm obs}$ is controlled within the monodromy-centric realization framework.
Notice that, in this exact algebraic representation, the SS case is cleanly recovered as a geometric boundary condition by fixing $k=0$ and setting the maximal nilpotent degree to zero ($d_{j} = 1$). 
The set of these exact coefficient operators $\Xi_{\ell,j_1j_2}^{(m,k)}$ fully spans the linear operator space generated by the micromotion-expanded observable family, providing the linear operator space that bounds the sampled observable dimension $D_{\rm obs}$.

The coefficient operators $\Xi_{\ell,j_1j_2}^{(m,k)}$ span the linear space of operator components generated by the micromotion-expanded observable family.
This linear span should be distinguished from the full unital $*$-algebra generated by the same family.
Thus, in general, one has
\be
D_{\rm obs} = {\rm rank}({\cal K}_0) \leq \dim {\rm Span}\left\{ \llangle \Xi_{\ell, j_1 j_2}^{(m, k)} | \mid \forall \ell, m, k, j_1, j_2 \right\}
\leq \dim \left( {\frak O}_{\rm ext}^{\prime \prime} \right). \label{eq:Dobs_span_bound}
\ee
The first inequality becomes an equality when the chosen time samples resolve all independent temporal modes without aliasing.
The second inequality becomes an equality only when the linear span of the coefficient operators is already closed as the relevant visible $*$-algebra generated by the time-shifted observables.
When $D_{\rm obs}$ is smaller than the dimension of the linear span of the coefficient operators, the finite sampling cannot distinguish all algebraically available temporal modes.
A typical mechanism is temporal aliasing or frequency degeneracy.
Namely, if
\be
\Omega_{j_1 j_2}^{(m)} = \Omega_{j^\prime_1 j^\prime_2}^{(m^\prime)}
\ee
holds for $(j_1,j_2) \ne (j^\prime_1,j^\prime_2)$ and admissible $(m,m^\prime)$, then the corresponding vectorized operators $\llangle \Xi_{\ell, j_1 j_2}^{(m, k)} |$ and $\llangle \Xi_{\ell, j^\prime_1 j^\prime_2}^{(m^\prime, k)} |$ can appear only through their combined temporal mode in the sampled sequence.
This merging reduces the rank of ${\cal K}_0$ and contributes to the algebraic deficiency $\Delta_{\rm obs}$.

The EP case can partially mitigate this loss of temporal distinguishability.
The polynomial factors $t_q^k$ generated by Jordan blocks provide additional temporal signatures beyond the pure oscillatory factors.
When the corresponding observable coefficients are nonzero and sufficiently many samples are available, these polynomially dressed modes can help distinguish contributions that would otherwise be merged by frequency degeneracy.
However, this mechanism is not automatic, i.e., if the relevant observable coefficients vanish, if the set of observables is too small, or if different contributions share both the same frequency and the same polynomial order, the corresponding components can still remain unresolved.

The present subsection has treated the strongest reconstruction scenario, in which the observable algebra is rich enough to span $\End({\cal H})$ and the sampled time-shifted observables exhaust the visible operator space.
The realistic case of restricted observability falls back to the partial-realization viewpoint of Sec.~\ref{sec:rel_of_M}, where only the canonical visible representative $M_{\rm vis}$ is determined.
In the following two sections, we examine concrete finite-dimensional non-Hermitian Floquet systems in which both regimes can be probed within the same algebraic language: the driven transmon qutrit (Sec.~\ref{sec:example_transmon}) realizes a setting where the full operator space can be reached by micromotion, while the finite NHFSSH chain (Sec.~\ref{sec:example_ssh}) realizes a setting in which the sampled observability rank is restricted jointly by symmetry, locality, and finite-size structure.

\section{Example 1: Driven transmon qubit with coherent leakage and dissipation} \label{sec:example_transmon}
To demonstrate the physical validity and operational utility of our algebraic tomography framework, we first apply the $GL(3,\mathbb{C})$ formulation to a periodically driven superconducting transmon system.
The transmon is intrinsically a weakly anharmonic multilevel quantum circuit, where coherent leakage outside the designated computational qubit subspace poses a major practical obstacle to high-fidelity quantum control~\cite{Koch2007Transmon,Motzoi2009DRAG}.
To balance analytical transparency with the essential physics of non-unitary leakage, we truncate the system's Hilbert space to the lowest three energy levels. 
The lowest two states constitute the target qubit subspace, while the third level explicitly models the primary leakage channel, thereby effectively rendering the system a driven transmon qutrit.

\subsection{Model definition and purposes}
We define the non-Hermitian Hamiltonian of the DTQ3 in the laboratory frame as
\be
\widehat{H}(t) = \widehat{H}_{\rm diag} + \widehat{H}_{\rm drive}(t) + \widehat{H}_{\rm leak}(t). \label{eq:ex1_H_total}
\ee
The diagonal part $\widehat{H}_{\rm diag}$ is given by
\be \widehat{H}_{\rm diag} = \sum_{j=0}^{2}(E_j-i\gamma_j)\,|j\rangle\langle j| = {\rm diag}(E_0-i\gamma_0,\ E_1-i\gamma_1,\ E_2-i\gamma_2), \label{eq:ex1_H_diag}
\ee
with the bare energies $E_{j=0,1,2} \in {\mathbb R}$ and the effective decay rates $\gamma_{j=0,1,2} \in {\mathbb R}$.
The anharmonicity is encoded in
\be
\alpha=(E_2-E_1)-(E_1-E_0). \label{eq:ex1_anharmonicity}
\ee
We set the coherent drive and the leakage couplings as
\be
&& \widehat{H}_{\rm drive}(t)  :=  A\cos(\omega t) ( |0\rangle\langle 1|+|1\rangle\langle 0| ) = A \cos (\omega t)
\begin{pmatrix}
  \sigma_x  & \\
   & 0 
\end{pmatrix}, \label{eq:ex1_H_drive} \\
&& \widehat{H}_{\rm leak}(t) := B\cos(\omega t) (|1\rangle\langle 2|+|2\rangle\langle 1| )
= B \cos (\omega t)
\begin{pmatrix}
0 & \\
  & \sigma_x  
\end{pmatrix}, \label{eq:ex1_H_leak}
\ee
with the Pauli matrix $\sigma_x = \begin{pmatrix} 0  & 1 \\1 & 0 \end{pmatrix}$, where $A \in {\mathbb R}$ and $B \in {\mathbb R}$ are the drive and leakage amplitudes, respectively.
Since the Hamiltonian is periodic with a periodic time $T=\frac{2 \pi}{\omega}$, the one-period evolution is described by the monodromy matrix defined in Eq.\eqref{eq:M_def}, which is the natural Floquet object for the present problem \cite{Shirley1965}.

Unless otherwise stated, in the DTQ3, we fix
\be
E_0=0,\qquad E_2=2.05,\qquad (\gamma_0,\gamma_1,\gamma_2)=(0,\,0.02,\,0.40),
\ee
where we set $E_0=0$ as the energy reference.
The other numerical parameters used in the figures are summarized in Tab.~\ref{tab:ex1_params}.
Throughout this example we use the qubit-population observable defined as
\be
O_{\rm qub} := |0\rangle \langle 0|+|1\rangle \langle 1| = {\rm diag}(1,1,0), \label{eq:ex1_Oqub}
\ee
with the corresponding OTS, $\{ \zeta^{(O_{\rm qub})}_n \}_n$, obtained from Eq.\eqref{eq:OTS_def}.
For comparison we also use the difference observable,
\be
O_{\rm diff} := |0\rangle \langle 0| - |1\rangle \langle 1| = \mathrm{diag}(1,-1,0).
\ee

\begin{table}[t]
\centering
\begin{tabular}{|p{0.36\linewidth} | p{0.55\linewidth}|}
\hline
Purpose / figure & Parameters \\
\hline
Reconstruction check (Fig.~\ref{fig:ex1_reconstruction}) & $A=0.94$, $B=A/2$, $E_1\in[0.02,\,2.02]$ (101 points).
\\[0.0em]
Near-degeneracy regime (Fig.~\ref{fig:ex1_funnel}) & $A\in[0.75,\,0.95]$ and $E_1\in[0.05,\,0.14]$ ($81\times81$ grid), $B=A/2$.
\\ [0.0em]
Visibility mismatch (Fig.~\ref{fig:ex1_visibility}) & $A=0.94$, $B=A/2$, $E_1\in[0.02,\,2.02]$ (501 points).
\\ [0.0em]
Branch-wise phase allocation (Fig.~\ref{fig:ex1_phase_mixing}/Left) & $A\in\{0.10,\,0.30,\,0.94,\,2.00\}$, $B=A/2$, $E_1\in[0.02,\,2.02]$ (501 points).
\\[0.0em]
\qquad \qquad  \qquad \qquad \qquad \qquad \ \ \ (Fig.~\ref{fig:ex1_phase_mixing}/Right) & $A\in[0.10,\,2.00]$ (13 points), $B=A/2$, $E_1\in[0.02,\,2.02]$ (501 points).
\\[0.0em]
Observable-dimension growth (Fig.~\ref{fig:dtq3_rank_saturation}) 
& $A=0.94$, $B \in \{0, A/2 \}$, $E_1=1.00$.
\\ [0.0em]
\hline
\end{tabular}
\caption{Numerical parameter choices used in our numerical experiments of the DTQ3.
}
\label{tab:ex1_params}
\end{table}

The purposes of this example are as follows:
\begin{enumerate}
    \item We provide a minimal $GL(3,\mathbb C)$ Floquet setting that explicitly verifies the algebraic reconstruction of the common spectral skeleton from a single observable channel.
    \item We examine the distinction between the common spectral skeleton and observable-dependent dressing by studying how drive-induced mode mixing changes the visible phase response relative to the global determinant-phase accumulation.
    \item We use the micromotion-expanded observable dimension $D_{\rm obs}=\mathrm{rank}({\cal K}_0)$ to test whether the qubit-population observable remains confined to the invariant qubit sector or expands to the full qutrit operator space through coherent leakage.
\end{enumerate}

\subsection{Numerical experiments}

\subsubsection{Algebraic reconstruction from a single observable} \label{sec:alg_rec}
We first check the $GL(N,{\mathbb C})$ structure using the recurrence relation in Eq.\eqref{eq:CH_bn_recurrence} and the Prony reconstruction, which is summarized in App.~\ref{app:prony}.
We use $O_{\rm qub}$ in Eq.\eqref{eq:ex1_Oqub} and numerically reconstruct the common spectral skeleton, namely the characteristic coefficients $\{e_a\}_{a=1}^3$ and the spectrum $\{\lambda_j\}_{j=1}^3$, from the first six points of the OTS $\{\zeta_n^{(O_{\rm qub})}\}_{n=0}^5$.
The corresponding observable-dependent coefficients $\{c_j^{(O_{\rm qub})}\}_{j=1}^3$ are then obtained straightforwardly once the spectral skeleton is fixed, although we do not display them explicitly here.

The result is shown in Fig.~\ref{fig:ex1_reconstruction}, which shows that the reconstruction errors remain at the level of numerical roundoff throughout the scan.
The reconstructed characteristic coefficients and eigenvalues agree with those obtained by direct diagonalization to essentially machine precision.
Thus, even a single observable channel already recovers the full $GL(3,\mathbb C)$ spectral skeleton. 
In the present noiseless $GL(3,\mathbb C)$ example, the reconstructed values of $\{e_a\}_a$, $\{\lambda_j\}_j$, and $\{c_j^{(O)}\}_j$ are numerically indistinguishable from the exact ones at plotting resolution, with typical differences of order $10^{-13}$ or smaller for the spectral quantities.
In the analyses below, we therefore use the exact spectral data obtained from direct diagonalization of the monodromy matrix.
Accordingly, the plots below should be understood as visualizations of the same reconstructed spectral skeleton.

\begin{figure}[t]
\centering
\includegraphics[width=.68\linewidth]{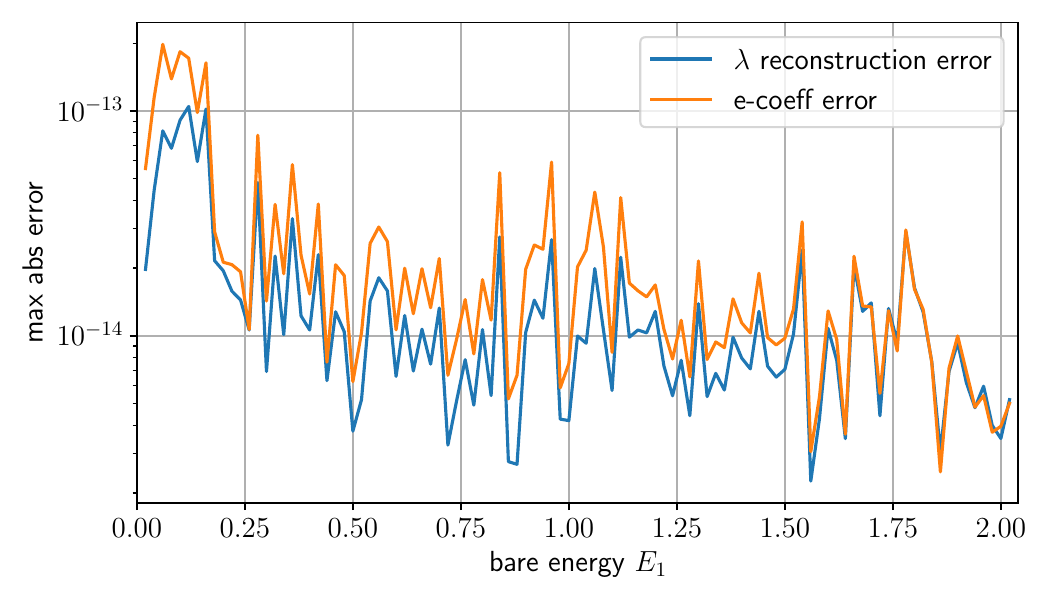}
\caption{Reconstruction errors for the DTQ3 from a single observable $O_{\rm qub}$. The errors in the reconstructed eigenvalues and characteristic coefficients remain close to machine precision over the full scan, showing that a single observable channel already recovers the full $GL(3,\mathbb C)$ spectral skeleton.}
\label{fig:ex1_reconstruction}
\end{figure}

\subsubsection{Observable phase visibility and level allocation near DPs} \label{sec:phase_vis}

\begin{figure}[t]
\centering
\includegraphics[width=.525\linewidth]{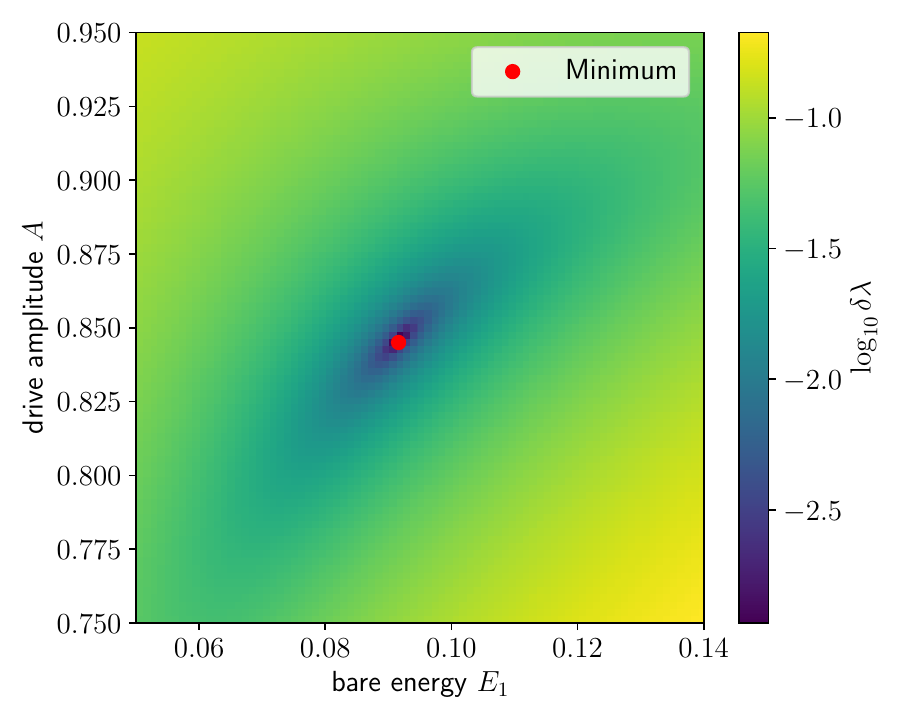}
\includegraphics[width=.425\linewidth]{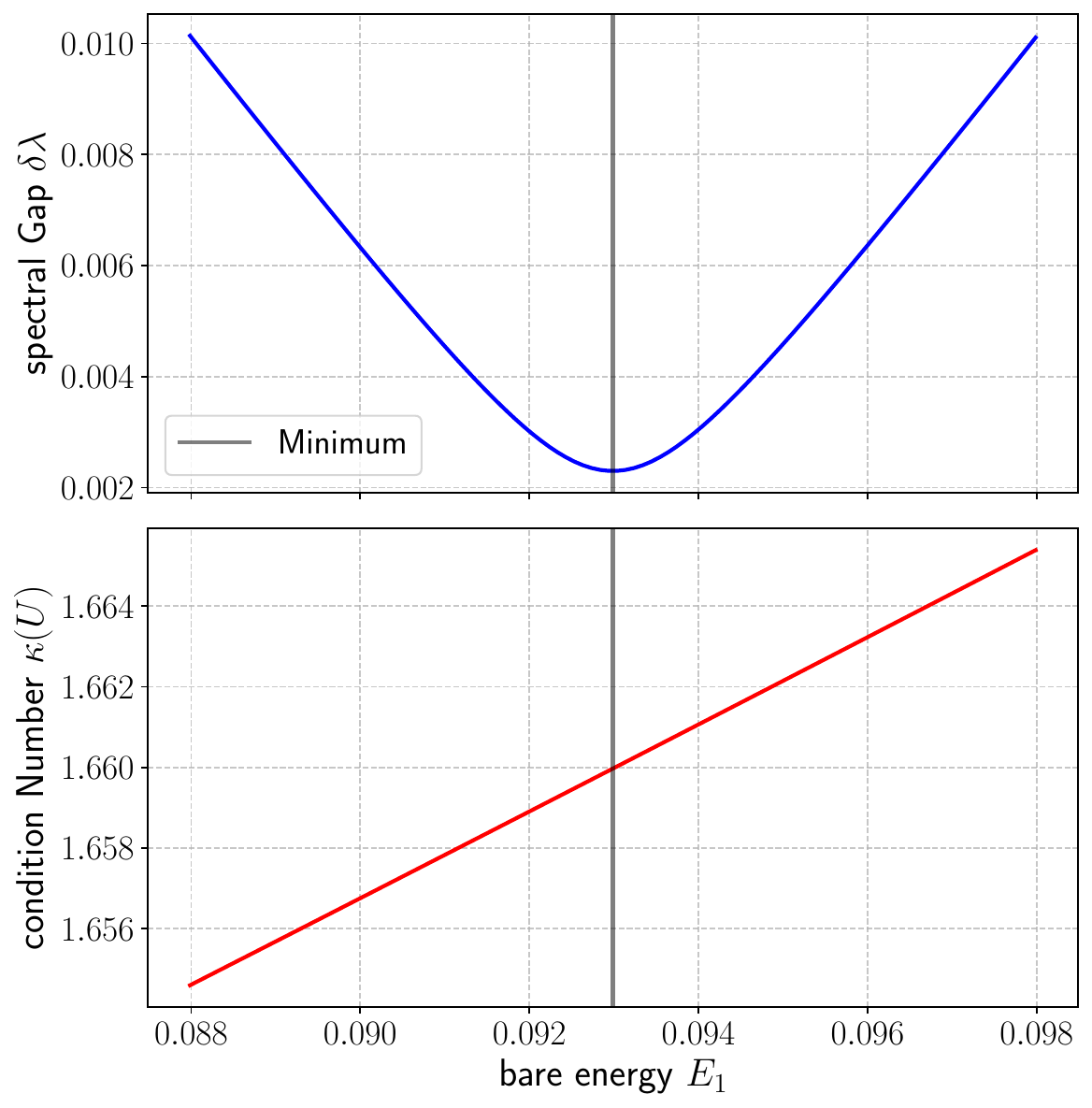}
\caption{Near-degeneracy point characterized in the two-parameter plane of the bare energy $E_1$ and the drive amplitude $A$. (Left) Color plot of the minimal spectral gap $\delta \lambda$ on a $\log_{10}$ scale. (Right) The $E_1$-dependence of the spectral gap $\delta \lambda$ and the matrix condition number $\kappa(U)$ evaluated at a fixed drive amplitude $A = 0.85$, demonstrating the stable, DP nature of the degeneracy.}
\label{fig:ex1_funnel}
\end{figure}

We first identify the parameter regime near degeneracy points.
Fig.~\ref{fig:ex1_funnel} visualizes this near-degeneracy point through the spectral gap and the condition number of $U$, $\delta \lambda$ and $\kappa(U)$, defined as
\be
\delta \lambda := \min_{i<j}|\lambda_i-\lambda_j|, \qquad \kappa(U) := \| U \|_2 \cdot \| U^{-1} \|_2,
\label{eq:ex1_min_gap}
\ee
in the $E_1$-$A$ plane, where $\| U \|_2$ is the matrix-norm of $U = (| u_1 \rangle, | u_2 \rangle, | u_3 \rangle)$ with the right eigenvectors of $M$, $ \{ | u_j \rangle \}_{j=1}^3$.
This result shows that this degeneracy point corresponds to a DP because the condition number $\kappa(U)$ has no sign of rapid growth and is stable near the degeneracy point.
For our purposes, the essential point is not whether the minimum of $\delta \lambda$ contains an exact DP, but rather that the reconstructed spectral skeleton directly pinpoints the region where phase redistribution is highly sensitive.
Below, we examine how this underlying sensitivity manifests in the observable response.

In order to investigate the observable response beyond the OTS, we consider the ODSD in Eq.\eqref{eq:hF_lam}, whose ORS poles are located at $z_j=\lambda_j^{-1}$.
For comparison, we also define the global determinant-phase accumulation along a continuous path of parameters by
\be
   {\cal A}_{\det} := \frac{1}{2\pi} \Delta \arg \det M =  \frac{1}{2 \pi} \int_{\cal C} d (\arg \det M)  = \frac{1}{2\pi} \sum_{j=1}^{N}\Delta \arg \lambda_j, \label{eq:ex1_global_phase}
\ee
where $\Delta\arg$ denotes the net unwrapped change of the argument between the initial and final points of the path, ${\cal C}$.
In the present DTQ3 analysis, we vary $E_1 \in [0.02, 2.02]$ along the path, and consequently the total determinant phase ${\cal A}_{\det}$ remains robust and close to two full $2\pi$-rotations. 
Since the parameters are varied along an open path, $\mathcal A_{\det}$ is generally not a strictly quantized integer topological invariant but an accumulated phase quantity. 
To strictly isolate the effect of the observable dressing, we evaluate the unfiltered response using the trivial observable $O={\mathbb I}_3$. 
In the same way, we define the accumulated observable phase ${\cal A}_p^{(O)}$ and introduce the visibility loss $\Delta_{\rm vis}^{(O)}(p)$ to quantify the mismatch as
\be
   {\cal A}_p^{(O)} := \frac{1}{2\pi} \Delta \arg \mathcal F^{(O)}(p), \qquad
   \Delta_{\rm vis}^{(O)}(p) := {\cal A}_p^{({\mathbb I}_3)} - {\cal A}_p^{(O)}. \label{eq:ex1_visibility_loss}
\ee
Thus, the quantity $\Delta_{\rm vis}^{(O)}(p)$ purely measures how much of the unfiltered spectral phase accumulation fails to remain visible in the observable ODSD channel due to the destructive interference of the observable-specific weights $\{ c_j^{(O)} \}_{j}$.

\begin{figure}[t]
\centering 
\includegraphics[width=.85\linewidth]{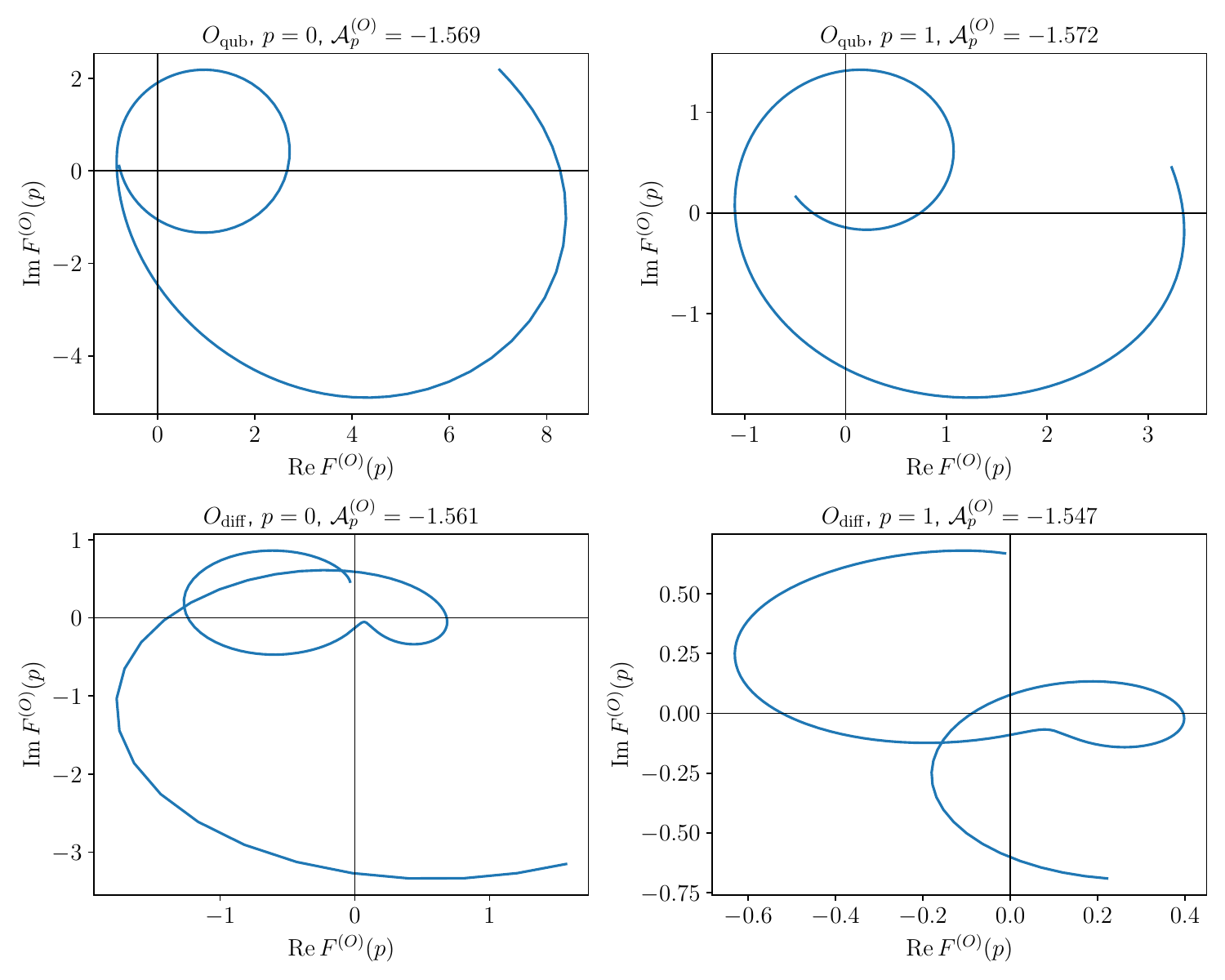}
\caption{Complex-plane trajectories of the ODSD $\mathcal{F}^{(O)}(p)$ for $p=0$ and $p=1$ under the variation of the bare energy $E_1 \in [0.02, 2.02]$. We compare the total qubit population $O_{\rm qub}$ with the population difference $O_{\rm diff}$. The continuous deformation of these loops under drive-induced dressing highlights the onset of severe destructive interference, serving as the geometric precursor to the quantized visibility jumps shown in Fig.~\ref{fig:ex1_phase_mixing}.}
\label{fig:ex1_visibility}
\end{figure}

\begin{figure}[t]
\centering
\includegraphics[width=1.\linewidth]{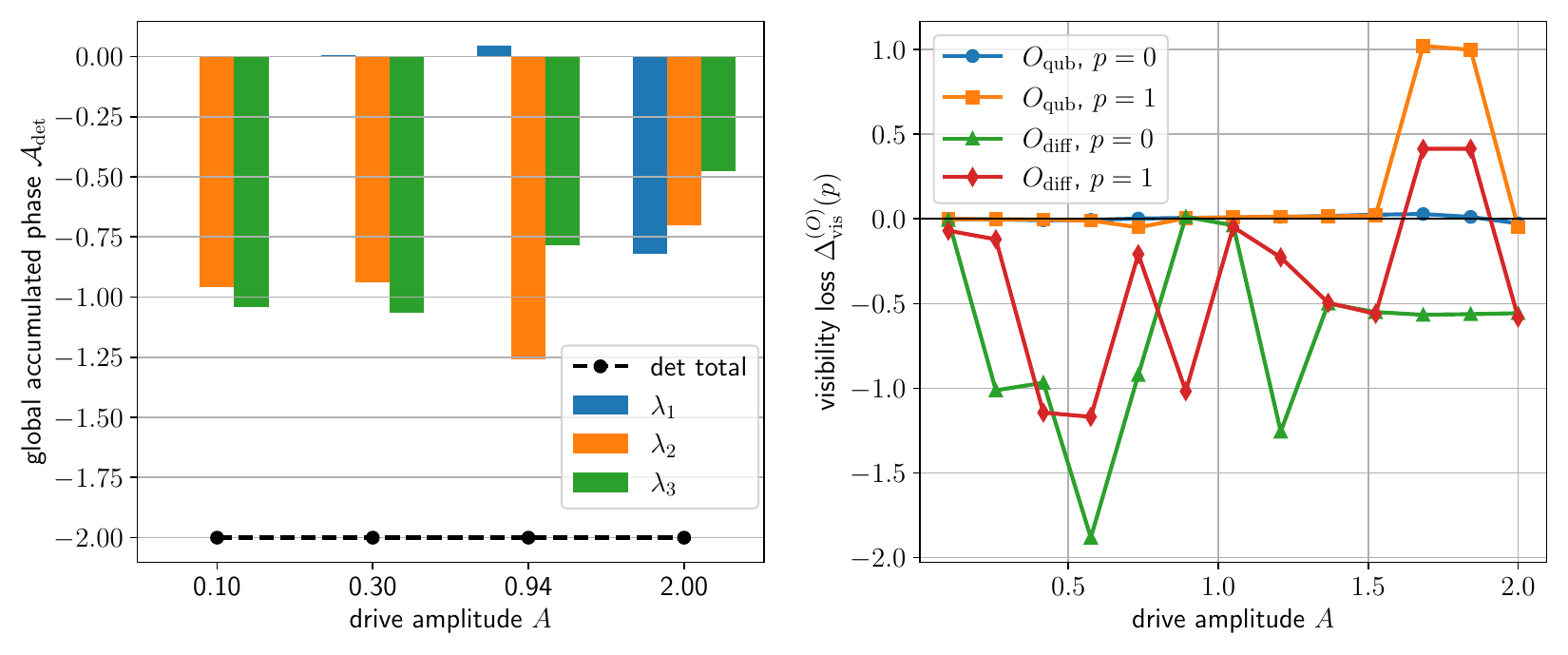}
\caption{(Left) Drive-induced redistribution of the mode-wise phase accumulation observed by $O_{\rm qub}$ as a function of the drive amplitude $A$. (Right) The visibility loss $\Delta_{\rm vis}^{(O)}(p)$ for $p=0,1$ evaluated for the total population $O_{\rm qub}$ and the difference observable $O_{\rm diff}$. The sharp, quantized steps in the visibility loss reflect winding-changing events where the continuous deformation of the dressing coefficients forces the complex trajectory of $\mathcal{F}^{(O)}(p)$ to cross the origin, abruptly shedding a full $2\pi$ phase rotation.}
\label{fig:ex1_phase_mixing}
\end{figure}

Figs.~\ref{fig:ex1_visibility} and \ref{fig:ex1_phase_mixing} show the mismatch between the spectral phase accumulation and the channel-dependent observable response.
Using the same $GL(3,\mathbb{C})$ monodromy matrix, Fig.~\ref{fig:ex1_visibility} demonstrates that the apparent phase trajectories of $\mathcal F^{(O)}(p)$ vary depending on the chosen observable, $O_{\rm qub}$ or $O_{\rm diff}$, and the index $p$.
To unpack this discrepancy, the left panel of Fig.~\ref{fig:ex1_phase_mixing} resolves the global determinant phase into individual eigenmode contributions.
It reveals that the drive amplitude $A$ induces phase mixing and splitting.
At relatively weak drive, the phase accumulation is concentrated in one or two levels, whereas at stronger drive, the levels hybridize and the accumulated phase is split and redistributed more evenly.
Thus, while the total determinant phase remains nearly unchanged, the mode-wise allocations are substantially reshuffled.
The right panel of Fig.~\ref{fig:ex1_phase_mixing} then quantifies this effect through the visibility loss $\Delta_{\rm vis}^{(O)}(p)$.
Strikingly, the visibility loss $\Delta_{\rm vis}^{(O)}(p)$ exhibits sharp, quantized jumps (e.g., dropping to exact integer values such as $-1$) rather than a continuous degradation.
This reflects a winding-changing event in the observable trajectory:
as the drive $A$ alters the dressing coefficients, the complex trajectory of $\mathcal F^{(O)}(p)$ is continuously deformed.
When this deformed loop crosses the origin due to severe destructive interference among the branches, the observable abruptly loses a full $2\pi$ phase rotation.
Particularly, the population difference $O_{\rm diff}$ is shown to suffer from profound visibility losses at moderate to strong drives, whereas the total qubit-population $O_{\rm qub}$ remains relatively robust against the interference.

This mismatch has a clear structural origin.
The unfiltered response ${\mathcal F}^{({\mathbb I}_3)}(p)$ captures the sum of all spectral levels equally.
In contrast, the observable response $\mathcal F^{(O)}(p)$ is a vector sum weighted by the observable coefficients $\{c_j^{(O)}\}_j$, which allows different eigenmodes to interfere and reduce the net rotation.
Furthermore, the polylogarithmic weight suppresses the large-$n$ part of the spectral history when $p$ is increased, making the response progressively more local in time.
Consequently, what the observable sees is not simply the total phase, but the drive-induced phase redistribution filtered through its own dressing-coefficients profile.
Therefore, the apparent phase accumulation and the resulting visibility loss are not accidental, but direct, channel-dependent measures of how much of the underlying spectral geometry is suppressed by the chosen measurement.

From a forward-problem perspective, such severe observable dependence might appear as a fatal loss of global phase information.
However, within the present inverse-problem framework, the algebraic separation of the common spectral skeleton allows us to invert this logic, i.e., the visibility loss itself becomes a quantitative measure.
Rather than being discarded as an error, the mismatch explicitly detects the onset of strong level hybridization and branch competition as the system approaches the DP.

\subsubsection{Tracking the dimensionality of the visible operator space} \label{subsubsec:dtq3_rank_saturation}

\begin{figure}[t]
\centering 
\includegraphics[width=.8\linewidth]{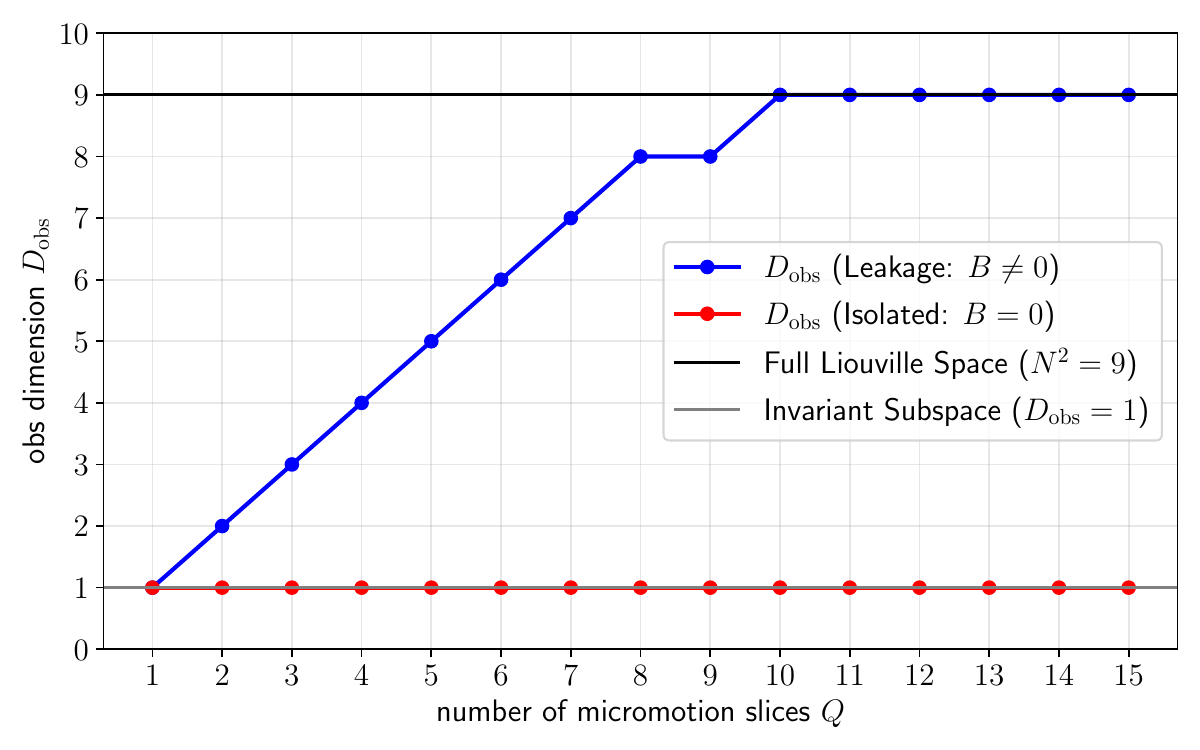}
\caption{Sampled observable dimension $D_{\rm obs} = \mathrm{rank}(\mathcal{K}_0)$ in the DTQ3 model plotted against the number of micromotion slices $Q$. In the isolated qubit limit ($B=0$), the observable remains trapped within a one-dimensional invariant sector ($D_{\rm obs}=1$). In contrast, turning on coherent leakage ($B=0.47$) dynamically accesses the third level, triggering an algebraic expansion of the visible operator space that reaches the full qutrit operator-space dimension $N^2=9$.}
\label{fig:dtq3_rank_saturation}
\end{figure}

We next use the DTQ3 model to illustrate how the observable dimension $D_{\rm obs}$ detects the enlargement of the visible operator space caused by coherent leakage.
Here $D_{\rm obs}$ is not the recurrence order of a single scalar OTS.
Rather, following the definition in Sec.~\ref{sec:reconstruct_Mex}, it is computed as the rank of the sampled observable map ${\cal K}_0$ generated by the time-shifted observable family.
In the present example, this family is generated from the qubit-population observable $O_{\rm qub}$ in Eq.\eqref{eq:ex1_Oqub} through the micromotion orbit given by Eq.\eqref{eq:def_obs_dt_ext} as ${\bf O}_{\rm ext} = \{O_{\rm qub }(t_q) \}_{q=0}^{Q-1}$, i.e.,
\be
O_{\rm qub}(t_q) = U(t_q,0)^{-1}O_{\rm qub}U(t_q,0), \qquad t_q= \frac{q}{Q} T, \qquad q \in \{ 0,\ldots,Q-1\}. \nn
\ee
Numerically, we form the matrix ${\cal K}_0$ by stacking the vectorized operators $O_{\rm qub}(t_q)$ and define the observable dimension $D_{\rm obs}$ by Eq.\eqref{eq:Dobs_K0}.
Thus, $D_{\rm obs}$ measures how many independent operator directions are actually sampled by the micromotion-expanded observable family.

In the isolated qubit limit, $B=0$, the third level is dynamically decoupled from the computational subspace.
The observable $O_{\rm qub}$ then acts as the identity on the closed two-level sector and remains invariant under the micromotion, $O_{\rm qub}(t)  = O_{\rm qub}$.
Consequently, all sampled observables are proportional to the same operator direction, and the sampled observable map has rank $D_{\rm obs}=1$.
This is the sense in which the isolated qubit case remains confined to an invariant visible sector.
This statement should be distinguished from the scalar OTS recurrence order: the effective two-level dynamics can still generate a degree-2 CH recurrence, but the micromotion-expanded observable family itself spans only one operator direction.
Once coherent leakage is turned on, the situation changes.
For the representative choice $B=A/2=0.47$, the third level becomes dynamically accessible, and the conjugated observable $O_{\rm qub}(t)$ no longer remains fixed.
As the number of micromotion slices $Q$ is increased, the sampled operators explore additional directions in the qutrit operator space.
Fig.~\ref{fig:dtq3_rank_saturation} shows that the observable dimension then rises from the isolated value $D_{\rm obs}=1$ and eventually reaches the full observable dimension  $D_{\rm obs}=N^2=9$.
In this sense, coherent leakage is detected algebraically as an expansion of the visible operator space.
This behavior is consistent with the partial-realization viewpoint in Prop.~\ref{prop:Alg_rel_M}.
When $B=0$, the observable algebra generated by $O_{\rm qub}$ and its micromotion orbit is too small to identify the full qutrit monodromy, leaving a nontrivial trace-invisible sector.
When $B\neq0$, the same initial observable is transported by the micromotion into directions involving the leakage level, and the sampled observable map can become full rank.
Thus the transition from $D_{\rm obs}=1$ to $D_{\rm obs}=9$ provides a direct algebraic readout of whether the restricted qubit description remains closed or whether the full qutrit operator space has become visible.

We emphasize that this observable-dimension analysis concerns the span of the micromotion-expanded observable family.
It is therefore complementary to the spectral reconstruction in Sec.~\ref{sec:alg_rec} and to the phase-visibility analysis in Sec.~\ref{sec:phase_vis}.
The former reconstructs the common spectral skeleton of the monodromy matrix, while the present analysis asks how much of the operator space is actually accessible to the chosen observable family.

\section{Example 2: Finite-size non-Hermitian Floquet SSH chain} \label{sec:example_ssh}
To investigate the rich interplay between spatial topology, non-Hermitian spectral geometry, and dynamical visibility limits, we apply our algebraic framework to a periodically driven, non-Hermitian extension of the Su--Schrieffer--Heeger (SSH) model.
While the original SSH formulation serves as a paradigmatic blueprint for topological insulators~\cite{SuSchriefferHeeger1979,AsbothOroszlanyPalyi2016}, its non-Hermitian variants have emerged as a standard platform for exploring non-reciprocal transport, directional amplification, and the non-Hermitian skin effect (NHSE)~\cite{YaoWang2018,Gong2018,Borgnia2020,OkumaKawabataShiozakiSato2020,Manna2022}.
To explicitly capture this non-reciprocal physics within an algebraically exact setup, we employ a finite-size bipartite chain where non-reciprocity is implemented via asymmetric tunneling amplitudes.
The stroboscopic evolution of this network yields a highly non-trivial monodromy matrix in $GL(N, \mathbb{C})$, providing an ideal testing ground for verifying how spatial structures, disorder-induced localization, and underlying physical symmetries dictate the algebraic constraints and sampled observable-dimension growth.

\subsection{Model definition and purposes}
\label{subsec:example_2_ssh}
As a second demonstration, we consider a finite non-Hermitian Floquet SSH (NHFSSH) chain with an even number of sites, $N\in 2{\mathbb N}$.
We consider a two-step procedure in the site basis ordered as
\be
&&   {\bf s} = (0,1, \cdots, N -2, N -1 ) = (A_1,B_1,A_2,B_2,\cdots,A_{\frac{N}{2}},B_{\frac{N}{2}}), \nl
&& A_m:=2m-2,\ \ B_m:=2m-1, \qquad m \in \{1,2,\cdots,\tfrac{N}{2} \},
\ee
where ${\bf s}$ denotes a set of lattice sites on the chain, and $A_m$ and $B_m$ are the even- and odd-labeled sites, respectively.
We define one-period monodromy matrix using two Hamiltonians, $\widehat{H}_{1,2}$, as
\be
M = e^{-i\frac{T}{2}\widehat{H}_2} e^{-i\frac{T}{2}\widehat{H}_1}, \label{eq:ex2_monodromy}
\ee
for each half of the driving period, $T\in{\mathbb R}_{>0}$.
The first- and second-step Hamiltonians are defined by
\be
\widehat{H}_1 &:=& \sum_{m=1}^{N/2} (v + \delta v_m) \left( |A_m\rangle\langle B_m| + |B_m\rangle \langle A_m| \right)
+ \sum_{s=0}^{N-1} \delta \varepsilon_s  |s\rangle\langle s| + i\gamma \Sigma_z, \label{eq:ex2_H1_explicit} \\
\widehat{H}_2 &:=& \sum_{m=1}^{N/2-1} (w + \delta w_m) \left( e^{-h}\,|B_m\rangle\langle A_{m+1}| + e^{h}\,|A_{m+1}\rangle\langle B_m| \right)
+ \sum_{s=0}^{N-1} \delta \varepsilon_s |s\rangle \langle s| + i \gamma \Sigma_z + \widehat{H}_{{\rm bd},\theta}, \label{eq:ex2_H2_explicit}
\ee
where $v, w \in {\mathbb R}_{>0}$ are hopping amplitudes, $h \in {\mathbb R}$ means the nonreciprocity, and $\gamma \in {\mathbb R}$ controls the staggered gain/loss through the operator defined as
\be
\Sigma_z := \sum_{m=1}^{N/2} \left( |A_m\rangle\langle A_m| - |B_m\rangle\langle B_m| \right) = \bigoplus_{m=1}^{N/2} \sigma_z. \label{eq:ex2_sigmaz}
\ee
In addition, the boundary term $\widehat{H}_{{\rm bd},\theta}$ for PBC/OBC is given by
\be
\widehat{H}_{{\rm bd},\theta}  =
\begin{dcases}
  (w + \delta w_{N/2}) \left( e^{-h} e^{i\theta}\,|B_{N/2}\rangle\langle A_1| + e^{h} e^{-i\theta}\,|A_1\rangle\langle B_{N/2}| \right) & \text{for PBC} \\
  0 & \text{for OBC}
\end{dcases},
\label{eq:ex2_bdry_pbc_obc}
\ee
with a twisted parameter, $\theta \in [-\pi,\pi)$.
Furthermore, we introduce the disorder terms as
\be
 \text{Bond disorder ($V$-noise)} &:& \delta v_m = V \xi_{m,v}, \quad \delta w_m = V \xi_{m,w}, \label{eq:ex2_bond_disorder} \\
 \text{On-site disorder ($W$-noise)} &:& \delta \varepsilon_s = W \,\xi_s, \label{eq:ex2_onsite_disorder}
\ee
with $V,W \in {\mathbb R}_{\ge 0}$ and the uniform random variables, $\xi \sim U(-1,1)$.

Depending on the parameters $\gamma$, $V$, and $W$, the system possesses exact symmetries, ${\cal Q}_{{\cal P}^\prime}$ and ${\cal Q}_{\chi}$:
\be
   {\cal Q}_{{\cal P}^\prime} := {\cal P}^\prime, \qquad   {\cal Q}_{\chi} := \Sigma_{z}, \label{eq:QP_Qchi}
\ee
where the former is defined through an imaginary gauge transformation ${\cal S}$ and the standard spatial parity ${\cal P}$ as
\be
&&    {\cal P}^\prime := {\cal S} {\cal P} {\cal S}^{-1}, 
\ee
\be
&& {\cal P} =
   \begin{pmatrix}
     0 & 0 & \cdots & 0 & 1 \\
     0 & 0 & \cdots & 1 & 0 \\
     \vdots & \vdots & \ddots & \vdots & \vdots \\
     0 & 1 & \cdots & 0 & 0 \\
     1 & 0 & \cdots & 0 & 0 \\
   \end{pmatrix} \in {\mathbb R}^{N \times N}, \qquad {\cal S} = {\rm diag} (1,e^{h},e^{2h}, \cdots, e^{(N/2-1)h}) \otimes {\mathbb I}_2 \in {\mathbb R}^{N \times N},
\ee
and the latter is characterized by the staggered chiral operator $\Sigma_z$ given in Eq.\eqref{eq:ex2_sigmaz}.
In the clean limit, i.e., $\gamma = V = W = 0$, these symmetries exactly (anti-)commute with the Hamiltonians \eqref{eq:ex2_H1_explicit}\eqref{eq:ex2_H2_explicit}.
Crucially, however, the parity ${\cal Q}_{{\cal P}^\prime}$ is explicitly broken by turning on the bond disorder $V$, leading to the distinct symmetric conditions as
\be
 [ {\cal Q}_{{\cal P}^\prime}, \widehat{H}_{1,2} ] = 0 \quad \mbox{for} \quad \gamma = V = W = 0, \qquad \{ {\cal Q}_{\chi}, \widehat{H}_{1,2} \} = 0 \quad \mbox{for} \quad \gamma = W = 0. \label{eq:QP_Qchi_comm}
\ee
   
In our numerical experiments, we fix the following parameters:
\be
N=10, \qquad T=2.0, \qquad v=0.5, \qquad w=1.5, \qquad h=0.4. \label{eq:ex2_fixed_base}
\ee
The other numerical fixed parameters and the boundary condition are summarized in Tab.~\ref{tab:ex2_params}.
In our numerical experiments, we employ the following observables:
\be
&& O_{0} := |0 \rangle \langle 0|, \qquad O_{\rm stag} := \frac{1}{N} \Sigma_z, \nl
&& O_{A} := \sum_{m=1}^{N/2} | A_m \rangle \langle A_m|, \qquad O_{\rm break} := O_A + |A_1\rangle\langle B_1| + |B_1\rangle\langle A_1|. \label{eq:ex2_stag_obs}
\ee

\begin{table}[t]
\centering
\begin{tabular}{|p{0.425\linewidth} | p{0.465\linewidth}|}
\hline
Purpose / figure & Parameters \\
\hline
Eigenvalue exchange around the EP (Fig.~\ref{fig:ex2_ep_loops}) & $(V,W)=(0.0,0.2)$ for PBC. See the main text for $(\gamma,\theta)$. \\
Observable response near the EP (Fig.~\ref{fig:ex2_ep_visibility})  & $(V,W)=(0.0,0.2)$ for PBC. See the main text for $(\gamma,\theta)$. \\
Channel-selective winding readout (Fig.~\ref{fig:ex2_winding_readout}) & $(\gamma,V,W)=(0.4,0.0,0.2)$, $\theta \in [0, 2 \pi)$ (361 points) for PBC.\\
  Observable-dimension growth under disorder (Fig.~\ref{fig:ssh_signature_result}) & $\gamma = 0.0, (V,W) = (0.0,0.0), (0.2,0.0), (0.0,0.2)$ for OBC.\\
\hline
\end{tabular}
\caption{Numerical parameter choices used in our numerical experiments of the NHFSSH chain.}
\label{tab:ex2_params}
\end{table}

The purposes of the present example are as follows:
\begin{enumerate}
    \item We use an EP-accessible finite-size NHFSSH chain to test how a branch-sensitive spectral skeleton is filtered into observable-dependent visible readouts, from local EP-neighborhood responses to winding-related responses along a twist cycle.
    \item We use the sampled observable dimension $D_{\rm obs}={\rm rank} ({\cal K}_0)$ to quantify the finite-sampling visibility of the operator space under symmetry, disorder, and probe choice, without identifying this sampled rank with a sharp symmetry-wall saturation.
\end{enumerate}

\subsection{Numerical experiments}
\subsubsection{Local EP response and global winding Readout} \label{sec:obs_topology_visibility}

The central utility of the present framework lies in its ability to distinguish between the common spectral skeleton, or shadow-level structure, and the channel-dependent output (the visible level).
To demonstrate this, we first establish the local spectral geometry underlying the NHFSSH chain and then examine how this structure is manifested or hidden through different observable channels.

We begin by confirming that the system possesses a branch-sensitive singular geometry in its spectral skeleton.
In the translationally invariant and disorder-free limit ($N \to \infty, V=W=0$), the Bloch reduction of the Floquet problem yields a $2\times2$ monodromy matrix $M_k(\gamma)$ for each quasimomentum $k$. For the parameter set fixed in Eq.\eqref{eq:ex2_fixed_base}, this idealized limit exhibits an exact EP at
\be
k=\pi, \qquad \gamma=\gamma_{\ast} \approx 0.5675645, \label{eq:ex2_ep_location}
\ee
where the specific value $\gamma_{\ast}$ is numerically determined by the coalescence condition of the Floquet eigenvalues. In the actual system under consideration, i.e., a finite-size periodic chain ($N=10$) with finite disorder ($W=0.2$), this exact degeneracy is strictly lifted into a finite spectral gap due to finite-size effects and spatial inhomogeneities.
However, the topological monodromy, i.e., the eigenvalue exchange upon a parameter loop enclosing this lifted singularity, is robustly inherited as a signature of the underlying spectral skeleton.
This exact Bloch-sector EP provides the natural starting point for searching for inherited finite-size remnants in the $N=10$ PBC chain and for studying how they appear through observable-dependent channels.
Returning to the finite periodic system, we examine the neighborhood of $(\theta,\gamma)\approx(\pi,\gamma_\ast)$ by considering a closed loop in the parameter space:
\be
\theta( \tau ) = \pi + r_\theta \cos \tau, \qquad \gamma( \tau ) = \gamma_\ast + r_\gamma \sin \tau, \qquad \tau \in [0,2\pi), \label{eq:ex2_loop_def}
\ee
with $r_\theta=0.18$ and $r_\gamma=0.025$. As displayed in Fig.~\ref{fig:ex2_ep_loops}, tracking the eigenvalues continuously along this loop reveals a nontrivial permutation after one cycle, while two cycles return them to their original ordering.
This square-root monodromy is a clear signature of an EP-consistent eigenvalue exchange residing in the shadow level, which remains a common property of the monodromy matrix $M(\theta, \gamma)$ regardless of the observable choice.

\begin{figure}[t]
\centering
\includegraphics[width=.7\linewidth]{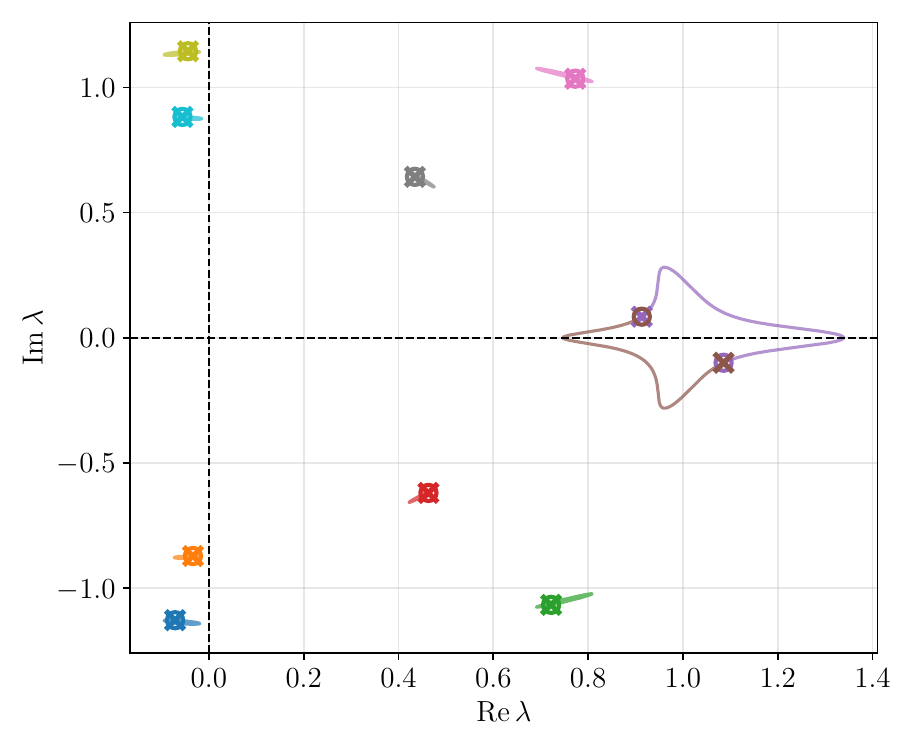}
\caption{Permutation and exchange of Floquet eigenvalues around the EP-accessible neighborhood in the finite-size periodic chain. Each eigenvalue is continuously tracked in the complex plane from $\tau=0$ (circles) to $\tau=2\pi$ (crosses), showing a non-trivial square-root Riemann-sheet topology inherent to the common spectral skeleton.}
\label{fig:ex2_ep_loops}
\end{figure}

The visibility of this shadow-level topology, however, is strongly dependent on the observable channel and the specific probe point.
To isolate this dependence, we fix the probe point at $z_{\rm ref}=0.6$ and compare the OSD response $h^{(O)}(z=z_{\rm ref})$ along the same closed loop defined in Eq.\eqref{eq:ex2_loop_def}.
In Fig.~\ref{fig:ex2_ep_visibility}, we compare the local site projector $O_0$ and $O_{\rm stag}$ in Eq.\eqref{eq:ex2_stag_obs}.
The comparison shows that the resulting trajectories in the complex plane do not produce a nontrivial net winding for either channel on this specific loop.
Even though the underlying eigenvalues are permuting at the shadow level, the visible readout does not produce a nontrivial winding-related response on this specific loop.
This mismatch occurs because the observable dressing can shift or deform the visible image of the spectral singularity relative to $z_{\rm ref}$, rendering the system's topological activity ``blind'' or invisible in these specific observable channels.
Here the point of the EP-accessible extension is not to use Fig.~\ref{fig:ex2_ep_visibility} as a direct proof of an EP mechanism for the later readout, but to show that the framework can cleanly distinguish between a branch-sensitive local spectral geometry and its observable-dependent visible response. This distinction is the key background for the global twist-cycle readout discussed below.

\begin{figure}[t]
\centering
\includegraphics[width=.9\linewidth]{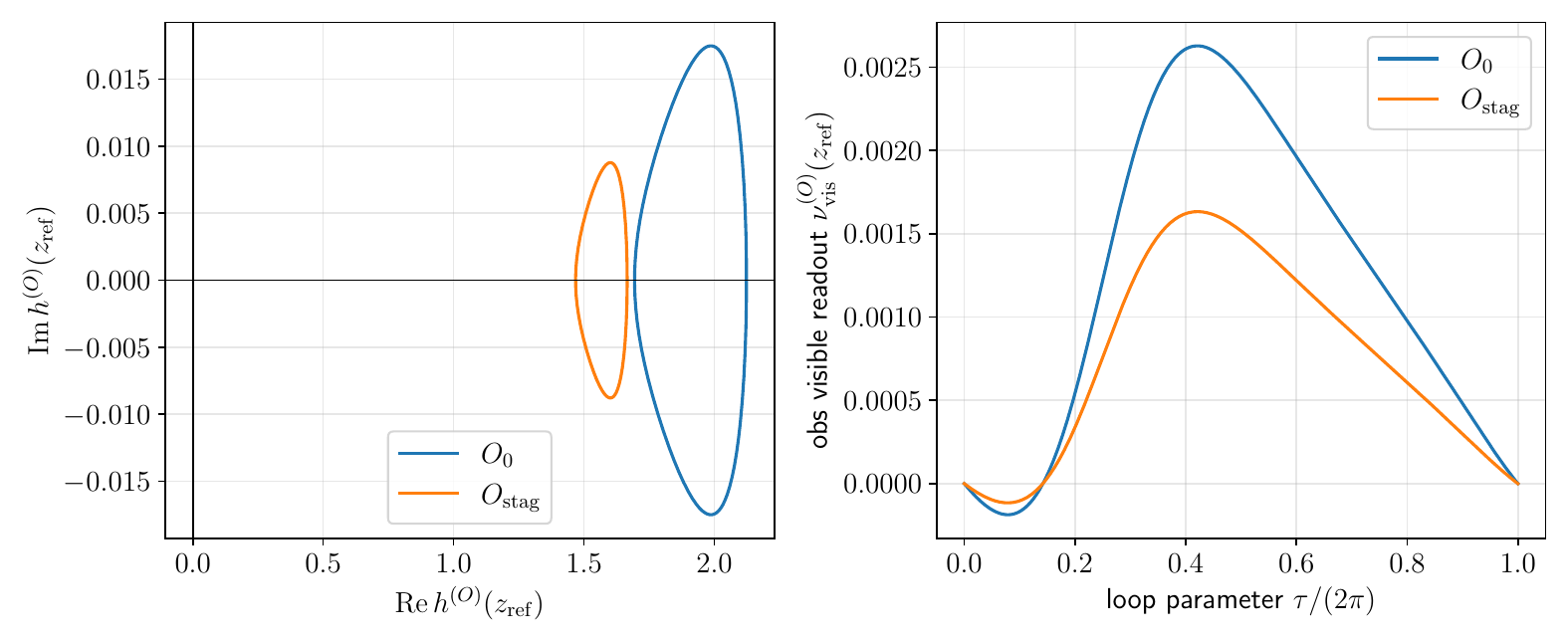}
\caption{Observable response evaluated from the OSD at $z_{\rm ref}=0.6$ near the inherited EP neighborhood. (Left) Trajectories of $h^{(O)}(z_{\rm ref})$ in the complex plane along the parameter loop for the local channel $O_0$ and the staggered channel $O_{\rm stag}$. (Right) The corresponding visible readout $\nu_{\rm vis}^{(O)}(z_{\rm ref})$. Notably, despite the square-root eigenvalue exchange occurring at the common spectral skeleton (shadow level), both channels yield a vanishing net winding, explicitly demonstrating how observable dressing can suppress the local branch-sensitive visible response.}
\label{fig:ex2_ep_visibility}
\end{figure}

\begin{figure}[t]
\centering
\includegraphics[width=.9\linewidth]{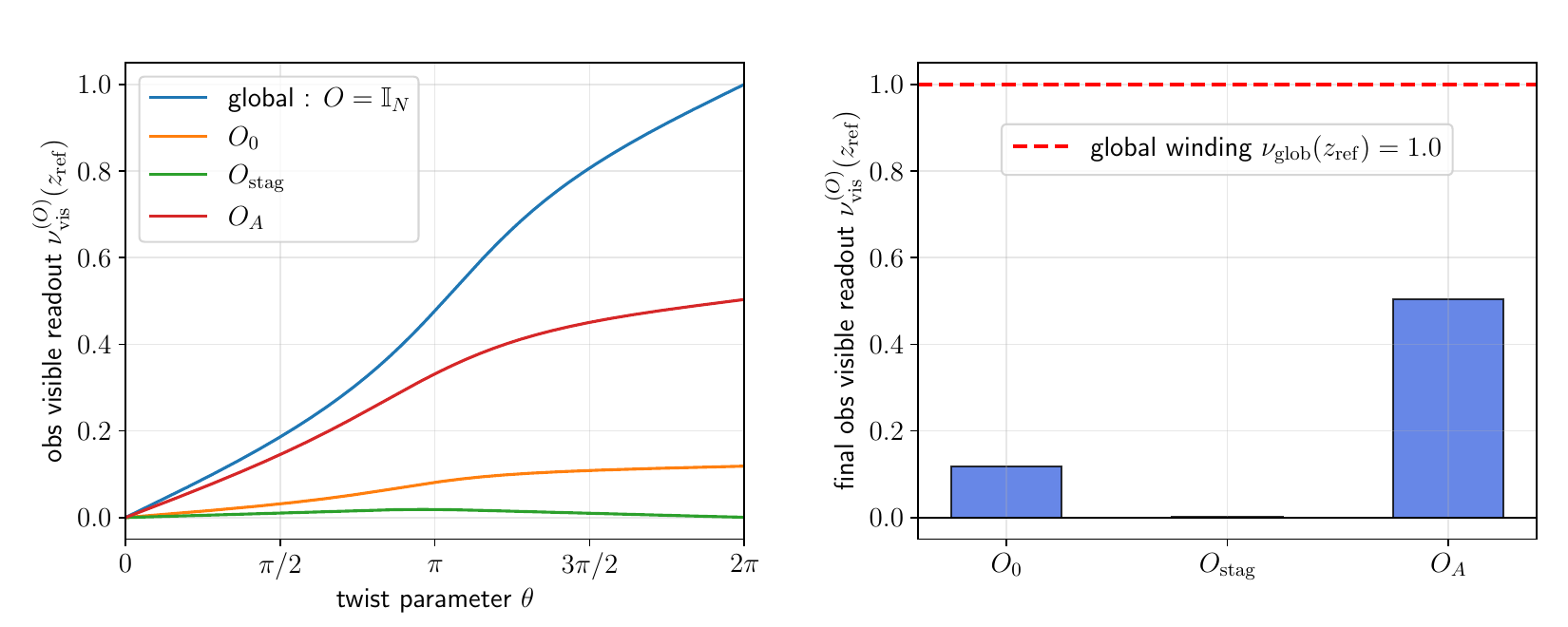}
\caption{Observable-visible readout of winding-related structure along the twist cycle $\theta:0\to2\pi$ under PBC for the point-gap reference $z_{\rm ref}=-0.2+1.0 i$. (Left) Trajectories of the cumulative phase $\frac{1}{2\pi} \Delta_\theta \arg h^{(O)}(z_{\rm ref})$ for the global channel, $O_0$, $O_{\rm stag}$, and $O_A$. (Right) The corresponding final winding readouts $\nu_{\rm vis}^{(O)}(z_{\rm ref})$, contrasting with the global integer invariant $\nu_{\rm glob}=1$ (dashed line). The results exhibit a sharp channel-selectivity, where global winding information is entirely suppressed in the staggered channel via destructive sublattice cancellation, while the collective probe $O_A$ successfully retains sufficient coherent weight to capture approximately half of the global winding.}
\label{fig:ex2_winding_readout}
\end{figure}

While the local response near the EP can be filtered, the global winding-related structure of the monodromy matrix along the full twist cycle $\theta:0\to 2\pi$ provides a more robust target for visibility analysis.
For this purpose, we choose the reference point inside the point-gap component, $z_{\rm ref}=-0.2+1.0i$, rather than the local reference-point choice used for the inherited EP-neighborhood loop.
We define the global winding accumulation $\nu_{\rm glob}(z_{\rm ref})$ and the observable-visible readout $\nu_{\rm vis}^{(O)}(z_{\rm ref})$ as
\be
\nu_{\rm glob}(z_{\rm ref}) :=   \frac{1}{2\pi}\,\Delta_\theta \arg h^{({\mathbb I}_N)}(z_{\rm ref}) \in {\mathbb Z}, \qquad \nu_{\rm vis}^{(O)}(z_{\rm ref}) &:=& \frac{1}{2\pi}\,\Delta_\theta \arg h^{(O)}(z_{\rm ref}) \in {\mathbb R}.
\label{eq:ex2_visible_winding} 
\ee
It is notable that whereas the global accumulation $\nu_{\rm glob}(z_{\rm ref})$ is strictly an integer, the visible readout $\nu_{\rm vis}^{(O)}(z_{\rm ref})$ is generally real-valued.
This fact arises because $\nu_{\rm vis}^{(O)}(z_{\rm ref})$ is an observable-dressed quantity governed by the fluctuating coefficients $c^{(O)}_j(\theta) = \langle \widetilde{u}_j(\theta) | O | u_j(\theta) \rangle$ in Eq.\eqref{eq:OTS_lam}, and thus, representing partial visibility rather than a strict topological invariant.
In Fig.~\ref{fig:ex2_winding_readout}, we compare the three observables, $O_0$, $O_{\rm stag}$, and $O_A$ in Eq.\eqref{eq:ex2_stag_obs}.
The results illustrate a sharp channel-selectivity. While the global readout exhibits a clear winding ($\nu_{\rm glob} = 1.0$), this spectral information is almost entirely suppressed in the staggered channel ($\nu_{\rm vis}^{(O_{\rm stag})} \approx 0.0006$) due to destructive interference between the sign-alternating sublattices. For the single-site channel $O_0$, the readout yields a non-universal fractional value ($\nu_{\rm vis}^{(O_0)} \approx 0.118$) reflecting localized amplitude fluctuations. 
In contrast, the collective channel $O_A$ yields a fractional readout $\nu_{\rm vis}^{(O_A)} \approx 0.503$. Since $O_A$ effectively projects onto half of the system's degrees of freedom, it captures roughly half of the global winding.
This demonstrates how the underlying spectral geometry is partially rendered in the visible object.
The slight deviation from an exact half-integer ($+0.003$) quantitatively reflects the observable dressing induced by the disorder ($W=0.2$).
Here the roles of the two reference points are different. For Fig.~\ref{fig:ex2_ep_visibility}, we used the representative local choice $z_{\rm ref}=0.6$ in order to see how the inherited EP-neighborhood response is filtered in specific observable channels. By contrast, for Fig.~\ref{fig:ex2_winding_readout} we choose $z_{\rm ref}$ inside the point-gap component of the full twist cycle, so that $\nu_{\rm glob}$ is well defined while the observable-visible readout remains channel dependent.

These observations support the following interpretation: in the finite-size NHFSSH chain, the relevant issue is not merely whether a winding-related spectral structure exists, but through which observable channel it becomes visible. Local or sign-alternating observables can undergo strong internal cancellation at the level of observable dressing. Conversely, suitable collective channels can retain sufficient coherent weight to recover a stable visible remnant of the same background spectral structure, sometimes only partially and sometimes more robustly, depending on the probe.
These results also show that the visibility of spectral features depends not only on the probe point $z_{\rm ref}$ but also strongly on the choice of the observable $O$. To quantify this constraint more systematically, we now turn to the algebraic dimension analysis, which clarifies how exact symmetries, finite-size degeneracies, and probe structure together shape the division between visible and invisible sectors.

\subsubsection{Observable-dimension growth under symmetry and disorder}
\label{sec:symmetry_signatureing_disorder}

We next examine how the observable dimension $D_{\rm obs}$ depends on the choice of observable and on the disorder profile in the finite NHFSSH chain.
In this subsection, $D_{\rm obs}$ is understood in the sampled sense introduced in Sec.~\ref{sec:reconstruct_Mex}: it is the rank of the sampled micromotion-expanded observable map, as in Eq.\eqref{eq:Dobs_K0}.
Here ${\cal K}_0$ is constructed by stacking the vectorized time-shifted observables defined in Eq.\eqref{eq:def_obs_dt_ext}.
Thus, $D_{\rm obs}$ should not be interpreted as the dimension of the full bicommutant algebra itself.
Rather, it measures how many independent operator directions are actually resolved by the finite set of sampled micromotion probes.
In general, it obeys the hierarchy discussed in Sec.~\ref{sec:reconstruct_Mex},
\be
D_{\rm obs}(Q) \leq \dim {\rm Span}\{\Xi\} \leq \dim({\frak O}_{\rm ext}^{\prime\prime}),
\nonumber
\ee
and the first inequality can be strict because of finite sampling, temporal aliasing, finite-size degeneracies, or probe-locality effects.

Throughout this subsection we work with the finite open chain (OBC) and set $\gamma=0$.
We compare three observables, $O_0$, $O_A$, and $O_{\rm break}$ in Eq.\eqref{eq:ex2_stag_obs}, and three disorder profiles:
the clean case $(V,W)=(0,0)$, the bond-disordered case $(V,W)=(0.2,0)$, and the on-site-disordered case $(V,W)=(0,0.2)$.
These three cases are chosen to separate three effects.
The clean chain retains strong finite-size structure and spatial regularity.
Bond disorder breaks this regularity while preserving the off-diagonal bipartite structure.
On-site disorder breaks the chiral structure itself.

In the clean finite chain, shown in Fig.~\ref{fig:ssh_signature_result}(a), the growth of $D_{\rm obs}(Q)$ is strongly limited for all three observables.
The local observable $O_0$ remains confined to a small visible subspace, reflecting the restricted spatial reach of a single-site probe.
The collective observable $O_A$ and the symmetry-breaking observable $O_{\rm break}$ access larger operator subspaces, but their growth is still substantially below the full observable dimension $D_{\rm obs}=N^2 = 100$.
This suppression should not be interpreted as a new symmetry law.
Rather, in the clean finite system, spatial regularity, finite-size degeneracies, and residual structure in the Floquet spectrum make the sampled micromotion orbit linearly dependent before it fills the algebraically allowed space.

The bond-disordered case is shown in Fig.~\ref{fig:ssh_signature_result}(b).
Bond disorder preserves the bipartite off-diagonal structure but lifts a substantial part of the accidental finite-size degeneracy and spatial regularity of the clean chain.
Consequently, the micromotion-expanded observable family spans a larger sampled operator space, especially for the collective probes $O_A$ and $O_{\rm break}$.
This behavior is consistent with the hierarchy in Eq.\eqref{eq:Dobs_span_bound}, i.e.
$D_{\rm obs}(Q) \leq \dim {\rm Span}\{\Xi\} \leq \dim({\frak O}_{\rm ext}^{\prime\prime})$.
The increase of $D_{\rm obs}$ should be viewed as disorder-induced lifting of linear dependencies in the sampled observable orbit, not as an automatic saturation of a symmetry-protected algebraic bound.

Fig.~\ref{fig:ssh_signature_result}(c) shows the on-site-disordered case.
Since on-site disorder breaks the chiral anti-commutation structure, one might expect it to remove symmetry-imposed restrictions.
However, the sampled value of $D_{\rm obs}(Q)$ is not determined by symmetry breaking alone.
It also depends on the chosen observable, the finite sampling of micromotion, temporal aliasing among Liouville-frequency components, and the observable dressing along the orbit.
For this reason, the on-site-disordered case need not produce a larger sampled rank than the bond-disordered case.
In the present finite-size setting, bond disorder can produce a larger observable dimension because it efficiently lifts clean-chain linear dependencies while still allowing the micromotion orbit of the chosen collective probes to explore many independent operator directions.

The main lesson of Figs.~\ref{fig:ssh_signature_result}(a)--\ref{fig:ssh_signature_result}(c) is therefore not that $D_{\rm obs}$ directly equals a symmetry-wall dimension, but rather that $D_{\rm obs}(Q)$ serves as a finite-sampling readout of algebraic visibility.
In principle, if one were to choose a completely random dense matrix as the initial observable, i.e., explicitly breaking all spatial and internal symmetries, the observable dimension would generically reach the theoretical maximum $D_{\rm obs} = N^2$ for sufficiently large $Q$. 
However, physically accessible observables in practical experiments, such as local densities ($O_0$) or sublattice polarizations ($O_A$), are highly structured and often partially overlap with the system's underlying symmetries. 
This structural overlap inherently traps the reconstructible operator space into a proper subalgebra, providing a concrete physical visualization of the strictly positive algebraic deficiency, $\Delta_{\rm obs} > 0$, formulated in Sec.~\ref{sec:prel_micro_symm}.
Beyond these exact algebraic upper bounds (Prop.~\ref{prop:symmetry_indicator}), the observed dimension further reflects finite-size degeneracies, probe locality, and disorder. 
Specifically, the persistently small $D_{\rm obs}$ for $O_0$ highlights the restriction imposed by local probing, while the larger values for $O_A$ and $O_{\rm break}$ demonstrate that collective probes can access broader visible sectors. 
Ultimately, in the finite NHFSSH chain, the observable dimension should be interpreted as a practical visibility measure controlled jointly by fundamental symmetries, disorder-induced lifting of linear dependence, and the structural choice of the observable.

\begin{figure}[t]
  \centering
  \begin{minipage}[t]{0.49\linewidth}
    \centering
    \includegraphics[width=\linewidth]{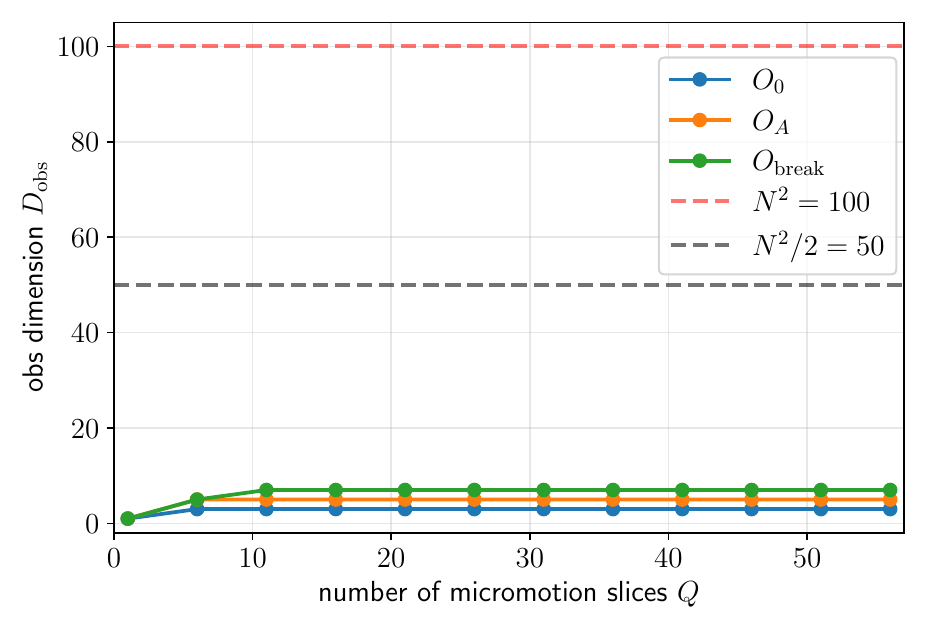}\\[-0.5em]
    \small (a) $V=W=0$
  \end{minipage}
  \hfill
  \begin{minipage}[t]{0.49\linewidth}
    \centering
    \includegraphics[width=\linewidth]{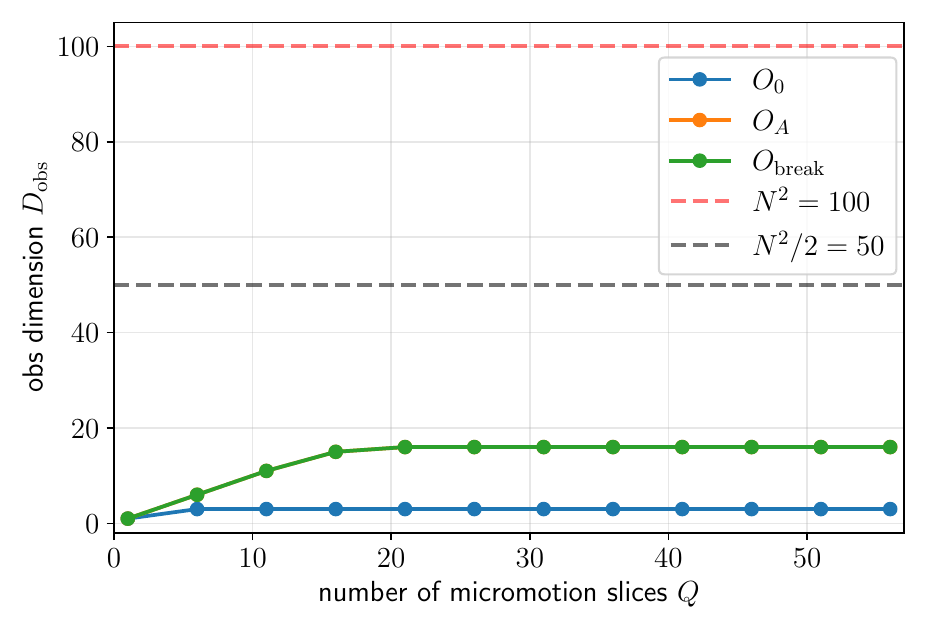}\\[-0.5em]
    \small (b) $V=0.2,\ W=0$
  \end{minipage}
  \vspace{0.8em}
  \begin{minipage}[t]{0.49\linewidth}
    \centering
    \includegraphics[width=\linewidth]{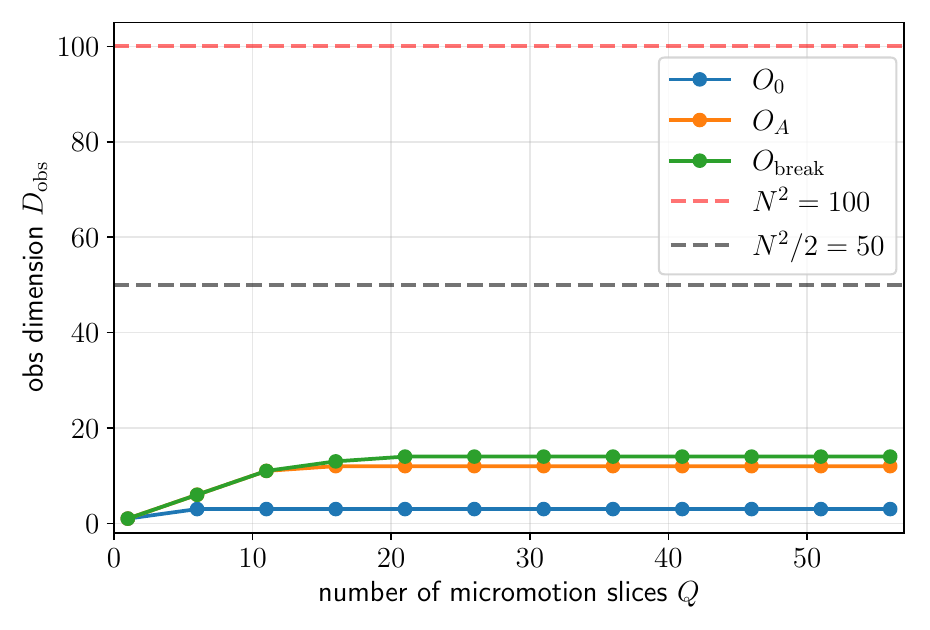}\\[-0.5em]
    \small (c) $V=0,\ W=0.2$
  \end{minipage}
  \caption{Sampled observable-dimension growth $D_{\rm obs}(Q)=\rank({\cal K}_0)$ across different disorder profiles in the open finite NHFSSH chain: (a) the clean system ($V=W=0$), (b) the bond-disordered configuration ($V=0.2, W=0$), and (c) the on-site-disordered configuration ($V=0, W=0.2$). The results highlight that while the single-site probe $O_0$ remains strictly bounded due to spatial locality, collective probes ($O_A, O_{\rm break}$) exploit disorder-induced lifting of accidental linear dependencies to map out significantly larger visible sectors of the operator state space.}
  \label{fig:ssh_signature_result}
\end{figure}

\section{Summary and Outlook} \label{sec:summary}

In this work, we formulated a framework of Floquet algebraic tomography for finite-dimensional monodromy matrices in $GL(N,\mathbb C)$, with an optional reduction to $SL(N,\mathbb C)$.
The starting point is that the observable trace sequence
\be
\zeta_n^{(O)}=\Tr(OM^n), \qquad n\in{\mathbb N}_0, \nn
\ee
is not an arbitrary discrete signal.
It is constrained by the characteristic data of the Floquet monodromy matrix $M$, and therefore the inverse problem is naturally a finite-dimensional algebraic reconstruction problem rather than a generic exponential fit.
The analytic side of the framework, introduced in Sec.~\ref{sec:CH_GLN_const}, is organized by the observable resolvent (ORS), observable spectral determinant (OSD), and observable Dirichlet spectral data (ODSD).
At the ORS level, the decomposition
\be
{\cal Z}^{(O)}(z)=\frac{{\cal N}^{(O)}(z)}{\Delta(z)}, \qquad \Delta(z)=\det({\mathbb I}_N-zM), \nn
\ee
makes the basic skeleton/dressing separation explicit: the denominator $\Delta(z)$ is common to all observables, while the numerator ${\cal N}^{(O)}(z)$ carries the observable-dependent dressing.

The reconstruction side of the framework was developed in Sec.~\ref{sec:GLNstructure}.
The Cayley--Hamilton recurrence and the associated Hankel matrices recover the effective recurrence order, characteristic coefficients, and finite-dimensional spectral skeleton from finite OTS data.
The same characteristic skeleton also constrains nonconsecutive sampling patterns through the arbitrary-stepsize Cayley--Hamilton extension.
In the multi-observable setting, the Hankel construction produces an equivalent realized monodromy matrix $M_{\rm rel}\sim M$, thereby recovering the similarity-invariant spectral skeleton under suitable visibility conditions.
We also introduced Liouville-space extensions, including the fundamental-type OTS $\zeta_n^{(O;\Omega)}=\Tr(OM^n\Omega)$ and the adjoint-type OTS $\zeta_n^{\rm adj(O;\Omega)}=\Tr(OM^n\Omega M^{-n})$, which connect the framework to MIMO realization, boundary/density-operator variation, and ratio-spectrum reconstruction.

A second structural aspect was clarified in Sec.~\ref{sec:prel_micro_symm}.
Even when finite OTS data determine a reconstructed object, the full monodromy matrix need not be identifiable from the accessible observable family.
If the observables generate only a proper algebra ${\frak O}\subsetneq \End({\cal H})$, Prop.~\ref{prop:Alg_rel_M} shows that the trace data determine a canonical visible representative
\be
M_{\rm vis}={\mathbb E}_{{\frak O}^{\prime\prime}}(M)\in{\frak O}^{\prime\prime}, \nn
\ee
together with a trace-invisible remainder.
Micromotion enlarges the observable algebra by replacing $O$ with the time-shifted family $O(t)=U(t,0)^{-1}OU(t,0)$, but Prop.~\ref{prop:symmetry_indicator} shows that exact commuting symmetries can still impose non-removable algebraic deficiencies.
In Sec.~\ref{sec:reconstruct_Mex}, we further described a stronger, calibrated reconstruction problem, in which a sufficiently rich observable family is explicitly known in a fixed physical basis, so that the similarity ambiguity can be fixed.

The two examples illustrate complementary aspects of this structure.
In Example~1, Sec.~\ref{sec:example_transmon}, the driven transmon qutrit provides a minimal $GL(3,\mathbb C)$ setting.
The single-observable reconstruction in Sec.~\ref{sec:alg_rec} verifies that the common spectral skeleton can already be recovered with high precision from the qubit-population OTS.
The ODSD analysis in Sec.~\ref{sec:phase_vis} then shows that the visible phase response is a filtered and cancellation-sensitive projection of the same spectral skeleton, rather than a universal proxy for the determinant-phase accumulation.
Finally, the observable-dimension analysis in Sec.~\ref{subsubsec:dtq3_rank_saturation} shows how coherent leakage is detected algebraically: the micromotion-expanded observable dimension $D_{\rm obs}(Q)=\rank{\cal K}_0(Q)$ remains one-dimensional in the isolated qubit limit, while leakage into the third level expands the sampled visible operator space to the full qutrit value $N^2=9$.
In Example~2, Sec.~\ref{sec:example_ssh}, the finite non-Hermitian Floquet SSH chain provides a more singular finite-size setting.
The EP-accessible extension analyzed in Sec.~\ref{sec:obs_topology_visibility} shows that branch-sensitive spectral geometry can be inherited from the Bloch-sector exceptional point, while the visible OSD response may be filtered or suppressed by observable dressing.
Along a full twist cycle, the same framework separates the global winding-related structure from the observable-visible readouts.
The observable-dimension analysis in Sec.~\ref{sec:symmetry_signatureing_disorder} then studies the sampled rank $D_{\rm obs}(Q)=\rank{\cal K}_0(Q)$ under clean, bond-disordered, and on-site-disordered conditions.
The resulting behavior should not be read as a direct saturation of a sharp symmetry-wall dimension.
Rather, it shows that sampled visibility is controlled jointly by exact symmetries, finite-size degeneracies, probe locality, temporal sampling, disorder-induced lifting of linear dependence, and observable choice.

Taken together, the present results show that Floquet algebraic tomography is not only a method for reconstructing finite-dimensional spectral data.
It also provides a language for organizing the distinction between common structure and observable visibility.
The common spectral skeleton, observable dressing, visible representative selected by the observable algebra, and residual shadow or blind-spot structure all appear naturally within the same framework.
The two examples realize complementary scenarios of restricted observability: leakage can enlarge the sampled visible operator space, while symmetry, locality, and finite-size structure can restrict or reshape the observable readout.

There are several natural directions for future work.
First, the present analysis is finite-dimensional and essentially noiseless at the structural level, so extending the framework to noisy and experimentally limited data will require robust rank selection, regularized reconstruction, and uncertainty quantification.
In practical applications, statistical fluctuations or experimental relaxation will inevitably blur the sharp rank-deficiency boundaries of the Hankel matrix, and a quantitative treatment of these effects is left for future work.
Second, the Liouville-space and adjoint-type formulations should be developed further for experimentally accessible expectation-value data, where the ratio spectrum is often more directly visible than the absolute spectrum.
Third, it would be valuable to apply the same skeleton/dressing and partial-realization viewpoint to larger non-Hermitian Floquet systems, where exceptional-point sensitivity, finite-size topology, and symmetry-constrained observability may coexist.
Finally, the present framework suggests a broader inverse-problem viewpoint on driven systems: rather than asking only whether a spectral or topological structure exists, one can ask through which observable algebra it becomes visible, which parts are reconstructed, and which parts survive only as shadow-level remnants.

\acknowledgments
S.~K is supported by JSPS KAKENHI Grant Nos.~22H05118 and 25K07298.

\appendix

\section{Technical details in the Hankel matrix analysis} \label{app:tech_hankel}
In this appendix, we supplement the algebraic and realization-theoretic methods formulated in Sec.~\ref{sec:GLNstructure} by detailing their reconstruction procedures and formal generalizations. 
First, App.~\ref{app:prony} reviews the classic Prony reconstruction procedure, which is utilized to systematically decouple the exact eigenvalues from the observable-dependent weights. 
Next, App.~\ref{app:arb_stepsize} provides the rigorous derivation of the arbitrary-stepsize Cayley--Hamilton interpolation, exploiting the algebraic structure of Schur polynomials to constrain nonconsecutive sampling patterns. 
Finally, App.~\ref{app:Mrel_proof} presents the definitive state-space factorization proof, establishing that the multi-observable block-Hankel realization yields a matrix $M_{\rm rel}$ that preserves the similarity class of the exact monodromy matrix.

\subsection{Prony reconstruction of the spectrum and observable coefficients} \label{app:prony}

In this part, we briefly explain how to reconstruct the spectrum $\{\lambda_j\}_{j=1}^N$ and the observable coefficients $\{c_j^{(O)}\}_{j=1}^N$ from the OTS once the rank $N$ and the characteristic coefficients $\{e_a\}_{a=1}^N$ have been identified. We restrict ourselves here to the diagonalizable case with simple spectrum.
The DPs case is obtained by grouping degenerate eigenvalues, while the EPs case requires the confluent Prony method and is not discussed here.

In order to recover the spectrum, we use the characteristic polynomial of $M$ in Eq.\eqref{eq:Delta_def}.
Equivalently, the eigenvalues $\lambda_j$ are the roots of the monic polynomial
\be
P(\lambda):=\lambda^N-e_1\lambda^{N-1}+e_2\lambda^{N-2}-\cdots+(-1)^N e_N.
\label{eq:charpoly_app}
\ee
Hence, once the coefficients $\{e_a\}_{a=1}^N$ are obtained from the CH recurrence, the spectrum is recovered by solving
\be
P(\lambda)=0.
\label{eq:charpoly_eq_app}
\ee
In the SS cases, the roots are not degenerated, i.e. $\lambda_{j_1}\neq \lambda_{j_2}$ for all $j_1\neq j_2$, and determine the spectrum of $M$, i.e. $\{ \lambda_j \}_j$, uniquely up to permutation.

Then, we consider recovery of the observable coefficients, $\{ c^{(O)}_j\}_j$, for the simple spectrum using Eq.\eqref{eq:OTS_lam}.
Once the eigenvalues $\{\lambda_j\}_{j=1}^N$ are known, the observable coefficients $\{c_j^{(O)}\}_{j=1}^N$ are determined from the first $N$ data points $\{\zeta_n^{(O)}\}_{n=0}^{N-1}$ by solving the Vandermonde system defined as
\be
\begin{pmatrix}
1 & 1 & \cdots & 1 \\
\lambda_1 & \lambda_2 & \cdots & \lambda_N \\
\lambda_1^2 & \lambda_2^2 & \cdots & \lambda_N^2 \\
\vdots & \vdots & \ddots & \vdots \\
\lambda_1^{N-1} & \lambda_2^{N-1} & \cdots & \lambda_N^{N-1}
\end{pmatrix}
\begin{pmatrix}
c_1^{(O)}\\
c_2^{(O)}\\
\vdots\\
c_N^{(O)}
\end{pmatrix}
=
\begin{pmatrix}
\zeta_0^{(O)}\\
\zeta_1^{(O)}\\
\vdots\\
\zeta_{N-1}^{(O)}
\end{pmatrix}. \label{eq:Vandermonde_app}
\ee
Since the eigenvalues are distinct, the Vandermonde matrix is invertible, and thus, the coefficients are uniquely determined.

In summary, in the SS case, the reconstruction proceeds in two steps:
\begin{enumerate}
\item determine $\{e_a\}_{a=1}^N$ from the CH recurrence,
\item solve Eq.\eqref{eq:charpoly_eq_app} for $\{\lambda_j\}_{j=1}^N$ and then solve Eq.\eqref{eq:Vandermonde_app} for $\{c_j^{(O)}\}_{j=1}^N$.
\end{enumerate}


This is precisely the classical Prony reconstruction in the present context.
The role of the annihilating polynomial is played by the characteristic polynomial $P(\lambda)$ in Eq.\eqref{eq:charpoly_app}, while the amplitudes are the observable coefficients, $\{ c_j^{(O)} \}_j$.
The essential difference from a purely phenomenological use of Prony's method is that, in our framework, the characteristic coefficients, $\{e_a\}_a$, are not introduced as abstract fitting parameters, but are identified with the characteristic data of the monodromy matrix itself.

For completeness, let us briefly comment on the non-SS cases.
For the DP cases, if $M$ is diagonalizable but the spectrum is degenerate, then the OTS still has the form
\be
\zeta_n^{(O)}=\sum_{j=1}^{K} \widetilde{c}_{j}^{(O)} \lambda_j^n, \qquad K < N,
\ee
where the coefficients $\widetilde{c}_{j}^{(O)}$ are sums over the degenerate eigenspaces.
In this case, the same reconstruction applies after replacing $N$ by the number
$K$ of distinct eigenvalues.
For the EP cases, if $M$ is non-diagonalizable, then the OTS contains polynomially dressed exponentials of the form in Eq.\eqref{eq:OTS_lam_eps}.
This structural property requires the confluent Prony method~\cite{WallNeuhauser1995FDM,MandelshtamTaylor1997HI}, or equivalently, a Jordan-block generalization of the multi-channel realization.
While the single-observable probe requires care due to these confluent polynomial structures, the full algebraic handling of the EP configuration—including its block-Hankel factorization and basis-resolved exact reconstruction—will be systematically developed and explicitly closed in Sec.~\ref{sec:multi_obs} and Sec.~\ref{sec:reconstruct_Mex}, respectively.

\subsection{Arbitrary-stepsize Cayley--Hamilton interpolation} \label{app:arb_stepsize}

In this appendix, we give the detailed formulas for the arbitrary-stepsize generalization of the CH recurrence used in Sec.~\ref{sec:MGF_HM}.

The ordinary CH recurrence for the OTS \eqref{eq:CH_bn_recurrence} consists of the one-step sequence
$\{\zeta^{(O)}_{n+N},\zeta^{(O)}_{n+N-1},\cdots,\zeta^{(O)}_{n}\}$.
Since only $N$ points in the OTS are independent, one may generalize Eq.\eqref{eq:CH_bn_recurrence} to arbitrary $(N+1)$ points
$\{\zeta^{(O)}_{n^{(0)}},\zeta^{(O)}_{n^{(1)}},\cdots,\zeta^{(O)}_{n^{(N)}}\}$
with
$n^{(0)}>n^{(1)}>\cdots>n^{(N)}\in{\mathbb N}_0$,
using the Schur polynomials.
The generalized recurrence is expressed by
\be
\zeta^{(O)}_{n^{(0)}} + \sum_{r=1}^N S^{(r)}_{\mu(\widetilde{\bf n})}(\{ e_a\}) \zeta^{(O)}_{n^{(r)}} = 0,
\qquad
S^{(r)}_{\mu(\widetilde{\bf n})}(\{ e_a\}) := (-1)^r \frac{s_{\mu^{(r)}(\widetilde{\bf n})}(\{ e_a\})}{s_{\mu(\widetilde{\bf n})}(\{ e_a\})},
\qquad
\widetilde{\bf n} = (n^{(0)},\cdots,n^{(N)}).
\label{eq:GLN_Schur_interp_app}
\ee
The relevant partitions are
\be
\mu(\widetilde{\bf n}) = (n^{(1)}-N+1,\ n^{(2)}-N+2,\ \cdots,\ n^{(N)}),
\label{eq:young_tab_mu}
\ee
and
\be
\mu^{(r)}(\widetilde{\bf n}) =
(n^{(0)}-N+1,\ n^{(1)}-N+2,\ \cdots,\ n^{(r-1)}-N+r,\ n^{(r+1)}-N+r+1,\ \cdots,\ n^{(N)}).
\label{eq:young_tab_muj}
\ee
The Schur polynomial $s_\mu(\{e_a\})$ is defined by the dual Jacobi--Trudi formula as
\be
s_{\mu(\widetilde{\bf n})}(\{ e_a\}) = \det_{1\le i,j\le \mu_1(\widetilde{\bf n})} \bigl(
e_{\mu_i^\prime(\widetilde{\bf n})-i+j} \bigr), \qquad e_0=1, \quad e_a=0 \ \ \text{for} \ \ a<0 \ \ \text{or} \ \ a>N,
\ee
where $\mu^\prime(\widetilde{\bf n})$ denotes the conjugate partition of $\mu(\widetilde{\bf n})$.
Notice that the $\widetilde{\bf n}$-dependence in $S^{(r)}_{\mu(\widetilde{\bf n})}(\{ e_a\})$ is relevant only to relative differences from $n^{(0)}$, i.e. $(n^{(r)}-n^{(0)})$ for $r\in\{1,\cdots,N\}$.

This gives the generalization in terms of arbitrary stepsize for the row of the Hankel matrix and the basis in Eq.\eqref{eq:def_HNEN} as
\be
\bm{\mathcal{H}}^{(O)}_{\{ \widetilde{\bf n}_j \}, N} :=
\begin{pmatrix}
\zeta^{(O)}_{n^{(0)}_0} & \zeta^{(O)}_{n^{(1)}_0} &\cdots&  \zeta^{(O)}_{n^{(N)}_0} \\
\zeta^{(O)}_{n^{(0)}_1} & \zeta^{(O)}_{n^{(1)}_1} &\cdots&  \zeta^{(O)}_{n^{(N)}_1} \\
\vdots & \vdots &\ddots& \vdots \\
\zeta^{(O)}_{n^{(0)}_N} & \zeta^{(O)}_{n^{(1)}_N} &\cdots&  \zeta^{(O)}_{n^{(N)}_N}
\end{pmatrix}
\in {\mathbb C}^{(N+1)\times (N+1)},
\qquad
{\bf S}_N(\{ e_a\}) :=
\begin{pmatrix}
1 \\
S^{(1)}_{\mu(\widetilde{\bf n})}(\{e_a\}) \\
\vdots \\
S^{(N)}_{\mu(\widetilde{\bf n})}(\{e_a\})
\end{pmatrix}
\in {\mathbb C}^{N+1},
\label{eq:def_HNEN_gen}
\ee
where $\widetilde{\bf n}^{(r)}=(n^{(r)}_0,\cdots,n^{(r)}_N)$ are constrained as
\be
n_j^{(0)}-n_j^{(r)}=k_r \in {\mathbb N},
\qquad
k_1<\cdots<k_N,
\qquad
n_{j_1}^{(0)}\neq n_{j_2}^{(0)}
\quad
\mbox{for all} \quad
j_1\neq j_2\in\{0,\cdots,N\}.
\label{eq:rel_dis_k}
\ee
The condition to find $N$ in Eq.\eqref{eq:cond_GLN} holds by replacing
$({\bf H}^{(O)}_{{\bf n},N},{\bf E}_N)$
with
$(\bm{\mathcal{H}}^{(O)}_{\{ \widetilde{\bf n}_j \},N},{\bf S}_N(\{e_a\}))$.

\subsection{Multi-observable realization and similarity} \label{app:Mrel_proof}

In this appendix, we give the detailed derivation of the multi-observable realization and the fact that the realized monodromy matrix is equivalent to the exact monodromy matrix, $M$, up to similarity transform.
Below, we assume the SS and EP cases.

We define a vector of the OTS with a fixed $n$ measured by $\{O_\ell\}_{\ell=1}^L$ as
\be
\bm{\zeta}^{({\bf O})}_n :=
\begin{pmatrix}
\zeta_n^{(O_1)} \\
\zeta_n^{(O_2)} \\
\vdots \\
\zeta_n^{(O_L)}
\end{pmatrix} \in{\mathbb C}^{L}, \qquad {\bf O}=\{O_1,\cdots,O_L\},
\ee
and generalize the Hankel matrix in Eq.\eqref{eq:def_HNEN} as
\be
\widetilde{\bf H}^{({\bf O})}_{{\bf n},N} := (\bm{\zeta}^{({\bf O})}_{{\bf n}+N},\bm{\zeta}^{({\bf O})}_{{\bf n}+N-1},\cdots,\bm{\zeta}^{({\bf O})}_{{\bf n}}) \in{\mathbb C}^{L(N+1)\times(N+1)}, \label{eq:HankelO_nk}
\ee
with
\be
\bm{\zeta}^{({\bf O})}_{{\bf n}+k} :=
\begin{pmatrix}
\bm{\zeta}^{({\bf O})}_{n_0+k} \\
\bm{\zeta}^{({\bf O})}_{n_1+k} \\
\vdots \\
\bm{\zeta}^{({\bf O})}_{n_N+k}
\end{pmatrix}
\in {\mathbb C}^{L(N+1)}. \label{eq:zetaO_nk}
\ee
This still satisfies
\be
\widetilde{\bf H}^{({\bf O})}_{{\bf n},N} \cdot {\bf E}_N = {\bf 0},
\ee
where ${\bf E}_N$ is the basis of the characteristic coefficients defined in Eq.\eqref{eq:def_HNEN}.
One can write down $\bm{\zeta}^{({\bf O})}_{{\bf n}+k}$ by the spectrum as
\be
\bm{\zeta}^{({\bf O})}_{{\bf n}+k} = \sum_{j=1}^{N}\lambda_j^k\,{\bf c}_j^{({\bf O})}\otimes \bm{\Lambda}_j^{\bf n}, \label{eq:zeta_lam_c_lam}
\ee
where
\be
{\bf c}_j^{({\bf O})} :=
\begin{pmatrix}
c_j^{(O_1)} \\
c_j^{(O_2)} \\
\vdots \\
c_j^{(O_L)}
\end{pmatrix} \in{\mathbb C}^{L}, \qquad \bm{\Lambda}_j^{\bf n} :=
\begin{pmatrix}
\lambda_j^{n_0} \\
\lambda_j^{n_1} \\
\vdots \\
\lambda_j^{n_N}
\end{pmatrix} \in{\mathbb C}^{N+1}.
\ee
It is worth noting the difference between a diagonalizable DP and a non-diagonalizable EP. 
While the expression in Eq.\eqref{eq:zeta_lam_c_lam} is perfectly valid for a DP, the presence of Jordan blocks at an EP introduces confluent Vandermonde matrices, modifying the expansion into a more convoluted sum\footnote{The corresponding form to Eq.\eqref{eq:zeta_lam_c_lam} for an EP is given by
\be
\bm{\zeta}^{({\bf O})}_{{\bf n}+k} = \sum_{j=1}^K \sum_{m=0}^{d_j-1} \sum_{s=0}^{m}
\begin{pmatrix}
  k \\
  s
\end{pmatrix}
\lambda_j^{k-s}\,{\bf c}_{j,m}^{({\bf O})}\otimes \bm{\Lambda}_{j,m-s}^{\bf n},
\ee
where
\be
{\bf c}_{j,m}^{({\bf O})} :=
\begin{pmatrix}
c_{j,m}^{(O_1)} \\
c_{j,m}^{(O_2)} \\
\vdots \\
c_{j,m}^{(O_L)}
\end{pmatrix} \in{\mathbb C}^{L}, \qquad \bm{\Lambda}_{j,r}^{\bf n} :=
\begin{pmatrix}
\binom{n_0}{r}\lambda_j^{n_0-r} \\
\binom{n_1}{r} \lambda_j^{n_1-r} \\
\vdots \\
\binom{n_N}{r} \lambda_j^{n_N-r}
\end{pmatrix} \in{\mathbb C}^{N+1}.
\ee
}.
However, remarkably, the subsequent algebraic framework of the realization theory and the proof of similarity do not rely on such explicit spectral decompositions. 
The realization argument below is most naturally formulated in coordinate-free state-space language and therefore does not require the explicit simple-spectrum expansion.

In order to find a realized monodromy matrix from the Hankel matrix $M_{\rm rel}$, we perform the singular value decomposition (SVD) to the Hankel matrix, which yields
\be
\widetilde{\bf H}^{({\bf O})}_{{\bf n},N}=U\Sigma V^\dagger,
\ee
with a rectangular diagonal matrix $\Sigma\in{\mathbb R}^{L(N+1)\times(N+1)}$ containing singular values,
$\sigma_1\ge \sigma_2\ge \cdots \ge \sigma_{N+1}=0$,
and unitary matrices $U\in{\mathbb C}^{L(N+1)\times L(N+1)}$ and $V\in{\mathbb C}^{(N+1)\times(N+1)}$.
By extracting the first $N$ parts from $\Sigma$, $U$, and $V$, and denoting them by $\Sigma_N={\rm diag}(\sigma_1,\cdots,\sigma_N)\in{\mathbb R}^{N\times N}$, $U_N\in{\mathbb C}^{L(N+1)\times N}$, and $V_N\in{\mathbb C}^{(N+1)\times N}$,
one obtains a shift matrix ${\cal M}\in{\mathbb C}^{N\times N}$ through
\be
\widetilde{\bf H}^{({\bf O})}_{{\bf n}+1,N} = U_N {\cal M} U_N^\dagger \widetilde{\bf H}^{({\bf O})}_{{\bf n},N} = U_N {\cal M} \Sigma_N V_N^\dagger.
\ee
Thus, one finds the realized monodromy matrix $M_{\rm rel}$ as
\be
M_{\rm rel} := \Sigma_N^{-1/2}{\cal M}\Sigma_N^{1/2} = \Sigma_N^{-1/2}U_N^\dagger \widetilde{\bf H}^{({\bf O})}_{{\bf n}+1,N} V_N\Sigma_N^{-1/2}.
\ee

Then, we show that the realized matrix $M_{\rm rel}$ is equivalent to the original monodromy matrix $M$ up to a similarity transformation.
According to linear realization theory\cite{HoKalman1966,Kailath1980}, the discrete sequence $\bm{\zeta}^{({\bf O})}_n$ can be factorized into a state-space representation.
Specifically, there exists a minimal realization given by a system matrix $A \in \mathbb{C}^{N \times N}$, an observation matrix $C \in \mathbb{C}^{L \times N}$, and an input vector $B \in \mathbb{C}^{N}$, such that
\be
\bm{\zeta}^{({\bf O})}_n = C A^n B. \label{eq:zeta_CAB}
\ee
For instance, in the SS case corresponding to Eq.\eqref{eq:zeta_lam_c_lam}, one can explicitly construct this by choosing $A = \mathrm{diag}(\lambda_1, \dots, \lambda_N)$, $B = (1, \dots, 1)^\top$, and $C = ({\bf c}_1^{({\bf O})}, \dots, {\bf c}_N^{({\bf O})})$.
In all cases, the minimal system matrix $A$ shares the exact same spectral skeleton as the original monodromy matrix, ensuring $A = W^{-1} M W$ for some non-singular matrix $W \in GL(N, \mathbb{C})$.

By defining the extended observability matrix ${\cal O}_{\bf n}$ and the controllability matrix ${\cal X}_N$ as
\be
{\cal O}_{\bf n} :=
\begin{pmatrix}
C A^{n_0} \\
C A^{n_1} \\
\vdots \\
C A^{n_N}
\end{pmatrix} \in{\mathbb C}^{L(N+1)\times N}, \qquad {\cal X}_N: = (A^N B, A^{N-1} B,\cdots, B)\in{\mathbb C}^{N\times(N+1)}, \label{eq:On_XN}
\ee
the Hankel matrix is exactly factored as
\be
\widetilde{\bf H}^{({\bf O})}_{{\bf n},N} = {\cal O}_{\bf n}{\cal X}_N \ \ ( = U_N\Sigma_N V_N^\dagger). \label{eq:H_OX}
\ee
Replacing ${\bf n}\to{\bf n}+1$ corresponds to shifting the sequence by one step, which yields
\be
\widetilde{\bf H}^{({\bf O})}_{{\bf n}+1,N} = {\cal O}_{\bf n} A {\cal X}_N.
\ee
Substituting this back into the definition of $M_{\rm rel}$, we obtain
\be
M_{\rm rel} = \Sigma_N^{-1/2}U_N^\dagger {\cal O}_{\bf n} A {\cal X}_N V_N\Sigma_N^{-1/2}.
\ee
By setting $U_N={\cal O}_{\bf n}R^{-1}$ and recalling Eq.\eqref{eq:H_OX}, the transformation matrix $R$ is uniquely determined as
\be
R=\Sigma_N V_N^\dagger {\cal X}_N^+ \in{\mathbb C}^{N\times N},
\ee
where ${\cal X}_N^+$ is the Moore-Penrose pseudo-inverse of ${\cal X}_N$, ${\cal X}_N^+={\cal X}_N^\dagger({\cal X}_N{\cal X}_N^\dagger)^{-1}$.
Then, the realized matrix can be algebraically simplified as
\be
M_{\rm rel} &=& \Sigma_N^{-1/2}({\cal O}_{\bf n}R^{-1})^\dagger {\cal O}_{\bf n} A {\cal X}_N V_N\Sigma_N^{-1/2} \nl
&=& \Sigma_N^{-1/2}(R^{-1})^\dagger R^\dagger R A {\cal X}_N V_N\Sigma_N^{-1/2}\qquad
(U_N^\dagger U_N={\mathbb I}_N \Leftrightarrow {\cal O}_{\bf n}^\dagger{\cal O}_{\bf n}=R^\dagger R) \nl
&=& \Sigma_N^{-1/2}R A R^{-1}\Sigma_N V_N^\dagger V_N\Sigma_N^{-1/2} \qquad \ \ \ \, ({\cal X}_N=R^{-1}\Sigma_N V_N^\dagger) \nl
&=& \Sigma_N^{-1/2}R A R^{-1}\Sigma_N^{1/2} = S^\prime A (S^\prime)^{-1}, \quad \ \  (S^\prime:=\Sigma_N^{-1/2}R) \label{eq:Mrel_SMSinv_app}
\ee
Since $M_{\rm rel}$ is similar to $A$, and $A$ is similar to the exact monodromy matrix $M$ ($A = W^{-1} M W$), it rigorously follows that
\be
M_{\rm rel} = S M S^{-1}, \qquad S := S^\prime W^{-1} \in GL(N, \mathbb{C}).
\ee
Therefore, the multi-observable Hankel realization guarantees that the reconstructed matrix $M_{\rm rel}$ belongs to the exact same similarity class as the original monodromy matrix $M$, preserving its similarity-invariant spectral skeleton.

The analysis described above also works for the arbitrary-stepsize OTS argued in App.~\ref{app:arb_stepsize}.
In this case, Eqs.\eqref{eq:HankelO_nk}\eqref{eq:zetaO_nk} are replaced with
\be
\widetilde{\bm{\mathcal{H}}}^{({\bf O})}_{\{ \widetilde{\bf n}_j \}, N} := (\bm{\zeta}^{({\bf O})}_{{\bf n}^{(0)}},\bm{\zeta}^{({\bf O})}_{{\bf n}^{(1)}},\cdots,\bm{\zeta}^{({\bf O})}_{{\bf n}^{(N)}}) \in {\mathbb C}^{L(N+1)\times(N+1)},
\ee
and
\be
\bm{\zeta}^{({\bf O})}_{{\bf n}^{(r)}} :=
\begin{pmatrix}
\bm{\zeta}^{({\bf O})}_{n_0^{(r)}} \\
\bm{\zeta}^{({\bf O})}_{n_1^{(r)}} \\
\vdots \\
\bm{\zeta}^{({\bf O})}_{n_N^{(r)}}
\end{pmatrix}
= \sum_{j=1}^{N} {\bf c}_j^{({\bf O})}\otimes \bm{\Lambda}_j^{{\bf n}^{(r)}} \in{\mathbb C}^{L(N+1)}. \label{eq:def_HNEN_multiO_gen}
\ee
By this modification, Eq.\eqref{eq:On_XN} changes as
\be
{\cal O}_{\bf n}
:=
\begin{pmatrix}
C^{({\bf O})} M^{n_0^{(0)}} \\
C^{({\bf O})} M^{n_1^{(0)}} \\
\vdots \\
C^{({\bf O})} M^{n_N^{(0)}}
\end{pmatrix}
\in{\mathbb C}^{L(N+1)\times N},
\qquad
{\cal X}_N:=({\bf x}_0,{\bf x}_{-k_1},\cdots,{\bf x}_{-k_N})\in{\mathbb C}^{N\times(N+1)},
\label{eq:On_XN_gen}
\ee
where $k_r$ is the relative distance from $n_j^{(0)}$ in Eq.\eqref{eq:rel_dis_k}, and one can in consequence derive the same result as Eq.\eqref{eq:Mrel_SMSinv_app} with a different specific form of $S$.

The rank-reduction for the DP cases can be immediately seen by constructing Eq.\eqref{eq:zeta_lam_c_lam}.
If $\lambda_d=\lambda_{d+p1}$, the sequence becomes
\be
\bm{\zeta}^{({\bf O})}_{{\bf n}+k} = \sum_{\substack{j=1\\j\neq d,d+1}}^N \lambda_j^k\,{\bf c}_j^{({\bf O})}\otimes \bm{\Lambda}_j^{\bf n} + \lambda_d^k\,({\bf c}_d^{({\bf O})}+{\bf c}_{d+1}^{({\bf O})}) \otimes \bm{\Lambda}_d^{\bf n}.
\ee
Since the eigenvectors are fixed by $M$, the choice of the observables ${\bf O}$ only alters the coefficient vectors. 
Consequently, the degenerate sector collapses into a single tensor product term $({\bf c}_d^{({\bf O})}+{\bf c}_{d+1}^{({\bf O})}) \otimes \bm{\Lambda}_d^{\bf n}$. 
Since it shares the identical basis $\bm{\Lambda}_d^{\bf n}$, this sector contributes exactly rank 1 to the Hankel matrix, regardless of the number of observables $L$. 
Therefore, the resulting realized monodromy matrix $M_{\rm rel}$ is still maximally rank $K$.

Notably, even if one does not find the specific form of the similarity transformation $S$, one can use $M_{\rm rel}$ to perform the standard spectral analysis by beginning with the OSD \eqref{eq:def_hz} with the unit observable, $O_{\rm rel}={\mathbb I}_N$.
In this case, the behavior of the spectrum can be directly read from $M_{\rm rel}$ without loss of information of $M$.

Finally, to complement the exact algebraic realization with a practical numerical diagnostic, we define the rank-truncation residual associated with the singular values $\sigma_k$ of the Hankel matrix. For an assumed effective rank $r$, this indicator is given by
\be
{\cal R}_{r}^{({\bf O})} := \frac{\sum_{k=r+1}^{K_{\max}} \sigma_k^2} {\sum_{k=1}^{K_{\max}} \sigma_k^2}, \label{eq:def_rank_tail_residual}
\ee
where $K_{\max}$ is the total number of singular values. Rather than signifying mere numerical truncation noise, a large residual tail serves as a realization-theoretic measure of the structural visibility limitations formulated in Sec.~\ref{sec:prel_micro_symm}. Comparing ${\cal R}_{r}^{({\bf O})}$ across different observable channels thus provides a compact tool to distinguish whether informational deficiencies stem from unstructured noise or from a genuine, channel-dependent compression of the common spectral skeleton.

\section{Proofs of Propositions} \label{app:proof_props}
In this appendix, we provide the complete proofs of Props.~\ref{prop:Alg_rel_M} and \ref{prop:symmetry_indicator} discussed in Sec.~\ref{sec:prel_micro_symm}. 
Since the physical setups under consideration are entirely finite-dimensional, we work throughout within the rigorous framework of finite-dimensional unital $*$-subalgebras of $\mathrm{End}(\mathcal{H})$ equipped with the Frobenius inner product,
\be
\langle A,B\rangle_F:=\Tr(A^{\dagger}B).
\ee
To make the operator-algebraic mechanics self-contained, we first establish auxiliary lemmas regarding canonical orthogonal projections and conditional expectations before proceeding to the individual proofs. 
Specifically, App.~\ref{app:proof_prop1} details the precise visible/invisible splitting of the monodromy matrix relative to the observable algebra, and App.~\ref{sec:symmetry_signature} completes the proof showing that exact commuting symmetries can enforce a non-vanishing algebraic deficiency under continuous micromotion extension.

\subsection{Proof of Prop.~\ref{prop:Alg_rel_M}} \label{app:proof_prop1}
We begin with two lemmas that make explicit the visible/invisible decomposition associated with the observable algebra.

\begin{lemma}[Finite-dimensional visible decomposition] \label{lem:finite_vis}
  Let ${\frak O} \subset {\rm End}({\cal H})$ be a finite-dimensional unital $*$-subalgebra.
  Then, there exists an orthogonal decomposition given by
\be
{\cal H} = \bigoplus_{j=1}^{J} \left( {\cal H}_j \otimes {\cal M}_j \right)
\ee
such that
\be
{\frak O} \cong \bigoplus_{j=1}^{J} \left( {\rm End}({\cal H}_j) \otimes \mathbb I_{m_j} \right),
\qquad
{\frak O}^{\prime} \cong \bigoplus_{j=1}^{J} \left( \mathbb I_{d_j} \otimes {\rm End}({\cal M}_j) \right),
\ee
where $d_j:=\dim {\cal H}_j$ and $m_j:=\dim {\cal M}_j$.
In particular, only the block-diagonal components contribute to traces against ${\frak O}$, and for any $X \in {\rm End}(\cal H)$ and any $\sigma\in {\frak O}$ one finds
\be
\Tr(\sigma X)=\sum_{j=1}^{J} \Tr_{{\cal H}_j} \left(\sigma_j  \widetilde{X}_j \right), \qquad \widetilde{X}_j:=\Tr_{{\cal M}_j}(\Pi_j X \Pi_j), \label{eq:appE_trace_reduction}
\ee
where $\Pi_j$ is the orthogonal projection onto ${\cal H}_j\otimes {\cal M}_j$.
\end{lemma}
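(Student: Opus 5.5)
The plan is the standard structure theory of finite-dimensional $C^*$-algebras (Wedderburn--Artin for semisimple $*$-closed algebras). The trace formula then follows by a direct block computation.

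\textbf{Step 1: semisimplicity and central projections.} Since ${\frak O}$ is $*$-closed and unital in ${\rm End}({\cal H})$ with the Hilbert--Schmidt structure, it is a finite-dimensional $C^*$-algebra. It is therefore semisimple: a nilpotent two-sided ideal $I$ would contain $x^\dagger x$ for every $x\in I$, and $x^\dagger x$ nilpotent and positive forces $x=0$. Its center ${\cal Z}({\frak O})$ is a commutative finite-dimensional $*$-algebra. We diagonalize it simultaneously to obtain minimal central projections $\Pi_1,\dots,\Pi_J$, which are orthogonal, self-adjoint, and satisfy $\sum_j\Pi_j={\mathbb I}_N$. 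This gives ${\cal H}=\bigoplus_j \Pi_j{\cal H}$ orthogonally and ${\frak O}=\bigoplus_j \Pi_j{\frak O}$, and each summand $\Pi_j{\frak O}$ is a simple $*$-algebra acting on $\Pi_j{\cal H}$.

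\textbf{Step 2: the tensor factorization.} Fix $j$. A simple finite-dimensional $*$-algebra is isomorphic to some ${\rm End}({\mathbb C}^{d_j})$. To realize this spatially, pick a minimal projection $p\in\Pi_j{\frak O}$ and set ${\cal M}_j:=p{\cal H}$. Choose matrix units $e_{ab}\in\Pi_j{\frak O}$ with $e_{11}=p$. The map $\xi_a\otimes\eta\mapsto e_{a1}\eta$ is then a unitary ${\mathbb C}^{d_j}\otimes{\cal M}_j\to\Pi_j{\cal H}$, since $e_{a1}^\dagger e_{b1}=\delta_{ab}p$ and $\sum_a e_{aa}=\Pi_j$. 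Under this unitary, $e_{ab}\mapsto |a\rangle\langle b|\otimes{\mathbb I}_{m_j}$, so ${\frak O}\cong\bigoplus_j{\rm End}({\cal H}_j)\otimes{\mathbb I}_{m_j}$ with ${\cal H}_j:={\mathbb C}^{d_j}$.

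\textbf{Step 3: the commutant.} Any $X\in{\frak O}'$ commutes with each $\Pi_j\in{\frak O}$, so $X$ is block diagonal. On each block it commutes with ${\rm End}({\cal H}_j)\otimes{\mathbb I}$, and the elementary fact $(\mathrm{End}(V)\otimes{\mathbb I})'={\mathbb I}\otimes\mathrm{End}(W)$ (proved by checking commutation with $|a\rangle\langle b|\otimes{\mathbb I}$ entrywise) yields the stated form of ${\frak O}'$.

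\textbf{Step 4: the trace formula.} Write $\sigma=\sum_j\Pi_j\sigma\Pi_j$ with $\Pi_j\sigma\Pi_j=\sigma_j\otimes{\mathbb I}_{m_j}$. Then
\[
\Tr(\sigma X)=\sum_j\Tr\bigl((\sigma_j\otimes{\mathbb I})\Pi_jX\Pi_j\bigr),
\]
where the off-diagonal blocks $\Pi_jX\Pi_k$ with $j\neq k$ drop out by cyclicity, because $\Pi_k\Pi_j=0$. For each $j$, the defining property of the partial trace, $\Tr\bigl((A\otimes{\mathbb I})Y\bigr)=\Tr_{{\cal H}_j}(A\,\Tr_{{\cal M}_j}Y)$, gives Eq.~\eqref{eq:appE_trace_reduction}.

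The main obstacle is Step 2, namely building the matrix units and checking that the induced map is unitary. The rest is routine bookkeeping. It can also be shortcut by citing the standard structure theorem for finite-dimensional $C^*$-algebras.
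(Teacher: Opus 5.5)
Your proposal is correct and follows the same route as the paper, which simply cites the Artin--Wedderburn structure theorem for finite-dimensional unital $*$-subalgebras and then does the same block computation. Your Steps 1--3 reconstruct that cited theorem via minimal central projections, self-adjoint matrix units ($e_{ab}^\dagger=e_{ba}$) and the commutant of ${\rm End}(V)\otimes{\mathbb I}$; in the computation the cross-terms $\Pi_j X\Pi_k$ with $j\neq k$ vanish because $\sigma$ is block diagonal, and the partial trace over ${\cal M}_j$ gives the stated trace formula.
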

\begin{proof}
  This is the standard finite-dimensional structure theorem for unital $*$-subalgebras (Artin--Wedderburn form, see, e.g., Ref.~\cite{BratteliRobinson}).
  In this decomposition, all $\sigma\in {\frak O}$ take the following form:
\be
\sigma = \bigoplus_{j=1}^{J} \left(\sigma_j \otimes {\mathbb I}_{m_j}\right).
\ee
By writing $X=\sum_{j,k=1}^J \Pi_j X \Pi_k$, one has
\be
\Tr(\sigma X)=\sum_{j,k=1}^J \Tr(\sigma \Pi_j X \Pi_k)=\sum_{j=1}^{J} \Tr(\sigma \Pi_j X \Pi_j),
\ee
because the cross-terms vanish by orthogonality.
Taking the partial trace over the multiplicity space gives Eq.\eqref{eq:appE_trace_reduction}.
\end{proof}

\begin{remark}
  Notice that the number of blocks $J$ and their dimensions $d_j$ and $m_j$ are entirely and a priori determined by the algebraic structure of ${\frak O}$.
  They do not depend on the dynamics of the system or the specific monodromy matrix $M$. Instead, they strictly represent how the chosen set of observables ${\bf O}$ algebraically partitions the Hilbert space.
  Concretely, in the basis adapted to this decomposition, the projection $\Pi_j$ and any observable $\sigma \in {\frak O}$ take the following explicit block-diagonal forms:
\be
\Pi_j = {\rm diag}\bigl( 0, \dots, 0, {\mathbb I}_{{\cal H}_j \otimes {\cal M}_j}, 0, \dots, 0 \bigr), \qquad \sigma = {\rm diag}\bigl(\sigma_1 \otimes {\mathbb I}_{m_1}, \dots, \sigma_J \otimes {\mathbb I}_{m_J}\bigr).
\ee
Thus, $\Pi_j$ acts as a block-selector, algebraically masking out all sectors except the $j$-th composite space.
\end{remark}

\begin{lemma}[Canonical bicommutant projection] \label{lem:can_bicom}
In the notation of Lem.~\ref{lem:finite_vis}, define
\be
   {\mathbb E}_{{\frak O}^{\prime\prime}}(X) := \bigoplus_{j=1}^{J} \left( 
   \Tr_{{\cal M}_j}(\Pi_j X \Pi_j) \otimes \frac{{\mathbb I}_{m_j}}{m_j} \right). \label{eq:appE_conditional_expectation}
\ee
Then, ${\mathbb E}_{{\frak O}^{\prime\prime}} : {\rm End}({\cal H})\to {\frak O}^{\prime\prime}$ is the Frobenius-orthogonal projection onto ${\frak O}^{\prime\prime}$.
In particular,
\be
\langle X - {\mathbb E}_{{\frak O}^{\prime\prime}}(X), Y \rangle_F = 0, \qquad \forall Y\in {\frak O}^{\prime\prime}. \label{eq:appE_projection_orthogonality}
\ee
\end{lemma}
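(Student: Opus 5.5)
We will prove Lem.~\ref{lem:can_bicom} by working entirely in the adapted decomposition of Lem.~\ref{lem:finite_vis}, in which ${\cal H}=\bigoplus_j({\cal H}_j\otimes{\cal M}_j)$ is an orthogonal sum. The orthogonality of the blocks matters, because it makes the Frobenius inner product split blockwise. Within each block it also factors across the tensor product, since we may choose the identification ${\cal H}_j\otimes{\cal M}_j\cong{\mathbb C}^{d_j}\otimes{\mathbb C}^{m_j}$ unitarily. By von Neumann's double commutant theorem, or directly from the stated form of ${\frak O}^\prime$, we have ${\frak O}^{\prime\prime}={\frak O}\cong\bigoplus_j{\rm End}({\cal H}_j)\otimes{\mathbb I}_{m_j}$. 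Every $Y\in{\frak O}^{\prime\prime}$ therefore has the form $Y=\bigoplus_j(Y_j\otimes{\mathbb I}_{m_j})$.

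The argument runs in three steps.

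\textbf{Step 1: range and idempotence.} By construction, ${\mathbb E}_{{\frak O}^{\prime\prime}}(X)$ is a direct sum of terms $\widetilde X_j\otimes{\mathbb I}_{m_j}/m_j$, so it lies in ${\frak O}^{\prime\prime}$. The map is clearly linear. For $Y=\bigoplus_j(Y_j\otimes{\mathbb I}_{m_j})$ we compute $\Pi_jY\Pi_j=Y_j\otimes{\mathbb I}_{m_j}$. Its partial trace over ${\cal M}_j$ is $m_jY_j$, so the normalization gives back $Y$. Hence ${\mathbb E}_{{\frak O}^{\prime\prime}}$ restricts to the identity on ${\frak O}^{\prime\prime}$ and is an idempotent onto it.

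\textbf{Step 2: orthogonality \eqref{eq:appE_projection_orthogonality}.} Take $Y\in{\frak O}^{\prime\prime}$ as above. Off-diagonal blocks $\Pi_jX\Pi_k$ with $j\neq k$ pair to zero against the block-diagonal $Y^\dagger$, exactly as in the proof of Lem.~\ref{lem:finite_vis}. This gives
\be
\langle Y,X\rangle_F=\sum_j\Tr\bigl((Y_j^\dagger\otimes{\mathbb I}_{m_j})\Pi_jX\Pi_j\bigr)=\sum_j\Tr_{{\cal H}_j}\bigl(Y_j^\dagger\widetilde X_j\bigr),\nn
\ee
using the defining property of the partial trace, $\Tr((A\otimes{\mathbb I})Z)=\Tr(A\,\Tr_{\cal M}Z)$. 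The same computation applied to ${\mathbb E}_{{\frak O}^{\prime\prime}}(X)$ gives $\sum_j\Tr_{{\cal H}_j}(Y_j^\dagger\widetilde X_j)\cdot m_j/m_j$, which is the identical value. Subtracting yields $\langle Y,X-{\mathbb E}_{{\frak O}^{\prime\prime}}(X)\rangle_F=0$, and complex conjugation gives the form stated in \eqref{eq:appE_projection_orthogonality}.

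\textbf{Step 3: conclusion.} A linear idempotent onto a subspace whose kernel is Frobenius-orthogonal to that subspace is the orthogonal projection, and such a projection is unique. Therefore ${\mathbb E}_{{\frak O}^{\prime\prime}}$ is the Frobenius-orthogonal projection onto ${\frak O}^{\prime\prime}$. As a side remark, ${\mathbb E}_{{\frak O}^{\prime\prime}}$ is also unital and ${\frak O}^{\prime\prime}$-bimodular, so it is a conditional expectation, matching the terminology of Prop.~\ref{prop:Alg_rel_M}.

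The only delicate point is bookkeeping of the identification in Lem.~\ref{lem:finite_vis}. The isomorphism $\cong$ must be implemented by a unitary, not merely an algebra isomorphism. Otherwise neither the partial-trace identity nor the blockwise splitting of $\langle\cdot,\cdot\rangle_F$ would be valid. We will state explicitly that the Artin--Wedderburn form for a $*$-subalgebra can be chosen unitarily, and Lem.~\ref{lem:finite_vis} already asserts an orthogonal decomposition. Once that is fixed, everything above is routine.
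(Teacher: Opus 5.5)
Your proposal is correct and follows essentially the same route as the paper: you write $Y\in{\frak O}^{\prime\prime}$ as $\bigoplus_j(Y_j\otimes{\mathbb I}_{m_j})$, use the partial-trace identity to show that $\Tr(Y^\dagger X)$ and $\Tr(Y^\dagger{\mathbb E}_{{\frak O}^{\prime\prime}}(X))$ coincide (the factor $m_j/m_j$ cancels), and combine this with ${\mathbb E}_{{\frak O}^{\prime\prime}}(X)\in{\frak O}^{\prime\prime}$. Your Step 1 idempotence check and your remark on unitary implementation are harmless additions, though idempotence is not needed: membership in ${\frak O}^{\prime\prime}$ together with Frobenius-orthogonality of $X-{\mathbb E}_{{\frak O}^{\prime\prime}}(X)$ to ${\frak O}^{\prime\prime}$ already characterizes the orthogonal projection.
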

\begin{proof}
Any $Y \in {\frak O}^{\prime\prime}$ can be expressed by
\be
Y = \bigoplus_{j=1}^{J} \left(Y_j \otimes {\mathbb I}_{m_j}\right).
\ee
By using the partial-trace identity, one finds
\be
\Tr \left(Y^{\dagger} X \right) =\sum_{j=1}^{J} \Tr_{{\cal H}_j} \left(Y_j^{\dagger}  \Tr_{{\cal M}_j}(\Pi_j X \Pi_j)\right).
\ee
On the other hand, Eq.\eqref{eq:appE_conditional_expectation} gives
\be
\Tr \left(Y^{\dagger} {\mathbb E}_{{\frak O}^{\prime\prime}}(X)\right) = \sum_{j=1}^{J} \Tr_{{\cal H}_j} \left(Y_j^{\dagger}  \Tr_{{\cal M}_j}(\Pi_j X \Pi_j)\right),
\ee
because $\Tr_{{\cal M}_j}(\mathbb I_{m_j})=m_j$.
Therefore,
\be
\Tr \left(Y^{\dagger}[X-{\mathbb E}_{{\frak O}^{\prime\prime}}(X)]\right) = 0, \qquad \forall Y\in {\frak O}^{\prime\prime},
\ee
which proves Eq.\eqref{eq:appE_projection_orthogonality}. Since $\mathbb E_{{\frak O}^{\prime\prime}}(X)\in {\frak O}^{\prime\prime}$ by construction, it is the Frobenius-orthogonal projection.
\end{proof}
\, \\
\noindent
Then, we prove Prop.~\ref{prop:Alg_rel_M}.
\begin{proof}[Proof of \ref{prop:p1-1}]
  By assumption, ${\frak O}\subsetneq {\rm End}({\cal H})$ is a proper unital $*$-subalgebra. By Lem.~\ref{lem:finite_vis}, the Hilbert space decomposes as
\be
{\cal H} = \bigoplus_{j=1}^{J} \left( {\cal H}_j \otimes {\cal M}_j \right),
\ee
with either multiple blocks ($J>1$) or a nontrivial multiplicity space ($m_j>1$ for some $j$).
The family of restricted OTSs depends only on the reduced block data appearing in Eq.\eqref{eq:appE_trace_reduction}, i.e., in general it cannot distinguish arbitrary operators acting inside multiplicity spaces or mixing sectors invisible to ${\frak O}$.
Hence, the full monodromy matrix $M$ is not identifiable from the family of restricted OTSs in general. 
\end{proof}
\begin{proof}[Proof of \ref{prop:p1-2}]
Applying Eq.\eqref{eq:appE_trace_reduction} to $X=M$, we obtain reduced block matrices as
\be
\widetilde{M}_j:=\Tr_{{\cal M}_j}(\Pi_j M \Pi_j),
\ee
such that
\be
\Tr(\sigma M)=\sum_{j=1}^{J} \Tr_{{\cal H}_j}(\sigma_j \widetilde{M}_j).
\label{eq:appE_M_trace_reduction}
\ee
Since $\sigma_j$ ranges over all of ${\rm End}({\cal H}_j)$, the non-degeneracy of the Frobenius pairing on ${\rm End}({\cal H}_j)$ implies that each reduced matrix $\widetilde{M}_j$ is uniquely determined by the family of restricted OTSs.
By defining
\be
M_{\rm vis}:= \bigoplus_{j=1}^{J} \left( \widetilde{M}_j \otimes \frac{{\mathbb I}_{m_j}}{m_j} \right), \label{eq:appE_Mvis_definition}
\ee
and comparison with Eq.\eqref{eq:appE_conditional_expectation}, one finds 
\be
M_{\rm vis} = {\mathbb E}_{{\frak O}^{\prime\prime}}(M).
\ee
Hence, $M_{\rm vis}\in {\frak O}^{\prime\prime}$ is the Frobenius-orthogonal projection of $M$ onto ${\frak O}^{\prime\prime}$ and is uniquely determined by the family of restricted OTSs.
\end{proof}
\begin{proof}[Proof of \ref{prop:p1-3}]
Set
\be
\Delta M:=M-M_{\rm vis}.
\ee
By Lem.~\ref{lem:can_bicom}, one has $\Delta M \perp_F {\frak O}^{\prime\prime}$.
Since ${\frak O}\subset {\frak O}^{\prime\prime}$, this implies
\be
\Tr(\sigma^{\dagger} \Delta M)=0, \qquad \forall \sigma \in {\frak O}.
\ee
Since ${\frak O}$ is a $*$-algebra, the map $\sigma \mapsto \sigma^\dagger$ is a bijection on ${\frak O}$.
Thus, one equivalently obtains
\be
\Tr(\sigma\, \Delta M)=0,  \qquad \forall \sigma \in {\frak O}.
\ee
In particular, applying this to the initial observables $O \in {\bf O}$, one finds $\Tr(O \Delta M)=0$. Thus, $\Delta M$ is strictly trace-invisible relative to the observable algebra. The same construction can be applied to each power $M^n$ separately, yielding the corresponding visible representative ${\mathbb E}_{{\frak O}^{\prime\prime}}(M^n)$ and invisible remainder $M^n-{\mathbb E}_{{\frak O}^{\prime\prime}}(M^n)$.
\end{proof}

\subsection{Proof of Prop.~\ref{prop:symmetry_indicator}} \label{sec:symmetry_signature}

We next prove Prop.~\ref{prop:symmetry_indicator}. The key point is that micromotion enlarges the observable algebra, but exact commuting symmetries survive this enlargement.
In order to make a proof for the proposition, let us consider the following lemma:
\begin{lemma}[Micromotion preserves commuting symmetries] \label{lem:micro_preserve}
Let
\be
O(t):=U(t,0)^{-1} O\, U(t,0).
\ee
If an operator ${\cal Q}$ satisfies
\be
[\mathcal Q,U(t,0)]=0 \quad \text{for all } t\in [0,T),
\qquad
[\mathcal Q,O]=0,
\ee
then,
\be
[\mathcal Q,O(t)]=0 \qquad \text{for all } t\in [0,T).
\label{eq:appE_micromotion_commutation}
\ee
Consequently, ${\cal Q}$ commutes with all elements in ${\frak O}_{\rm ext}$ and therefore belongs to ${\frak O}_{\rm ext}^{\prime}$.
\end{lemma}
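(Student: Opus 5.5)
The commutation in Eq.~\eqref{eq:appE_micromotion_commutation} follows by pushing ${\cal Q}$ through the three factors of $O(t)=U(t,0)^{-1}O\,U(t,0)$. The membership ${\cal Q}\in{\frak O}_{\rm ext}^{\prime}$ then follows by extending from generators to the whole algebra.

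First, since $U(t,0)\in GL(N,{\mathbb C})$ is invertible and $[{\cal Q},U(t,0)]=0$, multiplying ${\cal Q}U=U{\cal Q}$ by $U^{-1}$ on both sides gives $U^{-1}{\cal Q}={\cal Q}U^{-1}$, i.e. $[{\cal Q},U(t,0)^{-1}]=0$. Then, for every $t\in[0,T)$,
\be
{\cal Q}\,O(t) = U(t,0)^{-1}{\cal Q}\,O\,U(t,0) = U(t,0)^{-1}O\,{\cal Q}\,U(t,0) = U(t,0)^{-1}O\,U(t,0)\,{\cal Q} = O(t)\,{\cal Q}. \nn
\ee
This establishes Eq.~\eqref{eq:appE_micromotion_commutation}.

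Second, we pass to ${\frak O}_{\rm ext}$. The only delicate point is the adjoint generators $O(t)^\dagger$ in ${\cal G}_{\rm ext}$. For non-Hermitian dynamics $U(t,0)$ is not unitary, so $O(t)^\dagger=U(t,0)^\dagger O\,U(t,0)^{-\dagger}$ is in general not of the form $O(t)$. Commutation of ${\cal Q}$ with $O(t)^\dagger$ is equivalent to $[{\cal Q}^\dagger,O(t)]=0$, and this does not follow from the stated hypotheses alone.

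I would handle this as follows. Taking the adjoint of the hypotheses shows that ${\cal Q}^\dagger$ commutes with $O$ (which is Hermitian) and with $U(t,0)^\dagger$. To conclude that ${\cal Q}$ commutes with $O(t)^\dagger$, I would either assume ${\cal Q}$ is normal with $[{\cal Q}^\dagger,U(t,0)]=0$, or, more simply, impose that ${\cal Q}$ commutes with $U(t,0)^\dagger$ as well. Under that assumption, the same three-step computation applies verbatim to $O(t)^\dagger=U^\dagger O\,U^{-\dagger}$. This is the step I expect to be the real obstacle. I would state the extra hypothesis explicitly, or note that the symmetries ${\cal Q}_{{\cal P}^\prime}$ and ${\cal Q}_\chi$ of interest are handled case by case. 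Alternatively, ${\frak O}_{\rm ext}$ can be read as the algebra generated by $\{O(t)\}$ alone, with the $*$-closure omitted.

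Finally, once ${\cal Q}$ commutes with every generator $X\in{\cal G}_{\rm ext}$, linearity and the Leibniz rule
\be
[{\cal Q},X_{a_1}\cdots X_{a_m}]=\sum_{r=1}^m X_{a_1}\cdots[{\cal Q},X_{a_r}]\cdots X_{a_m}=0 \nn
\ee
show that ${\cal Q}$ commutes with every element of the form in Eq.~\eqref{eq:Oext_alg}. It trivially commutes with $c_0{\mathbb I}_N$. Hence ${\cal Q}\in{\frak O}_{\rm ext}^{\prime}$.
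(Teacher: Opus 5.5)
Your computation of $[{\cal Q},O(t)]=0$ is the paper's proof. The paper uses $[{\cal Q},U(t,0)^{-1}]=0$ without comment, and you justify it. The paper stops after this computation and never argues the ``Consequently'' clause. So your Leibniz-rule extension and your treatment of the generators $O(t)^\dagger\in{\cal G}_{\rm ext}$ go beyond it.

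Your suspicion is correct: for non-normal ${\cal Q}$ that clause is actually false. On ${\cal H}={\mathbb C}^2\otimes{\mathbb C}^2$ take $F=\begin{pmatrix}1&1\\0&0\end{pmatrix}$, so $F^2=F$ and $[F,F^\dagger]\neq0$. Set ${\cal Q}={\mathbb I}_2\otimes F$ and $O=\sigma_z\otimes{\mathbb I}_2$, with Hamiltonian $\widehat H=\sigma_x\otimes F$ (time-independent, hence $T$-periodic for any $T$). Then $[{\cal Q},\widehat H]=[{\cal Q},O]=0$ and $U(t,0)={\mathbb I}_2\otimes({\mathbb I}_2-F)+e^{-it\sigma_x}\otimes F$, which gives
\[ O(t)=\sigma_z\otimes{\mathbb I}_2+\bigl(\sigma_z(t)-\sigma_z\bigr)\otimes F, \qquad \sigma_z(t):=\cos 2t\,\sigma_z+\sin 2t\,\sigma_y . \]
Hence $[{\cal Q},O(t)^\dagger]=\bigl(\sigma_z(t)-\sigma_z\bigr)\otimes[F,F^\dagger]\neq0$ for every $t\notin\pi{\mathbb Z}$.

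In this example the $*$-algebra ${\frak O}_{\rm ext}$ is all of $\End({\cal H})$. It contains $\sigma_z\otimes{\mathbb I}_2$, $\sigma_z\otimes F$, $\sigma_y\otimes F$ and their adjoints, which generate everything. So (P2-2) and (P2-3) fail as well. The case is not academic: ${\cal Q}_{{\cal P}^\prime}={\cal S}{\cal P}{\cal S}^{-1}$ is non-normal for $h\neq0$.

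Three refinements of your repair:
\begin{enumerate}
\item For normal ${\cal Q}$ nothing needs to be added. In finite dimension ${\cal Q}^\dagger=p({\cal Q})$ for a polynomial $p$ interpolating $\mu\mapsto\bar\mu$ on the spectrum. Your first step then gives $[{\cal Q}^\dagger,O(t)]=0$, and its adjoint is $[{\cal Q},O(t)^\dagger]=0$. So the condition $[{\cal Q}^\dagger,U(t,0)]=0$ in your first option is automatic (Fuglede).
\item For general ${\cal Q}$, your second option $[{\cal Q},U(t,0)^\dagger]=0$ is a sufficient repair; equivalently, ${\cal Q}^\dagger$ is also a symmetry of the dynamics. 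Some condition on ${\cal Q}^\dagger$ is unavoidable, because the commutant of a $*$-algebra is $*$-closed. Thus ${\cal Q}\in{\frak O}_{\rm ext}^{\prime}$ already forces $[{\cal Q}^\dagger,O(t)]=0$ for all $t$.
\item Dropping the $*$-closure makes the lemma true, but it changes the object. The Wedderburn form of Lemmas~\ref{lem:finite_vis} and~\ref{lem:can_bicom} and the double commutant theorem require a $*$-algebra.
\end{enumerate}
Finally, ${\cal Q}_\chi=\Sigma_z$ is Hermitian, so the adjoint issue does not arise for it. However, it anticommutes with $\widehat H_{1,2}$, so it does not satisfy $[{\cal Q},U(t,0)]=0$ and lies outside the lemma anyway.
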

\begin{proof}
Using the commutation of ${\cal Q}$ with $U(t,0)$ and $O$, we compute
\be
\mathcal Q O(t) =\mathcal Q U(t,0)^{-1} O\, U(t,0) =U(t,0)^{-1} \mathcal Q O\, U(t,0) =U(t,0)^{-1} O\, \mathcal Q U(t,0) =O(t)\mathcal Q.
\ee
Thus, Eq.\eqref{eq:appE_micromotion_commutation} holds. 
\end{proof}
\, \\ \noindent
Then, we prove Prop.~\ref{prop:symmetry_indicator}.
\begin{proof}[Proof of \ref{prop:p2-1}]
  By construction, the extended family of OTSs can depend only on the component of the dynamics visible inside the algebra generated by the time-shifted observables.
Therefore, the number of linearly independent reconstructible components cannot exceed the dimension of the corresponding bicommutant,
\be
D_{\rm obs}\le \dim({\frak O}_{\rm ext}^{\prime\prime})\le N^2,
\ee
which proves the bound.
\end{proof}
\begin{proof}[Proof of \ref{prop:p2-2}]
  This corresponds to Lem.~\ref{lem:micro_preserve}.
  In particular, any nontrivial symmetry ${\cal Q}$ commuting with both the dynamics and the initial observable, $O$, survives the micromotion shift and belongs to ${\frak O}_{\rm ext}^{\prime}$.
\end{proof}
\begin{proof}[Proof of \ref{prop:p2-3}]
  Since ${\cal Q}\in {\frak O}_{\rm ext}^{\prime}$ and ${\cal Q}$ is non-scalar, the commutant ${\frak O}_{\rm ext}^{\prime}$ is strictly larger than the scalar center $\{c\,\mathbb I_N\}_{c\in\mathbb C}$.
  In finite dimensional cases, the full matrix algebra ${\rm End}(\mathcal H)$ has only scalar commutant.
  Therefore, ${\frak O}_{\rm ext}^{\prime\prime}$ cannot coincide with $\mathrm{End}(\mathcal H)$ and has to be a proper subspace as
\be
{\frak O}_{\rm ext}^{\prime\prime}\subsetneq {\rm End}(\mathcal H).
\ee
Hence,
\be
\Delta_{\rm obs}:=N^2-\dim({\frak O}_{\rm ext}^{\prime\prime})>0.
\ee
Therefore, an exact commuting symmetry enforces a positive algebraic deficiency even after micromotion extension.
\end{proof}

\bibliographystyle{utphys}
\bibliography{floquet_tomography} 

\end{document}